\documentclass[12pt]{article}

\RequirePackage[OT1]{fontenc}
\usepackage{amsthm,amsmath,amssymb,amsfonts,bm,enumerate,xcolor,graphicx,natbib,color,url}
\RequirePackage[colorlinks,citecolor=blue,urlcolor=blue]{hyperref}

\usepackage{threeparttable}





\def\frac#1#2{{\textstyle{#1\over#2}}}

\DeclareSymbolFont{AMSb}{U}{msb}{m}{n}
\DeclareMathSymbol{\Natural}{\mathbin}{AMSb}{"4E}
\DeclareMathSymbol{\Integer}{\mathbin}{AMSb}{"5A}
\DeclareMathSymbol{\Real}{\mathbin}{AMSb}{"52}
\DeclareMathSymbol{\Rational}{\mathbin}{AMSb}{"51}
\DeclareMathSymbol{\Imaginary}{\mathbin}{AMSb}{"49}
\DeclareMathSymbol{\Complex}{\mathbin}{AMSb}{"43} 
\DeclareMathSymbol{\Disk}{\mathbin}{AMSb}{"44} 
\def\bi{\begin{itemize}}
\def\ei{\end{itemize}}
\def\bd{\begin{description}}
\def\ed{\end{description}}
\def\ben{\begin{enumerate}}
\def\een{\end{enumerate}}

\def\calD{{\mathcal D}}

\def\calN{{\mathcal N}}

\def\calP{{\mathcal{P}}}

\def\calS{{\mathcal{S}}}
\def\calI{{\mathcal{I}}}

\def\hat#1{{\widehat{#1}}}




\def\pr{{\rm Pr}}
\def\Pr{\pr}
\def\E{{\rm E}}

\def\var{{\rm var}}

\def\Dto{{\ {\buildrel D\over \longrightarrow}\ }}

\def\2to{{\ {\buildrel 2\over \longrightarrow}\ }}

\def\Deq{{\ {\buildrel D\over =}\ }}

\def\I1ton{{$I_1,\ldots,I_n$}}
\def\X1ton{{$X_1,\ldots,X_n$}}
\def\Y1ton{{$Y_1,\ldots,Y_n$}}
\def\Z1ton{{$Z_1,\ldots,Z_n$}}
\def\R1ton{{$R_1,\ldots,R_n$}}
\def\e1ton{{$e_1,\ldots,e_n$}}
\def\t1ton{{$t_1,\ldots,t_n$}}
\def\x1ton{{$x_1,\ldots,x_n$}}
\def\y1ton{{$y_1,\ldots,y_n$}}
\def\z1ton{{$z_1,\ldots,z_n$}}




%
%
%
%
%
%
%

\def\calP{{\mathcal{P}}}
\def\calS{{\mathcal{S}}}
\def\calI{{\mathcal{I}}}

%





%
%
%
%
%
%
%
%
%

%

\usepackage[english]{babel}
\usepackage{amsfonts}
\usepackage{amssymb}
\usepackage{graphicx}
\usepackage{enumerate}

\usepackage{placeins}
\newcommand*{\thead}[1]{\multicolumn{1}{c}{\bfseries #1}}

\setlength{\oddsidemargin}{-0.125in} \setlength{\topmargin}{-0.5in}
\setlength{\textwidth}{6.5in} \setlength{\textheight}{9in}
\setlength{\textheight}{9in} \setlength{\textwidth}{6.5in}
\setlength{\topmargin}{-36pt} \setlength{\oddsidemargin}{0pt}
\setlength{\evensidemargin}{0pt} \tolerance=500

\pdfminorversion=4


\usepackage{ulem}


\def\calS{{\mathcal{S}}}
\def\pr{{\rm Pr}}
\def\Pr{\pr}
\def\E{{\rm E}}
\def\calP{{\mathcal{P}}}
\def\calD{{\mathcal D}}

\def\Deq{{\ {\buildrel D\over =}\ }}

\def\calN{{\mathcal N}}
\def\calI{{\mathcal{I}}}
\def\Dto{{\ {\buildrel D\over \longrightarrow}\ }}
\def\var{{\rm var}}

\newcommand{\blind}{1}

\newtheorem{proposition}{Proposition}[section]

\begin{document}
\thispagestyle{empty}
\baselineskip=28pt
\vskip 5mm
\begin{center} {\Large{\bf Max-infinitely divisible models and inference for spatial extremes}}
\end{center}

\baselineskip=12pt


\vskip 5mm

\if1\blind
{
\begin{center}
\large
Rapha\"el Huser$^1$, Thomas Opitz$^2$ and Emeric Thibaud$^3$
\end{center}

\footnotetext[1]{
\baselineskip=10pt Computer, Electrical and Mathematical Sciences and Engineering (CEMSE) Division, King Abdullah University of Science and Technology (KAUST), Thuwal 23955-6900, Saudi Arabia. E-mail: raphael.huser@kaust.edu.sa}
\footnotetext[2]{
\baselineskip=10pt  INRAE, UR546 Biostatistics and Spatial Processes, 228, Route de l'A\'erodrome, CS 40509, 84914 Avignon, France. E-mail: thomas.opitz@inrae.fr}
\footnotetext[3]{
\baselineskip=10pt Ecole Polytechnique F\'ed\'erale de Lausanne, EPFL-FSB-MATHAA-STAT, Station 8, 1015 Lausanne, Switzerland. E-mail: emeric.thibaud@epfl.ch}
} \fi

\baselineskip=17pt
\vskip 4mm
\centerline{\today}
\vskip 6mm

\begin{center}
{\large{\bf Abstract}}
\end{center}
For many environmental processes, recent studies have shown that the dependence strength is decreasing when quantile levels increase. This implies that the popular max-stable models are inadequate to capture the rate of joint tail decay, and to estimate joint extremal probabilities beyond observed levels. We here develop a more flexible modeling framework based on the class of max-infinitely divisible processes, which extend max-stable processes while retaining dependence properties that are natural for maxima. We propose two parametric constructions for max-infinitely divisible models, which relax the max-stability property but remain close to some popular max-stable models obtained as special cases. The first model considers maxima over a finite, random number of independent observations, while the second model generalizes the spectral representation of max-stable processes. Inference is performed using a pairwise likelihood. We illustrate the benefits of our new modeling framework on Dutch wind gust maxima calculated over different time units. Results strongly suggest that our proposed models outperform other natural models, such as the Student-t copula process and its max-stable limit, even for large block sizes.
\baselineskip=16pt

\par\vfill\noindent
{\bf Keywords:} Asymptotic dependence and independence; Block maximum; Extreme event; Max-infinitely divisible process; Max-stable process; Sub-asymptotic modeling; Wind speed.\\

%
%
%
%
%


\newpage
\baselineskip=20pt

\section{Introduction}
Spatial extreme-value analysis for georeferenced observations in environmental or climatological studies aims at producing statistical inferences and predictions for events in the tail of the distribution, potentially much more extreme than the observed events, while appropriately taking into account their spatial dependence structure \citep[see, e.g.,][]{Reich.Shaby:2012a,Huser.Davison.2014}. In this context, max-stable processes have emerged as useful models for spatial extremes \citep{Davison.etal.2012,Davison.etal:2019}. Thanks to their asymptotic characterization for spatial block maxima, max-stable processes are usually fitted to maxima observed over temporal blocks at a given set of locations. In practice, the choice of a suitable block size depends on practical aspects, such as the presence of natural cycles (e.g.,  seasonal components), the frequency of observations and the temporal dependence strength. Statistically, this implies a bias-variance trade-off: larger blocks typically yield a better representation of the data's tail properties, but also mean that fewer maxima are available for {inference}, thus inflating the estimation uncertainty. Therefore, we always need to choose a finite, and often relatively small, block size, which casts doubts on the validity of the max-stability assumption in practice. A fast growing body of empirical studies on environmental extremes in the literature has indeed revealed that the max-stability assumption arising asymptotically is often violated at finite levels \citep{Bopp.al.2020}, and that the spatial dependence strength is often weakening as events become more extreme \citep[see, e.g.,][]{Huser.etal:2017,Tawn.al.2018,Huser.Wadsworth:2019,Bacro.al.2019,CastroCamilo.Huser:2019,CastroCamilo.al.2020}. In particular, when data are asymptotically independent, maxima become ultimately independent at the highest levels, requiring specialized models capturing the rate of decay towards independence.

To illustrate this phenomenon with real data, consider the following intuitive dependence summary. Let $\bm Z=(Z_1,\ldots,Z_D)^{\rm T}$ be a $D$-dimension random vector with joint distribution $F$, assumed to be continuous for simplicity, with marginal distributions $F_1,\ldots,F_D$. We define the level-dependent coefficient $\theta_D(u)$, for probability level $u\in(0,1)$, as
\begin{equation}\label{eq:depcoef}
\theta_D(u) = { { \log\left[F\left\{F_1^{-1}(u),\ldots, F_D^{-1}(u)\right\}\right]  }\over {  \log (u)  } },\quad u\in (0,1).
\end{equation}
Notice that the underlying \emph{copula} structure defined as $C(u_1,\ldots,u_D):=F\{F_1^{-1}(u_1),\ldots, F_D^{-1}(u_D)\}$ (with margins transformed to the uniform ${\rm Unif}(0,1)$ scale) satisfies the relation $C(u,\ldots,u) = u^{\theta_D(u)}$, $u\in[0,1]$, so that $\theta_D(u)$ can be interpreted as the effective number of independent variables \emph{at the level $u$}. When the variables $Z_1,\ldots,Z_D$ are perfectly dependent, we get $\theta_D(u)=1$; when they are positive quadrant-dependent, we get $\theta_D(u)\in[1,D]$; when they are independent, we get $\theta_D(u)=D$; and when they are negative quadrant-dependent, we get $\theta_D(u)\geq D$. The coefficient \eqref{eq:depcoef} is a copula-based measure of association, and its limit as $u\to0$ was introduced in \citet{Hua.Joe:2011} as the lower tail order. Moreover, when $F$ is a max-stable distribution, $\theta_D(u)$ is constant in $u$ and lies within the interval $[1,D]$, and it is known as the \emph{extremal coefficient} \citep{Davison.Huser:2015,Strokorb.al.2015}. In the following, we will therefore refer to \eqref{eq:depcoef} as the \emph{level-dependent extremal coefficient}. To illustrate the limitations of max-stable models in a sub-asymptotic setting, Figure~\ref{fig:extcoefWinds} shows an empirical estimate of $\theta_D(u)$ in \eqref{eq:depcoef}, plotted with respect to $u\in(0,1)$, for the Dutch wind speed dataset analyzed in Section \ref{sec:application}, here for $D=30$ monitoring stations, calculated over daily, weekly and monthly blocks. Figure~\ref{fig:extcoefWinds}  reveals that $\theta_D(u)$ is increasing rather than being constant with respect to $u$, especially near the upper tail, and that $\theta_D(u)$ is larger for larger block sizes. This contradicts max-stability for these block sizes, and suggests that the dependence strength weakens as events become more extreme. More flexible models are needed to adequately capture this data feature.

\begin{figure}
	\centering
	\includegraphics[width=0.7\linewidth]{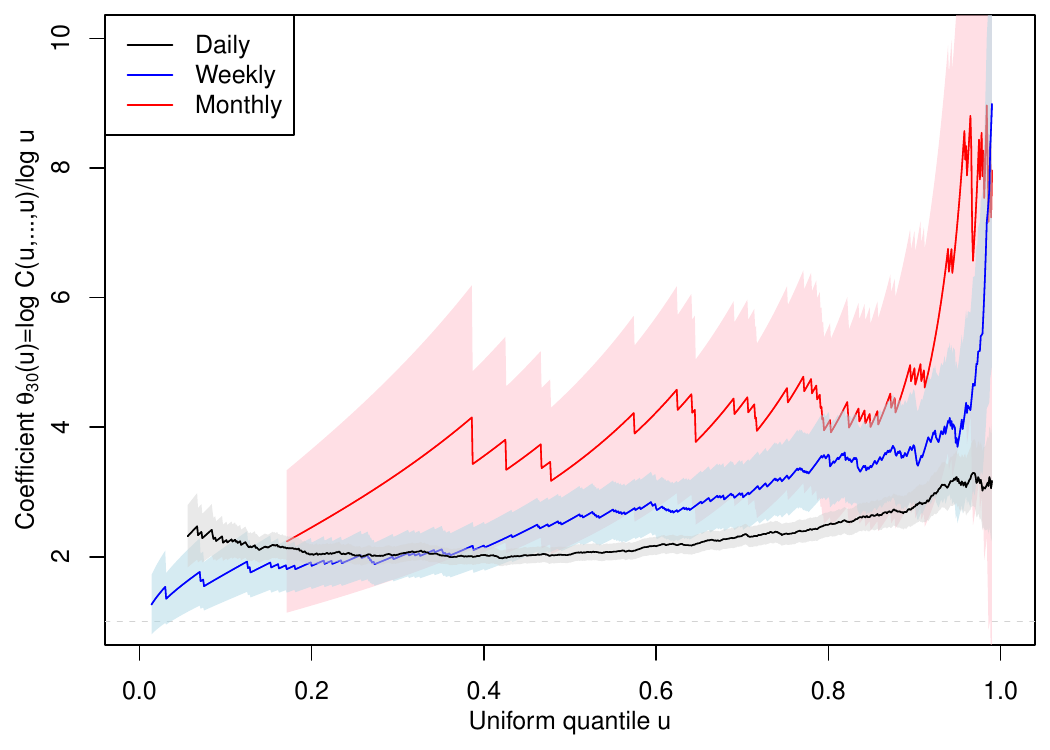}
	\caption{{Empirical level-dependent extremal coefficient $\hat\theta_D^{\rm emp}(u)$} (solid curves) for the Dutch wind speed data analyzed in the Application section~\ref{sec:application}, plotted as a function of the {uniform quantile $u$}, based on daily (black), weekly (blue) and monthly (red) maxima. Shaded areas are $95\%$-confidence intervals based on the delta method, ignoring weak temporal dependence.}
	\label{fig:extcoefWinds}
\end{figure}

In this paper, we suggest using max-infinitely divisible (max-id) processes as models for capturing dependence in finite block maxima. Max-id models play an important role in the limit theory of triangular arrays of random vectors \citep{Balkema.al.1993} and naturally extend the class of max-stable models, relaxing their restrictive stability properties. They retain attractive theoretical properties reflecting the particular positive dependence structure of maxima resulting from the operation of taking pointwise maxima of independent and identically distributed (iid) processes, and they ensure validity of distributions of maxima of events after a change of temporal support, e.g., when characterizing the joint tail of monthly variables based on a distribution fitted to yearly maxima. The latter property holds if we assume that the original events possess a stationary distribution in time  and arrive independently in time according to a Poisson process. Such assumptions may not always be realistic in practice, but these properties nevertheless  suggest using max-id distributions as conceptually appealing models preserving mechanisms that are well known for max-stable processes.

As the class of max-id process is substantially larger than that of max-stable processes, it is important to constrain max-id models in a sensible way. Here, we propose general construction principles for building new parametric max-id models that remain in the ``neighborhood'' of some popular max-stable processes, with tail dependence properties that can be precisely characterized. We focus on designing max-id models for spatial maxima that smoothly bridge asymptotic dependence and independence regimes, while keeping the widely-used extremal-$t$ max-stable process \citep{op2013} as a max-stable submodel on the boundary of the parameter space. The increased tail flexibility of our new models  makes them attractive for modeling maxima taken over relatively small blocks such as days or weeks, which also increases the effective sample size when fitting the model, leading to improved statistical efficiency. Our focus is on spatial modeling, but of course distributions for multivariate random vectors represent a special case of this framework. 

Theoretical properties of max-id processes have been explored in depth \citep[see, e.g.,][]{Gine.al.1990,Dombry.EyiMinko.2013}. However, applications beyond max-stability are rare. \citet{Padoan.2013} proposed a max-id model based on Gaussian process ratios, whose dependence strength varies with the intensity of the extreme event. Moreover, a recent implementation of a specific spatial max-id model using simulation-based Bayesian inference  \citep{Bopp.al.2020} bears witness of the versatility of such models. Compared to the models proposed by \citet{Padoan.2013} and \citet{Bopp.al.2020}, our proposed max-id models can control the joint tail decay rate more flexibly, and unlike \citet{Padoan.2013}, they can be arbitrarily close to popular max-stable models. Here, we propose new construction principles for building new max-id processes. In particular, we offer a wide class of new max-id models that extend the spectral representation of max-stable processes. As monotone increasing marginal transformations do not affect the max-id structure, we propose using the generalized extreme-value (GEV) distribution for univariate margins.

The paper is organized as follows. In Section \ref{sec:maxid}, we recall some theory on max-id distributions. In Section \ref{sec:model}, we discuss two general construction principles for max-id processes, and we investigate the theoretical tail properties of our new parametric models, showing that they can capture asymptotic dependence and/or independence. Likelihood-based inference is then presented in Section \ref{sec:inference} and a simulation study is discussed in Section \ref{sec:simulation}. We finally describe an application to wind gust data in Section \ref{sec:application}, before concluding with some remarks and perspectives on future research in Section \ref{sec:discussion}.


\section{Max-infinitely divisible distributions}\label{sec:maxid}
\subsection{Definition and Poisson process construction}
A multivariate distribution function $G$ on $\mathbb{R}^D$, $D \geq 1$,  is max-id if and only if $G^t$ (using the notation $G^t(\bm z)=\{G(\bm z)\}^t$) is a valid distribution function on $\mathbb{R}^D$ for any $t>0$. In particular, a max-id distribution $G$ describes the componentwise maximum of $m$ independent random variables with distribution $F=G^{1/m}$, for any $m=1,2,\ldots$.  In practice, this property permits to switch from the joint distribution $G$ of the componentwise maximum over a given time unit to alternative time units and in particular to the distribution $F$ of the original events. For example, by fitting a max-id model to annual maxima of a variable of interest, conclusions may be drawn for monthly, weekly or  daily maxima, modulo non-stationary and temporal dependence aspects. Unlike the univariate case, multivariate distributions are not always max-id. 

To propose useful max-id models, we will exploit a constructive characterisation of max-id distributions based on Poisson processes \citep[Chapter 5]{Resnick.1987}. For simplicity, we provide the following definitions and characterisations in terms of multivariate distributions and refer to \citet{Gine.al.1990} and \citet{Kabluchko.Schlather.2010,Kabluchko.al.2016} for the generalization to stochastic processes. In the following, we identify the Euclidean space $\mathbb{R}^D$ with its topological closure $[-\infty,\infty]^D$. 

We consider a Poisson point process (PPP) defined on the domain $E=[l_1,\infty]\times\ldots\times[l_D,\infty] \subseteq \mathbb{R}^D$ with mean measure $\Lambda\geq 0$, where the lower endpoint $\bm l=(l_1,\ldots,l_D)^{\rm T}\in [-\infty,\infty)^D$ may contain components that are equal to $-\infty$. For theoretical reasons, we impose that $\Lambda$ must be a Radon measure on $E\setminus\{\bm l\}$ such that Borel sets $A$ with infinite mass $\Lambda(A)$ may only arise when the closure of $A$ contains the lower endpoint $\bm l$. The Poisson points are defined as
\begin{equation}\label{eq:PPP}
\{\bm X_i;\, i=1,\dots,N\} \sim \textsc{PPP}(\Lambda), \quad N\in \mathbb{N}_0 \cup \{\infty\},
\end{equation}
and we extend the measure $\Lambda$ to $\mathbb{R}^D$ by setting $\Lambda(A)=\Lambda(A\cap E)$ for all Borel sets $A$ of $\mathbb{R}^D$. We then define a random vector $\bm Z=(Z_1,\ldots,Z_D)^{\rm T}\in\mathbb{R}^D$ with support contained in $E$ as the componentwise maximum over the Poisson points $\bm X_i$ and the lower endpoint $\bm l$, i.e.,
\begin{equation}\label{eq:constructmaxid}
\bm Z:=\max\big(\max_{i=1,2,\ldots,N} \bm X_i, \bm l\big).
\end{equation}
The value $\bm Z=\bm l$ arises when $N=0$, i.e., when the Poisson process contains no points in $E\setminus \{\bm l\}$. From \citet[Proposition 5.8]{Resnick.1987}, it follows that $\bm Z$ is max-id, and its joint distribution function is 
\begin{equation}\label{eq:cdf}
G(\bm z)=\exp\left\{-\Lambda\left([-\bm\infty,\bm z]^{\rm c}\right)\right\}, \ \bm z\in E, \qquad G(\bm z)=0, \  \bm z\in E^{\rm c},
\end{equation}
where $\bm z=(z_1,\ldots,z_D)^{\rm T}$, $[-\bm\infty,\bm z]=[-\infty,z_1]\times\cdots\times[-\infty,z_D]$, and $A^{\rm c}$ is the complement of the set $A$ in $\mathbb{R}^D$. Result \eqref{eq:cdf} can be easily derived by noticing that the event $\{\bm Z\leq \bm z\}$, $\bm z\in E$, is equivalent to having no points $\bm X_i$ in $\left[-\bm\infty,\bm z\right]^{\rm c}$. The measure $\Lambda$ is called the exponent measure or mean measure of $G$, and $V(\bm z)=\Lambda(\left[-\bm\infty,\bm z\right]^{\rm c})$ is called the exponent function. To ensure that $\bm Z$ has no components with value $+\infty$, we further impose the theoretical restriction $\Lambda(\{\bm z\in E:\max_{j=1,\ldots,D} z_j=\infty\})=0$. Univariate distributions of the max-id vector are given by 
\begin{equation*}
G_j(z)=\exp\left\{-\Lambda_j\left([-\infty,z]^{\rm c}\right)\right\}, \  z\in [l_j,\infty],\  j=1,\ldots,D,
\end{equation*}
and $G_j(z)=0$ otherwise, where $\Lambda_j\left([-\infty,z]^{\rm c}\right)=\Lambda\left(\{\bm z\in E\mid z_j > z\}\right)$ denotes the $j$-th margin of  $\Lambda$ ($j=1,\ldots,D$).

The Poisson process representation based on \eqref{eq:PPP} and \eqref{eq:constructmaxid} is helpful for intuitive interpretation (Section \ref{sec:spectralrep}), modeling (Section \ref{sec:finitemeasure}--\ref{sec:spectralrep}) and simulation (Section~\ref{sec:sim}). 

From the definition, it is easy to verify that any univariate distribution function is max-id. Moreover, if a random vector $\bm Z=(Z_1,\ldots,Z_D)^{\rm T}$ is max-id, the marginally transformed vector $\{h_1(Z_1),\ldots,h_D(Z_D)\}^{\rm T}$ with nondecreasing functions $h_j$ $(j=1,\ldots,D)$ remains max-id. Therefore, the max-id property concerns primarily the dependence structure, not the margins. When assuming continuous margins for simplicity, the underlying copula, obtained as 
\begin{equation}\label{eq:copula}
C(\bm u)=\exp\left\{-\Lambda^\star\left([\bm 0,\bm u]^{\rm c}\right)\right\}, \ \bm u=(u_1,\ldots,u_D)^{\rm T}\in[0,1]^D,
\end{equation}
where $\bm 0=(0,\ldots,0)^{\rm T}$ is the $D$-dimensional vector of zeros and $\Lambda^\star\left([\bm 0,\bm u]^{\rm c}\right) = \Lambda\left([-\bm\infty,\bm z]^{\rm c}\right)$ with $\bm z=(z_1,\ldots,z_D)^{\rm T}$ such that $z_j=G_j^{-1}(u_j)$, is  thus max-id itself. 
Here, the $j$-th margin satisfies $\Lambda_j^\star([0,u]^{\rm c})=\Lambda^\star(\{\bm u\in[0,1]^D: u_j>u\})=-\log u$, such that the max-id vector $\bm U=(U_1,\ldots,U_D)^{\rm T}\sim C$ with $U_j=G_j(Z_j)$ satisfies $\pr(U_j\leq u)=u$, $u\in[0,1]$ ($j=1,\ldots,D$). When the margins are not continuous (e.g., if $G$ has a point mass at the lower bound), slight modifications are required in \eqref{eq:copula}. In the following, we will write $G$ and $\Lambda$ for a max-id distribution and its underlying exponent measure, respectively, on any arbitrary marginal scale, and we will write $C$ and $\Lambda^\star$ for the underlying max-id copula and its corresponding exponent measure, respectively. Unless specified otherwise, we will assume that $G$ has continuous margins.

Any random vector $\bm Z$ with independent components $Z_j$ is max-id with exponent measure concentrated on the half-axes $\{l_1\}\times\ldots \times\{l_{j-1}\}\times [l_j,\infty]\times \{l_{j+1}\}\times\ldots\times \{l_D\}$ $(j=1,\ldots,D)$. The corresponding copula is $C(\bm u)=\prod_{j=1}^D u_j$. Similarly, any fully dependent random vector, with copula $C(\bm u)=\min(u_1,\ldots,u_D)$, is also max-id. Section~\ref{sec:depprop} investigates further dependence properties.

\subsection{Dependence properties}\label{sec:depprop}
Max-id random vectors $\bm Z$ are associated \citep[Proposition 5.29]{Resnick.1987}, such that a certain form of positive dependence prevails. Thus, negatively correlated random vectors cannot be max-id; see the Supplementary Material for a bivariate Gaussian counter-example that also illustrates the effect of the choice of $t$ in $G^t$. 

Extremal dependence is closely related to the tail behavior of the exponent measure $\Lambda$ since 
\begin{equation}\label{eq:approxtail}
1-G(\bm z)=1-\exp\left\{-\Lambda\left([-\bm\infty,\bm z]^{\rm c}\right)\right\}\sim\Lambda\left([-\bm\infty,\bm z]^{\rm c}\right), \quad \min_{j=1,\ldots,D} z_j\rightarrow\infty. 
\end{equation}
If a max-id distribution $G$ with exponent measure $\Lambda$ is used to model the componentwise maximum over $m$ independent random vectors with distribution $F$ such that $F^m=G$, then 
\begin{equation}\label{eq:tailFn}
F(\bm z) = G^{1/m}(\bm z) = \exp\left\{-\Lambda\left([-\bm\infty,\bm z]^{\rm c}\right)/m\right\}, 
\end{equation}
which gives the first-order tail approximation $1-F(\bm z)\approx \Lambda([-\bm\infty,\bm z]^{\rm c})/m$ when $\bm z$ has large components and/or $m$ is large. This shows that the extremal dependence structures of $F$, $G$ and $\Lambda$ are of the same form. 

The dependence strength may be summarized by the level-dependent extremal coefficient $\theta_D(u)$ as defined in \eqref{eq:depcoef}. In the case of max-id distributions, this coefficient can be expressed in terms of the underlying exponent measure as $\theta_D(u)= -\Lambda^\star\left([\bm 0,u\bm 1]^{\rm c}\right)/\log u$, where $\bm 1=(1,\ldots,1)^{\rm T}$ is a $D$-dimensional vector of ones. This coefficient can be interpreted, at the probability level $u\in(0,1)$, as the equivalent number of independent variables amongst the $D$ variables. A similar dependence summary was considered by \citet{Padoan.2013}. Furthermore, for bivariate vectors $(U_1,U_2)^{\rm T}\sim C$ with uniform margins and max-id copula $C$, we have that
\begin{align}
\label{chiz}
\chi(u)=\pr(U_1>u\mid U_2>u) &\; =2-{1-C(u,u)\over 1-u} \sim 2-\theta_2(u),\quad u\to1.
\end{align}
When $\chi=\lim_{u\to1}\chi(u)=0$, which occurs when $\theta_2(u)\to2$, 
as $u\to1$, the pair of variables $(U_1,U_2)^{\rm T}$ is called asymptotically independent, whilst they are asymptotically dependent if $\chi>0$. Asymptotic independence reflects scenarios of weakening and ultimately vanishing dependence strength as a function of the quantile level $u$. 

\subsection{Max-stable distributions}
Max-stable processes have been widely used for modeling spatial extremes \citep{Davison.etal.2012}. If a joint distribution $F$ is such that for some sequences of vectors $\bm a_m\in\mathbb R_+^D$ and $\bm b_m\in\mathbb R^D$,
\begin{equation}\label{eq:convmax}
F^m(\bm a_mz+\bm b_m)\to G(\bm z),\qquad m\to\infty,
\end{equation} 
where the distribution $G$ has non-degenerate margins, then the limit $G$ is max-stable. Essentially, the max-stability property implies that for any $t>0$, $G^t$ is a valid distribution that is of the same type as $G$ itself (i.e., $G$ and $G^t$ only differ through marginal location and scale parameters); see \citet{Davison.etal.2012} for a more precise definition. Max-stable processes form a subclass within the class of max-id processes. Indeed, if we allow $F=F_m$ to depend on $m$ in the convergence \eqref{eq:convmax}, then the limit distribution $G$ is max-id but not necessarily max-stable \citep{Balkema.al.1993}. If $G$ is max-stable with unit Fr\'echet marginal distributions, i.e., $G_j(z)=\exp(-1/z)$, $z>0$ ($j=1,\ldots,D$), then we have $E=[0,\infty]^D$, $\Lambda_j([0,z]^{\rm c})=1/z$, and $t\Lambda(tA)=\Lambda(A)$ for all $t>0$ and Borel set $A$, such that the extremal coefficient $\theta_D(u)\equiv\theta_D$ defined in \eqref{eq:depcoef} is constant with level $u\in(0,1)$. Therefore, max-stable models can only capture asymptotic dependence or exact independence, but fail at representing weakening dependence; recall \eqref{chiz}. This is often too strong an assumption for environmental data (see Figure~\ref{fig:extcoefWinds}). The broader class of max-id models allows us to gain in flexibility by relaxing this stability requirement. 

By analogy with \eqref{eq:constructmaxid}, max-stable processes with unit Fr\'echet margins are often defined constructively through their spectral representation
\begin{equation}\label{eq:ms}
Z(\bm s)=\max_{i=1,2,\ldots} R_i W_i(\bm s), \qquad \bm s\in\calS\subset\mathbb{R}^d,
\end{equation}
where $\{R_i\}$ are the points of a Poisson process on $\mathbb{R}_+$ with intensity $r^{-2}\,\mathrm{d}r$, and $W_i(\bm s)$ are independent copies of a random process $W(\bm s)$ with $\E[\max\{W(\bm s),0\}]=1$, independent of $\{R_i\}$; see \citet{deHaan:1984} and \citet{Schlather:2002}. Observe that for $Z_j:=Z(\bm s_j)$ ($j=1,\ldots,D$), this is a special case of \eqref{eq:constructmaxid} with a suitable choice of exponent measure $\Lambda$. Moreover, the total mass of the measure $\Lambda$ is here infinite, which removes the effect of the lower boundary in \eqref{eq:constructmaxid}.


\section{Modeling}\label{sec:model}

\subsection{Construction principles}\label{sec:constrprinc}
We broadly distinguish three approaches to building useful max-id models: either by (i) directly specifying the exponent measure $\Lambda$ in \eqref{eq:PPP} and \eqref{eq:cdf}, or (ii) defining the points $\bm X_i$ constructively in the representation \eqref{eq:constructmaxid}, or (iii) exploiting the fact that max-id distributions arise as limits of $F_m^m$ as $m\to\infty$ where the distributions $F_m$ are not necessarily identical. This last approach was used by \citet{Padoan.2013}, who obtained a max-id model as the limit of multivariate Gaussian ratios with correlation increasing with $m$. Here, we propose two new general construction principles: in Section \ref{sec:finitemeasure}, we follow (i) by defining a \emph{finite} exponent measure $\Lambda$, while in Section \ref{sec:infinitemeasure}, we follow (ii) and directly define the points $\bm X_i$  in \eqref{eq:constructmaxid}, generalizing the spectral representation of max-stable processes in \eqref{eq:ms} with \emph{infinite} exponent measure $\Lambda$.

\subsection{Models with finite exponent measure $\Lambda$}\label{sec:finitemeasure}
Using a finite exponent measure $\Lambda=cH$ parametrized by an arbitrary probability distribution $H$ on $E$ and a constant $c>0$, the  max-id vector $\bm Z$ in \eqref{eq:constructmaxid} has joint distribution 
\begin{equation}\label{eq:finitemeasuremodel}
G_{c,H}(\bm z)=\exp[-c\{1-H(\bm z)\}],\ \bm z\in E, \qquad G(\bm z)=0, \  \bm z\in E^{\rm c}.
\end{equation}
From the construction \eqref{eq:constructmaxid}, $\bm Z$ can be interpreted as the componentwise maximum over a finite number $N$ of independent events, where $N$ follows the Poisson distribution with mean $c$. To simulate the max-id vector $\bm Z$, we first sample $N\sim{\rm Poisson}(c)$, then conditionally generate $\bm X_1,\ldots,\bm X_N$ independently from $H$, and finally set $\bm Z=\max(\bm X_1,\ldots,\bm X_N,\bm l)$. As $\Lambda$ is finite and the event $\{N=0\}$ has probability $\exp(-c)>0$, this yields positive mass at the lower boundary $\bm l$. In practice, this singularity is rather a nuisance than a relevant model feature, and we may restrict $c$ to the range $[c_0,\infty)$ with a relatively large value of $c_0>0$, to ensure that  $\exp(-c)\approx 0$. Once a parametric model for $H$ has been chosen, the additional parameter $c$ refines the tail behavior of $G$ as compared to that of $H$ and adds flexibility. Consider the distribution $F=G_{c,H}^{1/m}$ of the original observations, for some fixed $m>0$. Using \eqref{eq:tailFn},
\begin{equation}\label{eq:facjoint}
1-F(\bm z)=1-G_{c,H}^{1/m}(\bm z) = 1-\exp[-(c/m)\{1-H(\bm z)\}] \sim (c/m)\{1-H(\bm z)\},
\end{equation}
as $m\to\infty$ and/or $\min_{j=1,\ldots,D} z_j\rightarrow\infty$, so that the constant $c$ controls the tail weight of $F$ with respect to that of $H$. Using the approximation \eqref{eq:facjoint} in \eqref{chiz} shows that the asymptotic dependence class of $F$ and $H$ is the same: the limit value of $\chi$ in \eqref{chiz} is the same for $F$ and $H$. More generally, in the case where a max-stable limit exists, both $H$ and the max-id distribution $G_{c,H}$ are in the maximum domain of attraction of the same max-stable limit process (modulo parameters in the univariate limit distributions); this result follows from multivariate regular variation theory \citep[\emph{e.g.}, Proposition 5.17 of ][]{Resnick.1987} where the constant $c$ cancels out in limit expressions. 
The finite measure model $G_{c,H}$ in \eqref{eq:finitemeasuremodel} therefore interpolates between the tail behavior of $H$ (for $c=1$) and that of the max-stable limit (for $c\to\infty$), since it corresponds to taking the componentwise maximum over $N$ independent processes (conditionally on $N>0$), and $N/c$ converges to $1$ almost surely as $c$ tends to infinity. In other words, the parameter $c$ controls ``how far'' the model is from its max-stable counterpart obtained as a special limiting case as $c\to\infty$, which is useful when modeling sub-asymptotic block maxima. Moreover, this gives us a way of constructing new asymptotically (in)dependent max-id models from essentially arbitrary distributions $H$ with the same {tail dependence characteristics. In contrast to using directly $H^t$, for some real $t>0$, as a model, our approach ensures a well-defined distribution $F=G_{c,H}^{1/m}$ for any $m=1,2,\ldots$, even if $H$ is not associated, and we obtain max-infinite divisibility by construction. Moreover, simulation from $G_{c,H}$ is straightforward provided we can simulate from $H$, while simulation from $H^t$ with non-integer $t$ may be challenging.}

In the spatial context, the above discussion generalizes to max-id processes constructed as the pointwise maximum $Z(\bm s)=\max\{X_1(\bm s),\ldots,X_N(\bm s),l(\bm s)\}$, where $X_1(\bm s),\ldots,X_N(\bm s)$ are independent realizations of $X(\bm s)$ (conditionally on $N$), and $l(\bm s)$ is their lower bound function. Relevant choices for $X$ include Gaussian processes or Student-$t$ processes, for which efficient implementations of routines to compute multivariate distribution functions exist,  or more generally elliptical processes. When $X$ is Gaussian, then $Z$  is asymptotically independent, and when $X$ is Student-$t$ with $\alpha>0$ degrees of freedom, then $Z$ is asymptotically dependent with the max-stable extremal-$t$ limit process \citep{op2013}. The top row of Figure~\ref{fig:ExtrCoefsModel} illustrates bivariate level-dependent extremal coefficients $\theta_2(u)$ defined in \eqref{eq:depcoef} for model \eqref{eq:finitemeasuremodel} when $H$ is chosen as the bivariate Gaussian or Student-$t$ distribution with $\alpha=2$ and $5$ degrees of freedom. We see that the parameter $c$ drives the steepness and the average value of the curves, with lower $c$ yielding lower curves (i.e., stronger dependence) but with steeper increase. The asymptotically dependent models based on the Student-$t$ distribution converge to their extremal-$t$ limit as $c$ increases to infinity, which explains why the curves flatten when $c$ increases. For comparison, the last row of Figure~\ref{fig:ExtrCoefsModel} shows extremal coefficients of three reference models, chosen as the max-id model of \citet{Padoan.2013} based on a limit of Gaussian process ratios, and the classical Gaussian and Student $t$ copulas. The gain in flexibility of our proposed max-id models is apparent, especially when $c$ is moderately large. We further point out that the model of \citet{Padoan.2013} is characterized by a very steep slope in the level-dependent extremal coefficients. This implies that this model has a very fast rate of convergence towards asymptotic independence, while our proposed models can better regulate this rate through the value of $c$. 

\begin{figure}[t!]
	\centering
	\includegraphics[width=0.99\linewidth]{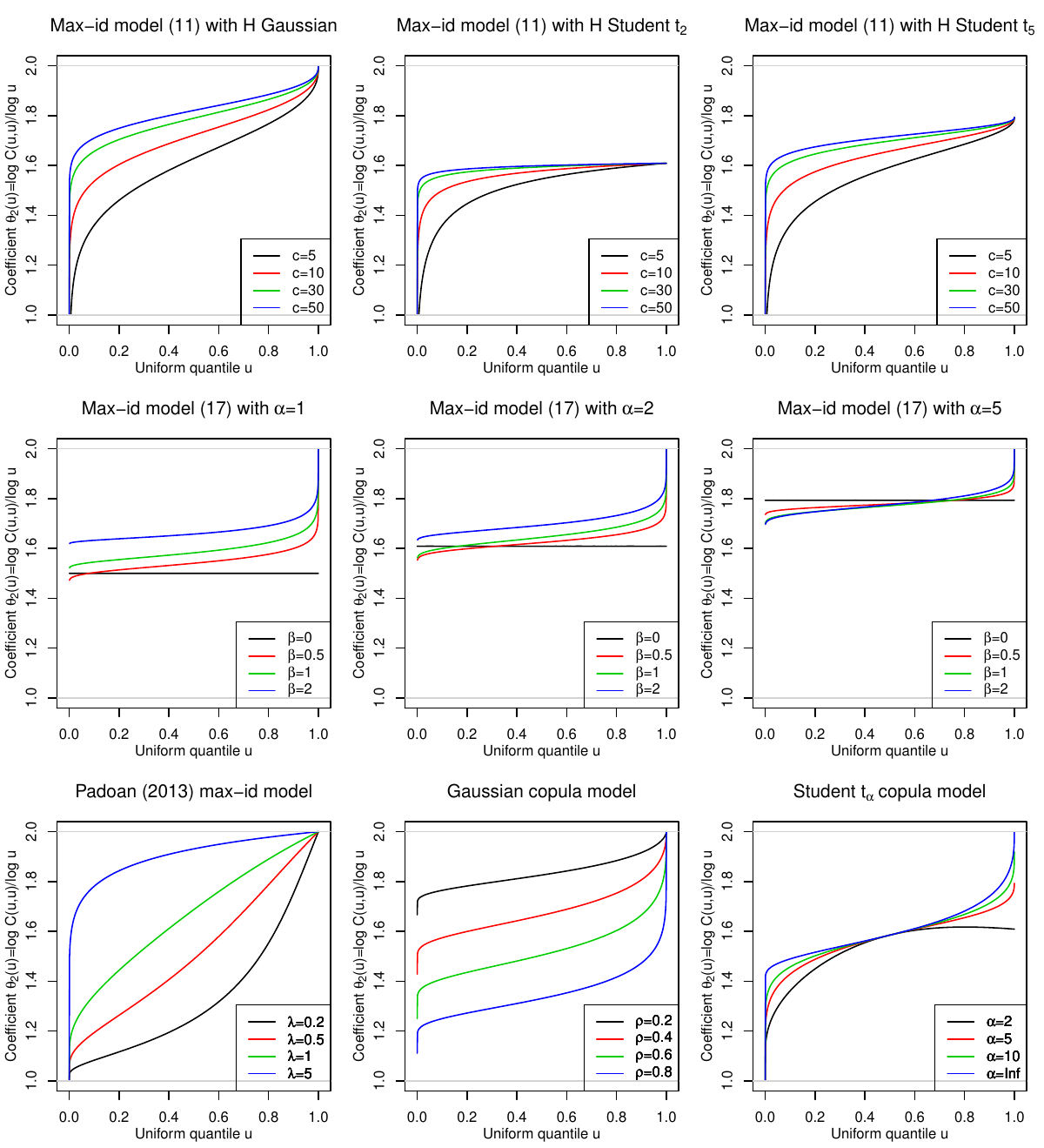}
	\caption{Bivariate level-dependent extremal coefficient $\theta_2(u)=\log C(u,u)/\log u$ in \eqref{eq:depcoef} for a a selection of models. Top row: finite exponent measure model \eqref{eq:finitemeasuremodel} for $c=5$ (black), $10$ (red), $30$ (green) and $50$ (blue), with the distribution $H$ taken as the Gaussian and Student $t$ distribution with $3$ and $10$ degrees of freedom (left to right). Middle row: infinite exponent measure model~\eqref{eq:model2} with  $\alpha=1,2,5$ (left to right) and $\beta=0$ (black), $\beta=0.5$ (red), $\beta=1$ (green) and $\beta=2$ (blue). Bottom row: model of \citet{Padoan.2013} for different variogram values $\lambda$ (left), Gaussian copula model for different correlation values (middle), Student $t$ copula model for different values of the degrees of freedom parameter (right). The models in the top and middle rows and the Student $t$ copula models are based on an underlying standard Gaussian vector $\{W(\bm s_1),W(\bm s_2)\}^{\rm T}$ with correlation $\rho(h)=0.5$.}
	\label{fig:ExtrCoefsModel}
\end{figure}

\subsection{Generalized spectral construction}
\label{sec:infinitemeasure}
\label{sec:spectralrep}
To prevent the singularity at the lower endpoint $\bm l$ (recall Section \ref{sec:finitemeasure}), we develop a general approach for constructing max-id models with infinite exponent measure by mimicking the spectral representation of max-stable processes in \eqref{eq:ms}, using a more flexible Poisson point process intensity for $\{R_i\}>0$.  We then propose parametric models that can smoothly bridge asymptotic dependence and independence. As in \eqref{eq:ms}, let $W_i(\bm s)$ be independent copies of a random process $W(\bm s)$ with $0<\E[\max\{W(\bm s),0\}]<\infty$, independent of $\{R_i\}$.   Now, instead of taking $\kappa([r,\infty))=1/r$, $r>0$, as mean measure for $\{R_i\}$, we consider max-id processes constructed as 
\begin{equation}\label{eq:msgen}
Z(\bm s)=\max_{i=1,2,\ldots} R_i W_i(\bm s), \qquad \bm s\in\calS\subset\mathbb{R}^d,\qquad 0< \{R_i\}\sim \textsc{PPP}(\kappa_{\bm \gamma}),
\end{equation}
where the mean measure $\kappa_{\bm \gamma}$, parametrized by the vector ${\bm \gamma}\in\Gamma\subset\mathbb{R}^q$, is such that $\kappa_{\bm \gamma}([0,\infty))=\infty$ but $\kappa_{\bm \gamma}([r,\infty))<\infty$ for any $r>0$. Moreover, we specify $\kappa_{\bm \gamma}$ in order to recover max-stable models as a special case of \eqref{eq:msgen}. As negative values of $W_i(\bm s)$ do not contribute to the maximum $Z(\bm s)$ in \eqref{eq:msgen}, we may replace $W_i(\bm s)$ by $\max\{W_i(\bm s),0\}$ and set $l(\bm s)=0$. The finite-dimensional exponent measure $\Lambda$ in \eqref{eq:cdf}  resulting from \eqref{eq:msgen} can be characterized by the following formula,
\begin{equation}\label{RWRadon}
\Lambda\left([\bm 0,\bm z]^{\rm c}\right)=\int_0^\infty \{1-F_{W}(\bm z/r)\}\,\kappa_{\bm \gamma}({\rm d}r)<\infty,\qquad \bm z\in(0,\infty)^D,
\end{equation} 

\noindent where $F_{W}$ denotes the distribution of the process $W(\bm s)$ observed at any finite collection of $D\geq1$ sites $\bm s_1,\ldots,\bm s_D\in\calS$.  Since it must be Radon on $E\setminus\{\bm l\}$, the values $\Lambda\left([\bm 0,\bm z]^{\rm c}\right)$ must be finite for any  $\bm z\in(0,\infty)^D$. 

An intuitive and common interpretation of \eqref{eq:msgen} is to view the max-id process $Z(\bm s)$ as the pointwise maximum of an infinite number of independent ``storms'' $R_iW_i(\bm s)$ characterized by their amplitude $R_i$ and their spatial extent $W_i(\bm s)$. In practice, we can take $W_i(\bm s)$ as one of the popular choices for modeling original event data, such as Gaussian processes, and then the scaling variables $R_i$ adjust this baseline model for accommodating more specific joint tail decay rates, which may not be well captured by the baseline.

Apart from the different measure $\kappa_{\bm \gamma}$, a major distinction between the max-stable and max-id constructions in \eqref{eq:ms} and \eqref{eq:msgen}, respectively, is that the assumption of independence between $R_i$ and $W_i(\bm s)$ is essential in \eqref{eq:ms} while it is not critical in \eqref{eq:msgen}. For example, we could choose $W_i(\bm s)$ as a Gaussian process with weakening correlation as the points $R_i$ become larger. We do not pursue this route further in this paper, but leave it for future research; rather, we here focus on choices of  $\kappa_{\bm \gamma}$ which already lead to a rich class of models.

The power-law tail of the measure $\kappa([r,\infty])=1/r$, $r>0$, in the max-stable construction \eqref{eq:ms} yields asymptotic dependence. To extend this to asymptotic independence, we propose several lighter-tailed models, with a Pareto tail on the boundary of the parameter space. Similarly to \citet{Huser.etal:2017}, our max-id construction shifts focus towards asymptotic independence while keeping the max-stable spectral representation \eqref{eq:ms} as a special case. We say that a measure $\kappa$ is Weibull-tailed if
\begin{equation}\label{eq:AIV}
\kappa([r,\infty))\sim c r^{\tau} \exp(-\alpha r^\beta), \qquad r\to\infty,
\end{equation}
for some constants $c>0$, $\alpha>0$, $\beta>0$ and $\tau\in \mathbb{R}$, where we refer to $\beta$ as the Weibull coefficient of $\kappa$. We propose the following two models for the measure $\kappa_{\bm \gamma}$ in \eqref{eq:msgen}:
\begin{align}
\kappa_{\bm \gamma}^{[1]}([r,\infty))&=r^{-(1-\alpha)}\exp\{-\alpha(r^\beta-1)/\beta\}, &&r >0,{\bm \gamma}=(\alpha,\beta)^{\rm T}\in [0,1)\times[0,\infty),\label{eq:model1}\\
\kappa_{\bm \gamma}^{[2]}([r,\infty))&=r^{-\beta}\exp\{-\alpha(r^\beta-1)/\beta\}, &&r >0,{\bm \gamma}=(\alpha,\beta)^{\rm T}\in (0,\infty)\times[0,\infty).\label{eq:model2}
\end{align}
For $\beta=0$, we interpret $\kappa_{\bm \gamma}^{[1]}$ and $\kappa_{\bm \gamma}^{[2]}$ as the limits as $\beta\downarrow 0$, giving $\kappa_{\bm \gamma}^{[1]}([r,\infty))=r^{-1}$ and $\kappa_{\bm \gamma}^{[2]}([r,\infty))=r^{-\alpha}$, $r>0$. For each model $k=1,2$, $\kappa_{\bm \gamma}^{[k]}$ is a well-defined measure that is Weibull-tailed when $\beta>0$ and that ensures $\kappa_{\bm \gamma}^{[k]}([0,\infty))=\infty$, leading to an infinite number of Poisson points in \eqref{eq:msgen}. 
With $\kappa_{\bm \gamma}^{[1]}$, we retrieve the max-stable construction \eqref{eq:ms} with unit Fr\'echet margins when $\alpha=0$ or when $\beta=0$, provided $\E[\max\{W(\bm s),0\}]=1$. With $\kappa_{\bm \gamma}^{[2]}$, we also get a max-stable model when $\beta=0$, albeit possessing $\alpha$-Fr\'echet marginal distributions.

More specifically, when the process $W(\bm s)$ in \eqref{eq:msgen} is chosen to be a standard Gaussian process, the resulting exponent measures $\Lambda_{\bm \gamma}^{[k]}$ ($k=1,2$) are well-defined Radon measures without any mass at points with component $+\infty$; see Proposition~1 in the Supplementary Material. Then, the max-stable extremal-$t$ process with $\alpha>0$ degrees of freedom arises from $\kappa_{\bm \gamma}^{[2]}$ when $\beta=0$ \citep{op2013}. {Therefore, similarly to the finite measure model in \eqref{eq:finitemeasuremodel} based on the Student-$t$ distribution $H$, we here remain in the ``neighborhood'' of the popular extremal-$t$ max-stable model, which is obtained as a special case on the boundary of the parameter space. Furthermore, by} fixing $\alpha=1$ in $\kappa_{\bm \gamma}^{[2]}$, we obtain a more parsimonious model that corresponds to the max-stable Schlather process \citep{Schlather:2002} when $\beta=0$:
\begin{equation}\label{eq:model3}
\kappa_{\bm \gamma}^{[3]}([r,\infty))=r^{-\beta}\exp\{-(r^\beta-1)/\beta\}, \quad r>0,{\bm \gamma}=\beta\in [0,\infty).
\end{equation}

For the max-stable submodels stemming from \eqref{eq:model1}, \eqref{eq:model2} and \eqref{eq:model3}, we get asymptotic dependence except in the degenerate case of complete independence. In all other non-max-stable cases, the tail decay of $\kappa_{\bm \gamma}^{[k]}$ $(k=1,2,3)$ is of Weibull type and yields asymptotic independence with Gaussian $W(\bm s)$; see Proposition~\ref{prop:chibar} in the Appendix. Under this setting, more information is carried through the coefficient of tail dependence $\eta\in(0,1]$ \citep{Ledford.Tawn.1996}, which controls the rate of convergence towards asymptotic independence; its definition is recalled in the Supplementary Material. For our proposed models $\kappa_{\bm \gamma}^{[k]}$ $(k=1,2,3)$, used in \eqref{eq:msgen} with a standard Gaussian process $W(\bm s)$ with correlation function $\rho(h)$, the coefficient of tail dependence between two sites $\bm s_1,\bm s_2$ at distance $h=\|\bm s_1-\bm s_2\|$ is $\eta(h) = \left[\{1+\rho(h)\}/ 2\right]^{\beta/(\beta+2)}$; see Proposition \ref{prop:chibar} in the appendix. The parameter $\beta$ plays a crucial role for the joint tail decay rate, while the parameter $\alpha$ also impacts the dependence structure of $Z(\bm s)$ for both $\kappa_{\bm \gamma}^{[1]}$ and $\kappa_{\bm \gamma}^{[2]}$ but to a milder degree. To illustrate the flexibility of Model~\eqref{eq:model2}, the middle row of Figure~\ref{fig:ExtrCoefsModel} displays its bivariate level-dependent extremal coefficient $\theta_2(u)$, recall \eqref{eq:depcoef}, for various parameter values. For $\beta=0$, the model is max-stable and {$\theta_2(u)\equiv\theta_2$} is constant with respect to the quantile level $u$, whereas for $\beta>0$, the dependence strength weakens as the level $u$ increases (i.e., $\theta_2(u)$ approaches $2$ as $u\to1$). Hence, the parameter $\beta$ controls ``how far'' the model is from its max-stable counterpart. The parameter $\alpha$ modulates the overall dependence strength.

Using the spectral construction \eqref{eq:msgen}, simulation mechanisms for max-id models are similar to those for max-stable models; see Section~\ref{sec:sim} for details. Our Gaussian-based models can be simulated exactly by exploiting multivariate elliptical representations.


\section{Statistical inference}\label{sec:inference}

\subsection{Marginal standardization to uniform scale}\label{standardization}
For simplicity, we describe below our proposed estimation approach using a pairwise likelihood for data provided on uniform margins, such that we estimate only the dependence parameters. This assumes that suitable marginal distributions have been fitted first, and that the data have been pre-transformed to the uniform ${\rm Unif}(0,1)$ marginal scale prior to estimation.

In practice, assume that the spatial process of maxima $\tilde{Z}(\bm s)$, $\bm s\in\mathcal S\subset\mathbb R^d$, is observed at $D$ sites $\bm s_1,\ldots,\bm s_D$. The generalized extreme-value (GEV) distribution $\tilde{G}(z)={\rm GEV}(z;\mu_j,\sigma_j,\xi_j)$ with location $\mu_j\in\mathbb R$, scale $\sigma_j>0$ and shape $\xi_j\in\mathbb R$ ($j=1,\ldots,D$) may be chosen for each margin, and then the data can be transformed to the uniform scale as $U_1=\tilde{G}(\tilde{Z}_1),\ldots,U_D=\tilde{G}(\tilde{Z}_D)$, where $\tilde{Z}_j=\tilde{Z}(\bm s_j)$ ($j=1,\ldots,D$), by plugging in marginal parameter estimates. After standardizing the data to the uniform scale, we proceed by fitting the max-id copula $C$ defined in \eqref{eq:copula}, with exponent measure $\Lambda^\star$ and exponent function given as
$$V^\star(\bm u) = \Lambda^\star\left([\bm 0,\bm u]^{\rm c}\right) = V\left\{G_1^{-1}\left(u_1\right),\ldots,G_D^{-1}\left(u_D\right) \right\},$$
where $V$ and $G_j$ ($j=1,\ldots,D$) respectively denote the exponent function and the margins of the max-id distribution \eqref{eq:cdf} for any of the models defined in Section~\ref{sec:model}.


\subsection{Pairwise likelihood approach}\label{sec:pairwise}
Suppose that $n$ independent replicates of a max-id process $U(\bm s)$ with standard uniform margins, parametrized by a vector $\bm \psi\in \Psi\subset\mathbb{R}^p$, are observed at $D$ sites $\bm s_1,\ldots,\bm s_D\in\calS\subset\mathbb R^d$. We write $\bm u_i=(u_{i1},\ldots,u_{iD})^{\rm T}$, where $u_{ij}$ is the $i$-th observation at the $j$-th site $(i=1,\ldots,n, j=1,\ldots,D)$. Furthermore, suppose that the density of $U(\bm s)$ with respect to Lebesgue measure on $\mathbb{R}^D$ exists. This holds for the construction \eqref{eq:msgen} (with infinite measure $\kappa_{\bm \gamma}([0,\infty))=\infty$) when the vector $\bm W=\{W(\bm s_1),\ldots,W(\bm s_D)\}^{\rm T}$ has a density $f_{\bm W}(\bm w)$ such that the intensity of the Poisson point process $\{R_i\bm W_i\}$ is obtained as $\lambda(\bm z)=\int_0^\infty   f_{\bm W}(\bm z/r)r^{-D}\kappa_{\bm \gamma}(\mathrm{d}r)$, here expressed on the original scale before transformation to uniform margins. 
The finite measure models proposed in Section \ref{sec:finitemeasure} do not admit a density because of the singularity at the lower boundary $\bm l$, but a practical workaround for this problem is to restrict the range of considered values for the parameter $c$, such that the singular mass becomes negligible. From \eqref{eq:copula}, the full likelihood function is
\begin{equation}\label{full.likelihood}
L(\bm \psi;\bm u_1,\ldots,\bm u_n)=\prod_{i=1}^n\bigg[\exp\left\{-V^\star(\bm u_i)\right\}\sum_{\pi\in\calP_D}\prod_{k=1}^{|\pi|}\left\{-V^\star_{\pi_k}(\bm u_i)\right\}\bigg],
\end{equation}
where $\calP_D$ denotes the collection of all partitions $\pi=\{\pi_1,\ldots,\pi_{|\pi|}\}$ of $\calD=\{1,\ldots,D\}$ (of size $|\pi|$), and where the exponent function $V^\star(\bm u)=\Lambda^\star\left([\bm 0,\bm u]^{\rm c}\right)$ and its partial derivatives $V^\star_{\pi_k}(\bm u)=\partial^{|\pi_k|}V^\star(\bm u)/(\prod_{j\in\pi_k}\partial u_j)$ both depend on the parameters in $\bm \psi$. For large $D$, the sum in \eqref{full.likelihood} contains too many terms to be computed and the likelihood is intractable \citep{Castruccio.etal:2016}. \citet{StephensonTawn05} improved the computational and statistical efficiency by conditioning on event times. \citet{Thibaud.etal:2016} and \citet{Huser.etal:2019} showed how to perform likelihood inference for max-stable processes by integrating out event times, but these approaches remain fairly demanding in moderately high dimensions. 
Another challenge for certain models is linked to the computation of $V^\star(\bm u)$ and $V^\star_{\pi_k}(\bm u)$ themselves. 

Pairwise likelihood inference has been widely used for max-stable processes \citep{Padoan.etal.2010,Huser.Davison:2013a}, and allows us to significantly reduce the computational burden while maintaining satisfactory statistical efficiency. This approach naturally extends to max-id processes. Instead of maximizing \eqref{full.likelihood}, pairwise likelihood inference relies on 
\begin{equation}\label{pairwise.likelihood}
PL(\bm\psi;\bm u_1,\ldots,\bm u_n)=\prod_{1\leq j_1<j_2\leq D}\left[L\{\bm\psi;(u_{1j_1},u_{1j_2})^{\rm T},\ldots,(u_{nj_1},u_{nj_2})^{\rm T}\}\right]^{\omega_{j_1;j_2}},
\end{equation}
where the innermost term is the bivariate likelihood computed from \eqref{full.likelihood}, with each independent contribution given by $L(\bm\psi;(u_{ij_1},u_{ij_2})^{\rm T})=\exp\{-V^\star(u_{ij_1},u_{ij_2})\}\{V^\star_1(u_{ij_1},u_{ij_2})V^\star_2(u_{ij_1},u_{ij_2})-V^\star_{12}(u_{ij_1},u_{ij_2})\}$, and where $\omega_{j_1;j_2}\geq0$ denotes a non-negative weight attributed to the contribution of the pair $\{j_1,j_2\}$. Usually, weights are chosen to be binary, i.e., $\omega_{j_1;j_2}\in\{0,1\}$, discarding several pairs (with zero weight) to improve computations. Moreover, weights are often fixed according to distance: $\omega_{j_1;j_2}=1$ if $\|\bm s_1-\bm s_2\|<\delta$, where $\delta>0$ is a suitable cut-off distance and $\omega_{j_1;j_2}=0$ otherwise \citep{Padoan.etal.2010}, though other strategies are also possible \citep[see, e.g.,][Chapter 3]{Huser:2013}. 

\subsection{Likelihood formulas}
\label{sec:likforms}

\paragraph{Finite measure models}
In the case of the finite exponent measure models proposed in Section~\ref{sec:finitemeasure}, the max-id construction with distribution function $G_{c,H}$  has exponent measure $\Lambda=cH$, and we obtain $V(\bm z)=c\{1-H(\bm z)\}$. If the distribution $H$ has a density $h$, one has $V_{\pi_k}(\bm z)=-cH_{\pi_k}(\bm z)$, for non-empty subsets $\pi_k$ of $\mathcal D=\{1,\ldots,D\}$ of cardinality $|\pi_k|>1$, where subscripts denote partial differentiation. In particular, $V_{\calD}(\bm z)=-ch(\bm z)$. In the case where $H$ is the multivariate standard Gaussian distribution $\Phi_D(\cdot;\Sigma)$ with correlation matrix $\Sigma$ and density $\phi_{D}(\cdot;\Sigma)$, these expressions become
\begin{align}
V(\bm z)&=c\{1-\Phi_D(\bm z;\Sigma)\};\label{Vmodel1}\\
V_{\pi_k}(\bm z)&=-c\Phi_{D-|\pi_k|}(\bm z_{\pi_k^{\rm c}}-\Sigma_{\pi_k^{\rm c};\pi_k}\Sigma_{\pi_k;\pi_k}^{-1}\bm z_{\pi_k};\Sigma_{\pi_k^{\rm c};\pi_k^{\rm c}}-\Sigma_{\pi_k^{\rm c};\pi_k}\Sigma_{\pi_k;\pi_k}^{-1}\Sigma_{\pi_k;\pi_k^{\rm c}})\phi_{|\pi_k|}(\bm z_{\pi_k};\Sigma_{\pi_k;\pi_k}),\ 1\leq |\pi_k|<D;\label{Vtaumodel1}\\
V_{\calD}(\bm z)&=-c\phi_{D}(\bm z;\Sigma),\label{VDmodel1}
\end{align}
where ${\Sigma_{A;B}}$ denotes the matrix $\Sigma$ restricted to the rows in the set $A$ and columns in $B$, $\bm z_A$ is the subvector of $\bm z$ with elements indexed by $A$, and $\pi_k^{\rm c}=\calD\setminus\pi_k$ is the complement of the set $\pi_k$ in $\mathcal D$. Expressions \eqref{Vmodel1} and \eqref{Vtaumodel1} involve the multivariate Gaussian distribution in dimension $D$ and $D-|\pi_k|$, respectively, which may be approximated using quasi Monte Carlo methods. 
This remains expensive to compute when $D$ is moderately large, which  is a common issue with most popular spatial extreme-value models. Very similar formulas are obtained if $H$ is the multivariate elliptical Student $t$ distribution with $\alpha>0$ degrees of freedom \citep{Ding.2016}, yielding asymptotic dependence except for degenerate cases of the dispersion matrix, and the Gaussian case can be seen as a limit for $\alpha\to\infty$. In general,  the finiteness of $\Lambda$ entails a point mass of the max-id distribution at the lower boundary ${\bm l}$, and therefore the likelihood formula \eqref{full.likelihood} needs to be modified accordingly as $L_{H,c}({\bm\psi};\bm u_1,\ldots,\bm u_n)=\exp(-cm)\times L({\bm\psi};\{\bm u_i,i\in\calI\})$, where $\calI=\{i\in\{1,\ldots,n\}:\bm u_i>{\bm l}\}$ and $m=n-|\calI|$ is the number of vectors $\bm u_i$ equal to the lower boundary ${\bm l}$. In practice, events are typically not observed at the lower boundary, which induces bias since the likelihood wrongfully reduces to Equation~\eqref{full.likelihood} with $m=0$. To limit this nuisance, one solution consists in assuming that $c>c_0>0$, with $c_0$ large, or adopting a censored likelihood approach with a low threshold. An alternative solution would be to remove the singular mass from the model by conditioning the number of extreme events $N$ in \eqref{eq:constructmaxid} to be at least $1$. Then, the modified multivariate distribution functions \eqref{eq:cdf} of the model would be given by $G(\bm z)/\{1-\exp(-c)\}$, $\bm z\in E$. This new distribution would still be close to the original max-id distribution when $c$ is  large, but stronger differences may arise for small $c$. In our implementation, we simply ignore the point mass for simplicity but apply the mild restriction $c>1$.

\paragraph{Infinite measure models}
In the case of the infinite exponent measure models proposed in Section~\ref{sec:infinitemeasure}, we can exploit the independence of $\{R_i\}$ and $\{\bm W_i\}$. Then, we deduce that the intensity of the Poisson point process $\{\bm X_i\}=\{R_i\bm W_i\}$, stemming from \eqref{eq:msgen} when the process is observed at $D$ sites, is $\lambda(\bm z)=\int_0^\infty f_{\bm W}(\bm z/r)r^{-D}f(r){\rm d}r$, $\bm z\in\mathbb{R}^D$, where $f_{\bm W}$ denotes the density of $\bm W_i$ and $f(r)=-{\rm d}\kappa_{\bm \gamma}([r,\infty))/{\rm d}r$ is the intensity of the Poisson point process $\{R_i\}$, provided the latter exist. For the models $\kappa_{\bm \gamma}^{[1]}$, $\kappa_{\bm \gamma}^{[2]}$ and $\kappa_{\bm \gamma}^{[3]}$ defined in \eqref{eq:model1}, \eqref{eq:model2} and \eqref{eq:model3}, the intensity $f(r)$ may be expressed as $f^{[1]}(r) = \{(1-\alpha) r^{\alpha-2} + \alpha r^{\alpha+\beta-2}\}\exp\{-\alpha(r^\beta-1)/\beta\}$, $f^{[2]}(r) = (\beta r^{-\beta-1} + \alpha r^{-1}) \exp\{-\alpha(r^\beta-1)/\beta\}$, and $f^{[3]}(r) = (\beta r^{-\beta-1} + r^{-1} ) \exp\{-(r^\beta-1)/\beta\}$, $r>0$, respectively. Using $V(\bm z)=\Lambda([-\bm\infty,\bm z]^{\rm c})=\int_{[-\bm\infty,\bm z]^{\rm c}}\lambda(\bm x){\rm d}\bm x$, we then have the following relationships:
\begin{align}
V(\bm z)&=\int_0^\infty \{1-F_{\bm W}(\bm z/r)\}f(r){\rm d}r;\label{Vmodel2}\\
V_{\pi_k}(\bm z)&=-\int_0^\infty F_{\bm W;\pi_k}(\bm z/r)r^{-|\pi_k|}f(r){\rm d}r;\label{Vtaumodel2}\\
V_{\calD}(\bm z)&=-\int_0^\infty f_{\bm W}(\bm z/r)r^{-D}f(r){\rm d}r,\label{VDmodel2}
\end{align}
where $F_{\bm W}$ is the distribution of $\bm W_i$ and $F_{\bm W;\pi_k}(\bm w)=\partial^{|\pi_k|}F_{\bm W}(\bm w)/(\prod_{j\in\pi_k}\partial w_j)$ denotes its partial derivatives. If $\bm W_i$ is multivariate standard Gaussian with correlation matrix $\Sigma$, then $F_{\bm W}(\bm w)=\Phi_D(\bm w;\Sigma)$, $f_{\bm W}(\bm w)=\phi_D(\bm w;\Sigma)$ and $F_{\bm W;\pi_k}(\bm w)=\Phi_{D-|\pi_k|}(\bm w_{\pi_k^{\rm c}}-\Sigma_{\pi_k^{\rm c};\pi_k}\Sigma_{\pi_k;\pi_k}^{-1}\bm w_{\pi_k};\Sigma_{\pi_k^{\rm c};\pi_k^{\rm c}}-\Sigma_{\pi_k^{\rm c};\pi_k}\Sigma_{\pi_k;\pi_k}^{-1}\Sigma_{\pi_k;\pi_k^{\rm c}})\phi_{|\pi_k|}(\bm w_{\pi_k};\Sigma_{\pi_k;\pi_k})$ similarly to the expression in \eqref{Vtaumodel1}. 
In this case, the expressions \eqref{Vmodel2} and \eqref{Vtaumodel2} rely on the computation of the multivariate Gaussian distribution in dimension $D$ and $D-|\pi_k|$, respectively, and the unidimensional integrals are not always available in closed form, but standard numerical methods can be used to approximate them accurately.

\subsection{Asymptotic properties and uncertainty assessment}

As pairwise likelihoods are constructed from valid likelihood terms, they inherit appealing properties from the ordinary maximum likelihood estimator \citep{Varin.etal:2011}. Under mild regularity conditions, the maximum pairwise likelihood estimator $\hat{\bm\psi}$ is strongly consistent and asymptotically normal with the well-known Godambe covariance matrix (also known as the ``sandwich matrix''). For model comparisons, the scaled composite likelihood information criterion (${\rm CLIC}^\star$) may be used \citep{Padoan.etal.2010,Davison.Gholamrezaee.2012}.

Denote by $\hat{\bm \psi}$ the maximum pairwise likelihood estimator and by $\bm\psi_0\in\Psi\subset\mathbb{R}^p$ the true parameter value, assumed to be an interior point of the parameter space $\Psi$. Under mild regularity conditions \citep{Padoan.etal.2010}, and provided that $\bm\psi$ is identifiable from the bivariate densities, we have that
\begin{equation}\label{pairwise.mle}
n^{1/2}(\hat{\bm\psi}-\bm\psi_0)\Dto\calN_p(0,J^{-1}KJ^{-1}),\qquad n\to\infty,
\end{equation}
where $n^{-1}J^{-1}KJ^{-1}$ is the sandwich matrix with sensitivity matrix $J=\E\{-{\partial^2\over\partial\bm \psi\partial\bm\psi^{\rm c}}\log PL(\bm\psi_0;\bm U)\}$ and variability matrix $K=\var\{{\partial\over\partial\bm\psi}\log PL(\bm\psi_0;\bm U)\}$ defined based on the pairwise likelihood \eqref{pairwise.likelihood}, and $\bm U=\{U(\bm s_1),\ldots,U(\bm s_D)\}^{\rm T}$ denotes a generic (uniformly distributed) vector of observations from the model. In practice, we use estimators $\hat{J}$ and $\hat{K}$ of the matrices $J$ and $K$, respectively. For $\hat J$, we use the (rescaled) hessian of the negative log pairwise likelihood evaluated at its mode, and for $\hat K$, we use the sample variance-covariance matrix of the gradient of log pairwise likelihood contributions, i.e.,
\begin{equation}\label{sandwich}
\hat J = -{1\over n}{\partial^2\over\partial\bm \psi\partial\bm\psi^{\rm c}}\log PL(\hat{\bm\psi};\bm u_1,\ldots,\bm u_n),\qquad \hat K = \widehat\var\left\{{\partial\over\partial\bm\psi}\log PL(\hat{\bm\psi};\bm u_i);i=1,\ldots,n\right\}.
\end{equation}
In \eqref{sandwich}, we use numerical derivatives for both $\hat J$ and $\hat K$. Model comparison may be performed using the composite likelihood information criterion (CLIC) defined as ${\rm CLIC}=-2\,PL(\hat{\bm\psi};\bm u_1,\ldots,\bm u_n)+2\,{\rm trace}(\hat{J}^{-1}\hat{K})$. The rescaled version ${\rm CLIC}^\star$ \citep{Davison.Gholamrezaee.2012}, based on $C^{-1}\log PL(\hat{\bm\psi};\bm u_1,\ldots,\bm u_n)$ with $C=2D^{-1}\sum_{1\leq j_1<j_2\leq D}\omega_{j_1;j_2}$ with weights $\omega_{j_1;j_2}$ defined in \eqref{pairwise.likelihood}, is easier to interpret since it recognizes that all variables in Equation~\eqref{pairwise.likelihood} appear on average $C$ times more often that they should in case of independence. In particular, when $\omega_{j_1;j_2}=1$ for all $1\leq j_1<j_2\leq D$, then $C=D-1$.


\section{Simulation study}\label{sec:simulation}
\subsection{Simulation of max-id processes}\label{sec:sim}
Before describing our simulation study in Section~\ref{sec:experiments}, we first briefly discuss simulation algorithms for sampling from the max-id models presented in Section~\ref{sec:model}.


In the case of finite exponent measure models in Section~\ref{sec:finitemeasure} with $\Lambda = cH$ given a probability measure $H$, simulation is straightforward provided sampling from $H$ is possible; recall the discussion in Section~\ref{sec:finitemeasure}. We now focus on the infinite measure models presented in Section~\ref{sec:infinitemeasure}.

When the infinite mean measure of the Poisson point process $\{R_i\}$ in Equation~\eqref{eq:msgen} is given by $\kappa_{\bm \gamma}[r,\infty)=1/r$, $r>0$,  yielding a max-stable process with unit Fr\'echet margins thanks to the representation given in Equation~\eqref{eq:ms}, one can simulate the Poisson process $\{R_i\}$ by setting $R_i=1/U_i$, $i=1,2,\ldots$, where $\{U_i\}$ denotes the points from a unit rate Poisson process on the positive half-line $(0,\infty)$. A well-known way to generate ordered points $0<U_1<U_2<\cdots$ from such a process is to sample a sequence $E_1,E_2,\ldots$ of unit exponential random variables and to set $U_i=\sum_{k=1}^i E_k$, $i=1,2,\ldots$. In this way, the Poisson points $R_i$ are decreasing, which can be exploited in Equation~\eqref{eq:ms}  to generate approximate simulations of max-stable processes by truncating the maximum with a predefined accuracy \citep{Schlather:2002}. Furthermore, \citet{Schlather:2002} shows that if $W(\bm s)<C<\infty$ almost surely, then only a finite (but random) number of points $R_i$ needs to be generated for {exact} simulation of $Z(\bm s)$ in Equation~\eqref{eq:ms}. More efficient exact sampling schemes for max-stable processes have recently been proposed \citep[see, e.g.,][]{Dieker.Mikosch:2015,Dombry.etal:2016,Liu.etal:2019}.

Similarly, to simulate a max-id process defined in \eqref{eq:msgen} with a general, infinite mean measure $\kappa_{\bm \gamma}$, we propose using more general parametric transformations $R_i=T_{\bm \gamma}(U_i)$ with $T_{{\bm \gamma}}:(0,\infty)\rightarrow(0,\infty)$ given as the inverse function of the tail measure $r\mapsto \kappa_{\bm \gamma}[r,\infty)$ for $r>0$. When the transformation $T_{\bm \gamma}$ is not tractable, approximate simulation of $\{R_i\}$ may be performed in a two-step procedure: first, we generate the number of points $N$ according to a Poisson distribution with mean $\kappa_{\bm \gamma}[\varepsilon,\infty)$ for small $\varepsilon>0$. Second, we generate $N$ independent points $R_1,\ldots,R_N$ with distribution $F(r)=\kappa_{\bm \gamma}[\varepsilon,r)/\kappa_{\bm \gamma}[\varepsilon,\infty)$. We then follow the same algorithm. For the max-id constructions based on a Gaussian process $W(\bm s)$ in Section~\ref{sec:infinitemeasure}, exact simulation is possible by taking advantage of their elliptical nature based on  Proposition~\ref{prop:ellrep} in the Appendix, which establishes a finite upper bound of the spectral process when considering a finite number of locations and a modified point process; see also  \citet{Thibaud.Opitz.2015} for the max-stable extremal-$t$ case.

\subsection{Numerical experiments}\label{sec:experiments}

Using our pairwise likelihood inference approach detailed in Section \ref{sec:inference}, we consider the models \eqref{eq:model2} and \eqref{eq:model3} with infinite mean measure, combined with a standard Gaussian process $W(\bm s)$ in \eqref{eq:msgen}. Evaluating and estimating the infinite mean measure and its derivatives in \eqref{Vmodel2}, \eqref{Vtaumodel2}, \eqref{VDmodel2} is numerically more challenging than the finite mean measure counterparts in \eqref{Vmodel1}, \eqref{Vtaumodel1}, \eqref{VDmodel1}, due to integration over functions involving multivariate Gaussian densities, so we just focus here on the more ``interesting'' infinite measure models. We simulate $n=50$ independent replicates of the max-id model at $D=10,15,20,30$ sites uniformly generated in $\calS=[0,1]^2$, considering $\beta=0$ (max-stable model) and $\beta=0.5,1,2$ (asymptotically independent max-id models), and taking an isotropic exponential correlation function $\rho(h)=\exp(-h/\lambda)$ with range parameter $\lambda=0.5$ for the process $W(\bm s)$. We then estimate unknown parameters $\bm\psi$ using the pairwise likelihood estimator based on \eqref{pairwise.likelihood} with binary weights and cut-off distance $\delta=0.5$. More precisely, we perform two separate maximizations of \eqref{pairwise.likelihood}, one with $\beta=0$ fixed and one with $\beta>0$ free, and then compare the maximized pairwise likelihoods to determine the optimal value of $\beta\in[0,\infty)$. 

We first fix $\alpha=1$ corresponding to the parsimonious model \eqref{eq:model3}, and simulate max-id data based on this model for $\beta=0,0.5,1,2$. We then estimate the model parameters $(\beta,\lambda)^{\rm T}$ by maximizing the pairwise log-likelihood function, and perform $1000$ Monte Carlo simulations to assess the finite-sample variability of the estimates. The estimates are reported as boxplots in Figure~\ref{fig:boxplots1}. The results suggest that the pairwise likelihood estimator performs quite well overall and improves slightly as the dimension $D$ increases.

\begin{figure}[t!]
	\centering
	\includegraphics[width=\linewidth]{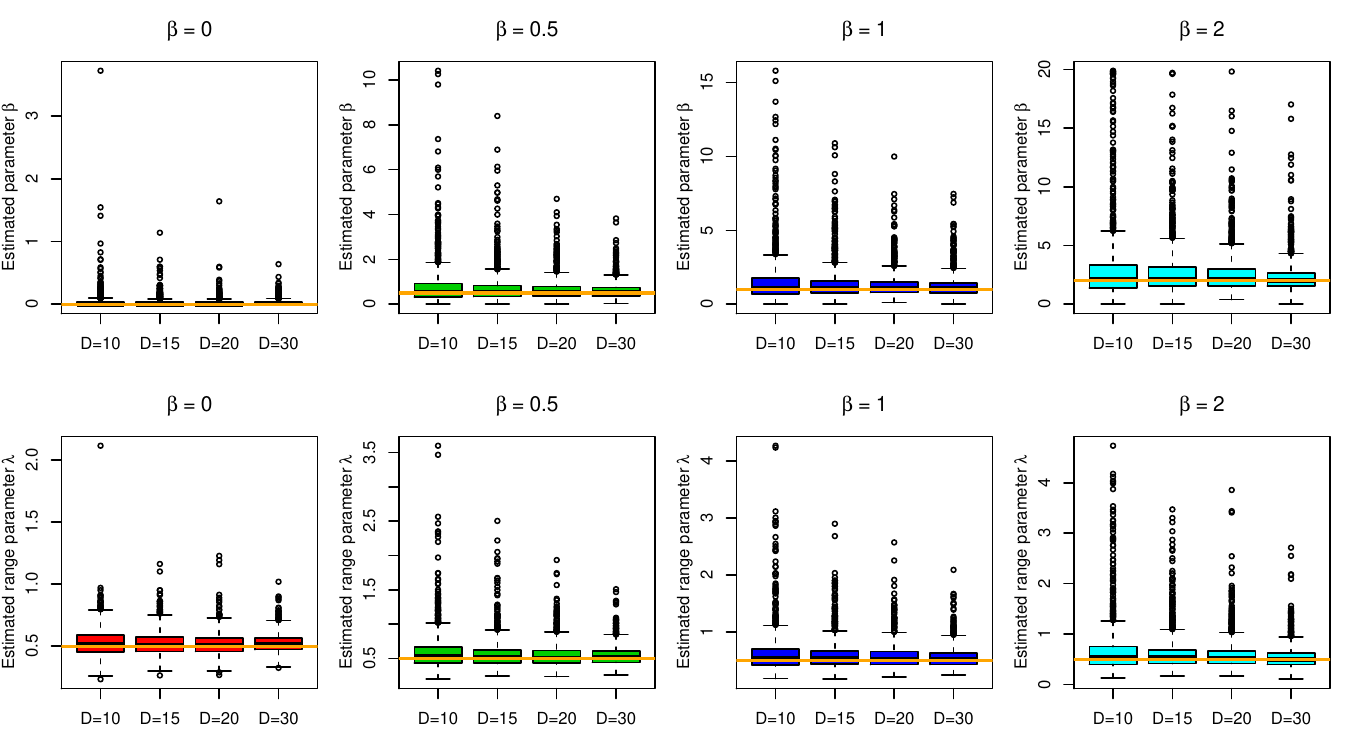}	
	\caption{Boxplots of estimated parameters $\hat\beta$ (top) and $\hat\lambda$ (bottom) for data simulated according to the parsimonious max-id model given in \eqref{eq:model3} with $\beta=0,0.5,1,2$ (left to right) at $D=10,15,20,30$ sites randomly generated in $[0,1]^2$, with $n=50$ independent replicates. The pairwise likelihood estimator maximizing \eqref{pairwise.likelihood} with binary weights and cut-off distance $\delta=0.5$ was used. True values are indicated by orange horizontal lines. $1000$ Monte Carlo simulations were performed.}
	\label{fig:boxplots1}
\end{figure}

We then consider the general max-id model \eqref{eq:model2} at $D=20$ locations and estimate all model parameters $(\alpha,\beta,\lambda)^{\rm T}$ by pairwise likelihood. The results, shown in Figure~\ref{fig:boxplots.sim2}, suggest that even in this more complex setting, where both $\alpha$ and $\beta$ have to be jointly estimated, our pairwise likelihood estimator performs reasonably well overall, although the variability increases as $\alpha$ or $\beta$ gets larger. This is likely due to moderate identifiability issues between $\alpha$ and $\beta$ in these cases.

\begin{figure}[h!]
	\centering
	\includegraphics[width=\linewidth]{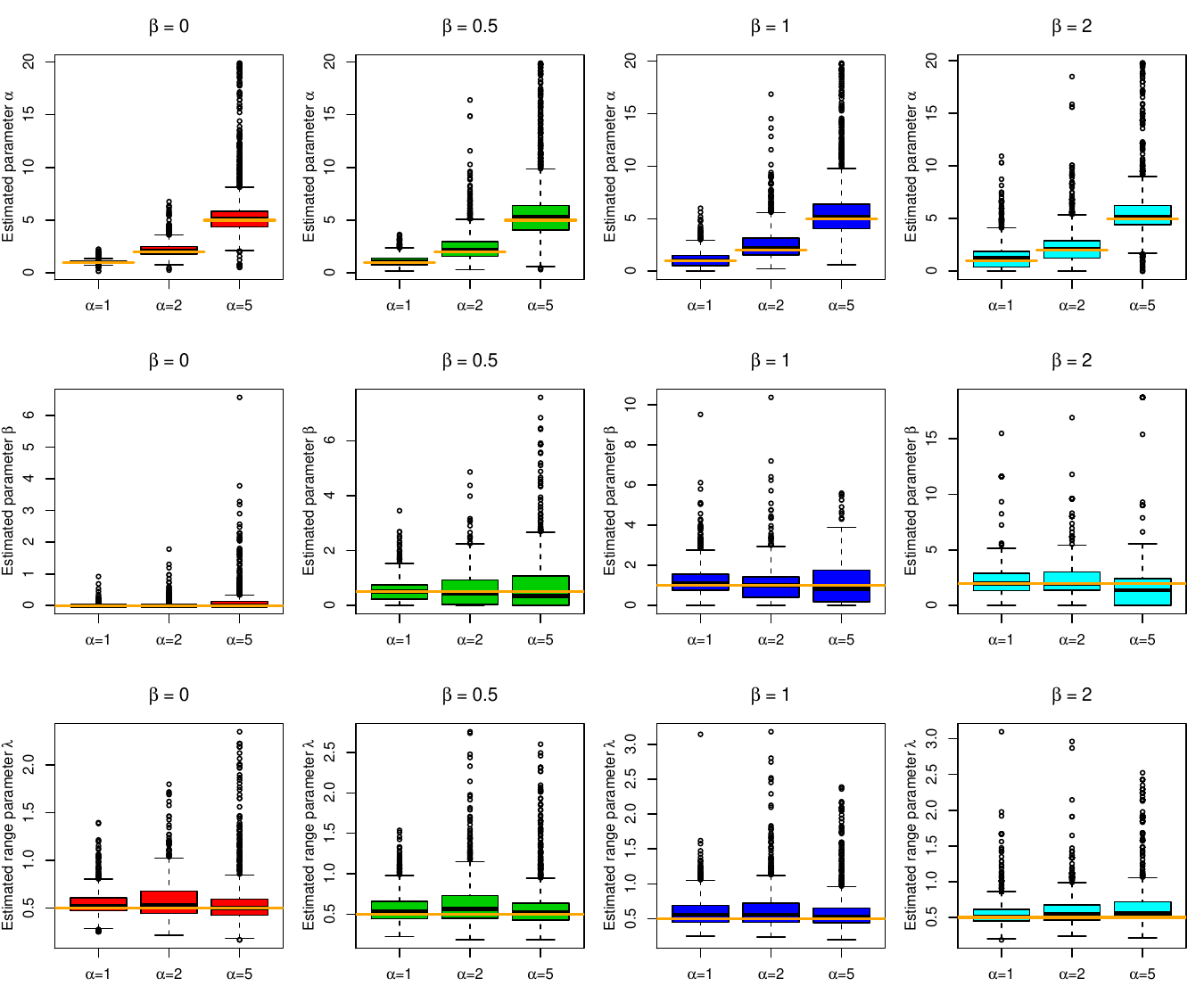}	
	\caption{Boxplots of estimated parameters $\hat\alpha$ (top) $\hat\beta$ (middle) and $\hat\lambda$ (bottom) for max-id data simulated according to the model given in \eqref{eq:model3} with $\alpha=1,2,5$ and $\beta=0,0.5,1,2$ (left to right panels) at $D=20$ sites randomly generated in $[0,1]^2$, with $n=50$ independent replicates. The pairwise likelihood estimator maximizing \eqref{pairwise.likelihood} where binary weights and cut-off distance $\delta=0.5$ was used. True values are indicated by orange horizontal lines. $1000$ Monte Carlo simulations were performed.}
	\label{fig:boxplots.sim2}
\end{figure}

To study the coverage probabilities of confidence intervals based on asymptotic normality and the sandwich covariance matrix \eqref{pairwise.mle} estimated using \eqref{sandwich}, we repeat our numerical experiments for the parsimonious max-id model \eqref{eq:model3}. Results are reported in the Supplementary Material. In general, coverage probabilities tend to underestimate the nominal level, which may be due to the fairly small sample size (here, $n=50$) or numerical instabilities in computing the sensitivity and variability matrices in \eqref{sandwich}. Estimating the variability matrix $K$ is indeed known to be challenging. Using a bootstrap procedure would be an obvious alternative for uncertainty assessment, but it would also be much more computationally expensive, making it impossible in our case with infinite measure models.

Finally, to assess the impact of wrongly assuming max-stability when the data are in fact max-id but asymptotically independent, we consider the same setting as above with $\alpha=1$ fixed (model \eqref{eq:model3}), $\beta=0,0.5,1,2$, exponential correlation function with range $\lambda=0.5$, $D=30$ sites in $[0,1]^2$ and $n=50$ replicates. Table~\ref{tab:PairLiks} reports the mean difference between the maximized pairwise log-likelihoods obtained with $\beta\geq0$ estimated from the data and with $\beta=0$ held fixed (Schlather max-stable model). We also report (i) the true extreme event probability $p(u)=1-{\pr\{U(\bm s_1)\leq u,\ldots,U(\bm s_{36})\leq u\}}$ that the max-id process $U(\bm s)$ (transformed to uniform margins) experiences an exceedance of the marginal level $u=0.99$ in at least one of the $36$ grid points $\bm s_1,\ldots,\bm s_{36}\in\{0,0.2,\ldots,1\}^2$, (ii) its mean estimate $\hat p_k(u)=R^{-1}\sum_{r=1}^R \hat p_{k;r}(u)$ (where $\hat p_{k;r}(u)$ is the estimate from the $r$-th simulation) based on the model with $\beta=0$ fixed ($k=1$) and $\beta\geq0$ estimated ($k=2$), and (iii) the mean relative errors $E_k=R^{-1}\sum_{r=1}^R|\hat p_{k;r}(u)- p(u)|/p(u)$ ($k=1,2$). Note that $0.01\leq p(u)\leq 1-0.99^{36}\approx 0.30$, with the lower and upper bounds corresponding to perfect dependence and independence, respectively. The results of Table~\ref{tab:PairLiks} show that incorrectly assuming max-stability leads to strongly biased joint exceedance probability estimates under asymptotic independence, while our proposed max-id models are flexible enough to give reliable results in all cases.

\begin{table}[bt]
	\centering
	\caption{Performance of max-stable and non-max-stable models constructed from \eqref{eq:msgen}, with Gaussian process $W(\bm s)$ and Poisson point process $\{R_i\}$ with measure \eqref{eq:model3} and $\beta=0,0.5,1,2$. The case $\beta=0$ corresponds to the max-stable Schlather model, while $\beta>0$ defines new models that are max-id but not max-stable. The first row (from top) reports the difference $\log\widehat{PL}_2-\log\widehat{PL}_1$ in maximized pairwise log likelihood under the model with $\beta\geq0$ estimated and with $\beta=0$ fixed, averaged over $R=1000$ simulations. The second row reports the true probability $p(u)=1-\pr\{U(\bm s_1)\leq u,\ldots,U(\bm s_{36})\leq u\}$ with $\bm s_1,\ldots,\bm s_{36}\in\{0,0.2,\ldots,1\}^2$ and $u=0.99$. The third and fourth rows report the mean estimate $\hat p_k(u)$ based on the model with $\beta=0$ fixed $(k=1)$ and $\beta\geq 0$ estimated ($k=2$). Values in parentheses are mean relative errors $E_k$ ($k=1,2$). Details are described in Section \ref{sec:experiments}.}
	\ \\ 
	\begin{threeparttable}
		\begin{tabular}{rcccc}
			\hline 
			& \thead{$\beta=0$} & \thead{$\beta=0.5$} & \thead{$\beta=1$} & \thead{$\beta=2$} \\[5pt]
			$\log\widehat{PL}_2-\log\widehat{PL}_1$ & $7.3$ & $113.0$ & $157.7$ & $178.5$ \\ 
			True tail probability $p(u)$ & $0.041$ & $0.076$ & $0.097$ & $0.122$ \\
			$\hat p_1(u)$ with $\beta=0$ fixed & $0.041$ ($3.0\%$) & $0.045$ ($40.8\%$) & $0.047$ ($51.0\%$) & $0.050$ ($59.2\%$) \\
			$\hat p_2(u)$ with $\beta\geq0$ estimated & $0.044$ ($7.4\%$) & $0.076$ ($8.6\%$) & $0.095$ ($6.3\%$) & $0.120$ ($5.5\%$)
		\end{tabular}
	\end{threeparttable}
	\label{tab:PairLiks}
\end{table}


\section{Analysis of Dutch wind gusts}\label{sec:application}

\subsection{Data description}

Extremes in daily wind gusts from the Netherlands (30 monitoring stations with 3241 records of daily maxima from November 11, 1999, to November 13, 2008)  were analysed by \citet{Opitz16} using an asymptotically independent Laplace random field model for high threshold exceedances. We reanalyse these data by adopting a block maximum approach. We focus on months October--March, which experience the strongest wind gusts. The maps in Figure~\ref{fig:area} show the study area with altitude, measurement sites, and also the pairs of stations that we use in the pairwise likelihood estimation procedure. Altitude variation is very slight over the Dutch territory, such that we do not consider it as a potentially informative covariate. To study wind gust extremes on various time scales, we compute daily, weekly, monthly and yearly block maxima, which yields $n_1=1594$, $n_2=220$, $n_3=52$ and $n_4=8$ maxima per site, respectively.

\begin{figure}[t!]
	\centering
	\includegraphics[width=.8\linewidth]{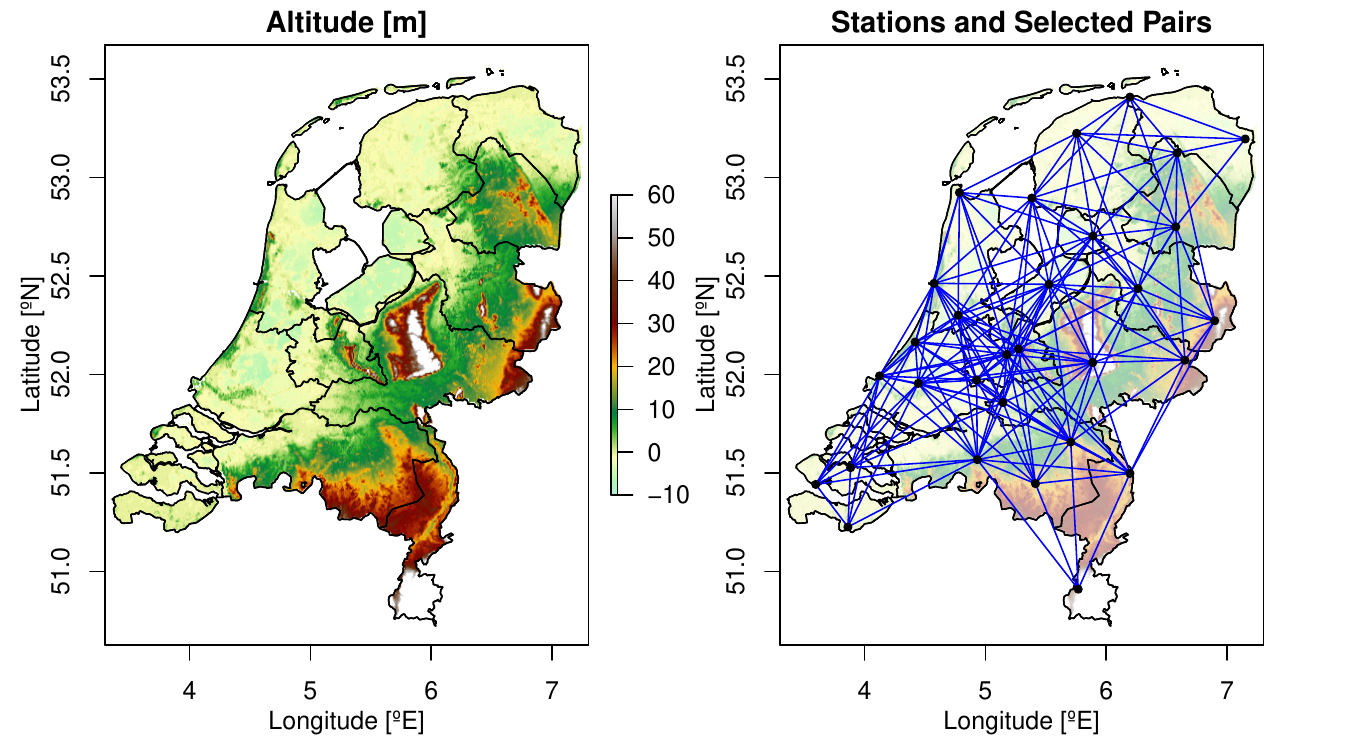}	
	\caption{Map of study area of Netherlands wind gust speed application. Left: study area with altitude. Right: Observation sites and pairs selected for pairwise likelihood inference.}
	\label{fig:area}
\end{figure}

\subsection{Marginal modeling}
We model marginal distributions separately at each location, but jointly across time scales to borrow strength across time series when few observations are available. Specifically, let $\tilde z_{ij;k}$ denote the $i$-th observation at the $j$-th monitoring station for the $k$-th time scale. We assume that the daily maxima, $\tilde z_{i_1j;1}$, follow a generalized extreme-value (GEV) distribution $\tilde G_{j;1}(z)$ with location, scale and shape parameters $\mu_j\in\mathbb{R}$, $\sigma_j>0$ and $\xi_j\in\mathbb{R}$, respectively, and that maxima for larger time scales, $\tilde z_{i_kj;k}$ ($k=2,3,4$), are also GEV-distributed according to
$$\tilde G_{j;k}(z)=\tilde G_{j;1}(z)^{b_k\theta_j}=\exp\left\{-\left(1+\xi_j{z-[\mu_j-\sigma_j\{1-(b_k\theta_j)^{\xi_j}\}/\xi_j]\over\sigma_j(b_k\theta_j)^{\xi_j}}\right)_+^{-1/\xi_j}\right\},\qquad k=2,3,4,$$
where $a_+=\max(a,0)$, $b_2=7$, $b_3=30$ and $b_4=182$ are (approximate) block sizes for weekly, monthly and yearly data, respectively, and $\theta_j\in(0,1]$ is the \emph{extremal index} specific to each station, representing the proportion of independent extremes within each block. This univariate model utilizes the common summary of temporal extremal dependence without the need to specify a full multivariate distribution for the daily observations in a block. For each site $j$, we then maximize a composite likelihood constructed by multiplying the univariate likelihood contributions of all maximum values $\tilde z_{i_kj;k}$, $i_k=1,\ldots,n_k$, $k=1,2,3,4$. 
Figure~\ref{fig:marginswinds} displays histograms of the four estimated parameters $(\hat\mu_j,\hat\sigma_j,\hat\xi_j,\hat\theta_j)^{\rm T}$ for all sites.
\begin{figure}[t!]
	\centering
	\includegraphics[width=\linewidth]{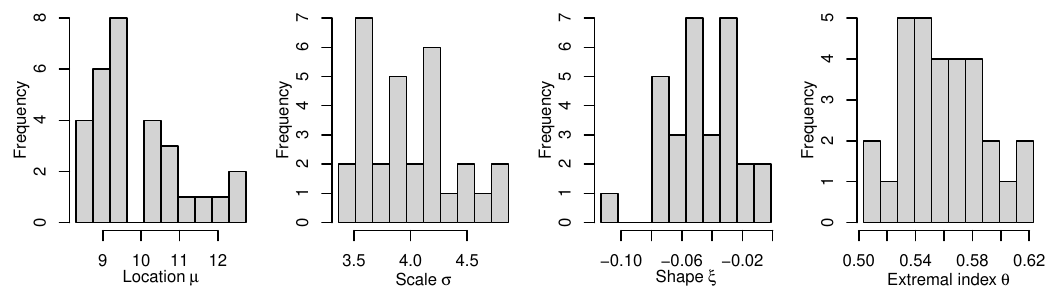}	
	\caption{Histograms of estimated marginal GEV parameters for all monitoring stations.}
	\label{fig:marginswinds}
\end{figure}
In particular, the estimated shape parameters are all negative, suggesting short bounded tails, and the extremal index roughly lies in the interval $[0.5,0.6]$, revealing some mild extremal dependence in the daily time series. Quantile-quantile plots (shown in the Supplementary Material) suggest that the fits are good overall at all sites and time scales. In the following analyses and modeling steps, we standardize the data to the ${\rm Unif}(0,1)$ scale (recall Section~\ref{standardization}) and treat the transformed margins as perfectly uniform. 

\subsection{Max-id dependence modeling}

The special dependence structure of componentwise maxima suggests that these data might be well described over space by a max-id process. Although we expect the max-stability property to be reasonable for large block sizes such as for yearly maxima, it might be dubious for small block sizes such as for daily maxima; recall Figure~\ref{fig:extcoefWinds}. Indeed, the pointwise confidence bounds of the level-dependent extremal coefficients have no overlap for the lowest and highest quantiles of weekly maxima.

To corroborate this first analysis and further explore whether or not the max-stability assumption is appropriate in our context, we have studied the distribution of the weekly and the monthly maxima, taken over all of the 30 measurement locations. We do not consider here the daily and yearly time scales, since temporal dependence is still moderately strong at the daily scale, while any test procedure would lack power at the yearly scale because of the small sample size. 
We have first transformed the marginal distributions to the unit Fr\'echet scale at each site separately using the empirical probability integral transform. Next, we have extracted the maximum taken over all  $D=30$ locations. Under max-stability, such spatial $D$-variate maxima must follow a Fr\'echet distribution with scale parameter equal to the corresponding extremal coefficient for the $D$ locations. Using a maximum likelihood estimate of this coefficient, we have performed a Kolmogorov-Smirnov test to verify whether the fitted Fr\'echet distribution (obtained under max-stability) has a good match with the corresponding empirical distribution. Resulting $p$-values are $0.03$ for weekly data, and $0.18$ for monthly data. This again casts strong doubt on the max-stability assumption for weekly data, while the situation for monthly maxima (with a moderate number of $52$ observations) is less clear.

Treating the estimated margins as exact, we then fit several max-id models with finite and infinite exponent measure through the pairwise likelihood approach. For the infinite exponent measure models with structure given by   \eqref{eq:msgen}, we use an isotropic Gaussian process $W(\bm s)$ with powered exponential correlation function $\rho(h)=\exp\{-(h/\lambda)^\nu\}$, $h\geq 0$, with range $\lambda>0$ and smoothness $\nu\in(0,2]$, and we use the Poisson point process mean measure proposed in \eqref{eq:model2}, which depends on the parameters $\alpha>0$ and $\beta\geq0$. We consider four infinite measure max-id models, fitted separately for each time scale: $\alpha=1$ and $\beta=0$ both fixed (Schlather max-stable model); $\alpha>0$ free and $\beta=0$ fixed (extremal-$t$ max-stable model with $\alpha$ degrees of freedom); $\alpha=1$ fixed and $\beta\geq0$ free (parsimonious model \eqref{eq:model3}); $\alpha>0$ and $\beta\geq0$ both free (general unconstrained max-id model \eqref{eq:model2}). 
For finite measure models, we consider the construction \eqref{eq:finitemeasuremodel} from Section~\ref{sec:finitemeasure} with exponent measure $cH$, $c>0$, where $H$ is defined as the standard Gaussian or Student-$t$ distribution with the powered exponential correlation function as defined above. Additionally, we also consider three reference models: the max-id process of \citet{Padoan.2013} with two parameters (range $\lambda>0$ and smoothness $\nu\in(0,2]$ in the variogram model $(h/\lambda)^{\nu}$), as well as two classical copula models corresponding to Gaussian and Student-$t$ processes with a parametrization comparable as above, which do not possess specific properties for modeling maxima. 

All models were estimated by maximizing the pairwise likelihood \eqref{pairwise.likelihood}, considering all pairs of locations less than $\delta=100$km apart (i.e., keeping roughly $40\%$ of possible pairs). Selected pairs contributing to the pairwise likelihood formula are shown in the graph on the right-hand display of  Figure~\ref{fig:area}. For the most complex max-id model with infinite exponent measure, a single fit took about 30min, 3.5h, 15h, and 4 days for yearly, monthly, weekly and daily maxima, respectively, on a workstation with 20 cores exploited for computing the pairwise likelihood in parallel. Fitting the models with finite exponent measure is much faster since the bivariate Gaussian or Student-$t$ distribution functions do not appear in functions to be integrated over. Table~\ref{tab:resultswinds} reports the parameter estimates for all fitted models, as well as the CLIC$^\star$ values (with matrices in \eqref{sandwich} adjusted based on the effective sample size to account for the moderate temporal dependence at the daily scale). 

\begin{table}[bt]
{\small 
	\centering
	\caption{Summary of fitted models for Dutch wind gust maxima. Rows correspond to different time scales. Panels are numbered from top to bottom in the following description. 
		Results are given for finite exponent measure model using Gaussian (top panel, left) and Student $t$  (top panel, right)	distribution for  $H$, with estimated parameters $\alpha$ (degrees of freedom, fixed to $\infty$ for the Gaussian max-id model), $c$ (effective block size), $\log(\lambda)$ (log range) and $\nu$ (smoothness), and for infinite exponent measure  models built from \eqref{eq:msgen} and \eqref{eq:model2}, with parameters $\alpha$, $\beta$, $\log(\lambda)$ and $\nu$  estimated for the Schlather max-stable model with $\alpha=1$, $\beta=0$ (second panel, left),  the extremal-$t$ max-stable model with $\beta=0$ (second panel, right), the parsimonious max-id model \eqref{eq:model3} with $\alpha=1$ (third panel, left), and the unconstrained max-id model \eqref{eq:model2} (third panel, right). The bottom panel reports results for Gaussian and Student-$t$ copula models (bottom panel, left and middle), with parameter notation as for their maxi-id counterparts in the top panel, and for the \citet{Padoan.2013} model (bottom panel, right) with range and shape parameter.   Model fitting was based on \eqref{pairwise.likelihood} with binary weights and pairs less than $100$km apart. The scaled composite likelihood information criterion (CLIC$^\star$) is also reported. For a fixed dataset, a lower  CLIC$^\star$ indicates a better model. For each time scale, the best model (lowest CLIC$^\star$) is indicated in bold font.}
	\ \\
	
	\begin{threeparttable}
		\begin{tabular}{r rrrr rr rrrrr}
			\hline
			\multicolumn{1}{c}{} &\multicolumn{5}{c}{Gaussian max-id model}&   \ \  &	\multicolumn{5}{c}{Student $t$ max-id model}\\
			& $\alpha$ & $\hat{c}$ & $\log(\hat\lambda)$ & $\hat\nu$ & CLIC$^\star$ &  \ \ & $\hat\alpha$ & $\hat{c}$ & $\log(\hat\lambda)$ & $\hat\nu$ & CLIC$^\star$\\[5pt]
			Daily& $\infty$ &40.8&9.58&0.7&-44726.4&\  \  &7.18&11.7&8.75&0.7&$\bm{-45281.5}$\\ 
			Weekly& $\infty$ &20.4&11.56&0.43&-5438.4&\  \  &6.31&13.3&8.4&0.65& $\bm{-5494.2}$ \\ 
			Monthly& $\infty$ &32.7&10.17&0.45&-797&\  \  &3.36&12.4&7.81&0.56&$\bm{-838.9}$\\ 
			Yearly& $\infty$ &7.9&13.12&0.14&-56.2&\  \  &1.01&45.5&5.1&0.66&$\bm{-73.9}$\\ 
		\end{tabular}
		\ 
		\begin{tabular}{r rrrr rr rrrrr}
			\hline
			\multicolumn{1}{c}{} &\multicolumn{5}{c}{Schlather max-stable model}&   \ \  &	\multicolumn{5}{c}{Extremal-$t$ max-stable model}\\
			& $\alpha$ & $\beta$ & $\log(\hat\lambda)$ & $\hat\nu$ & CLIC$^\star$ & \ \ &	$\hat\alpha$ & $\beta$ & $\log(\hat\lambda)$ & $\hat\nu$ & CLIC$^\star$ \\[5pt]
			Daily&1&0&8.93&1.58&-42418.9&\  \  &4.7&1&12.07&1.59&-44339.2\\ 
			Weekly&1&0&8.63&1.48&-5090.8&\  \  &4.57&1&12.16&1.49&-5366.4\\ 
			Monthly&1&0&7.02&1.52&-789.3&\  \  &3.29&1&9.69&1.53&-825.7\\ 
			Yearly&1&0&6.11&1.4&-71.1&\  \  &1.22&1&6.76&1.38&-69.2\\ 
		\end{tabular}
		\ 
		\begin{tabular}{r rrrr rr rrrrr}
			\hline
			\multicolumn{1}{c}{} & \multicolumn{5}{c}{Parsimonious max-id model~\eqref{eq:model3}}& \ \ &	\multicolumn{5}{c}{General max-id model~\eqref{eq:model2}}\\\
			& $\alpha$ & $\hat\beta$ & $\log(\hat\lambda)$ & $\hat\nu$ & CLIC$^\star$ & \ \ &	$\hat\alpha$ & $\hat\beta$ & $\log(\hat\lambda)$ & $\hat\nu$ & CLIC$^\star$ \\[5pt]
			Daily&1&1.79&10.86&1.59&-44907.8&\  \  &2.69&1.4&11.12&1.61&-45084.1\\ 
			Weekly&1&1.68&9.49&1.64&-5436.3&\  \  &2.57&1.29&11.33&1.49&-5459.8\\ 
			Monthly&1&1.37&8.8&1.53&-830.6&\  \  &2.28&0.58&8.54&1.62&-832.1\\ 
			Yearly&1&0.17&7.15&1.3&-70.8&\  \  &1.13&0.02&6.39&1.42&-71.4\\ 
		\end{tabular}
		\ 
		\begin{tabular}{r rrrr r rrrr r rrr}
			\hline
			\multicolumn{1}{c}{} &\multicolumn{4}{c}{Gaussian copula} &   \ \  &	\multicolumn{4}{c}{Student-$t$ copula}&   \ \  &	\multicolumn{3}{c}{Padoan model}\\
			& $\alpha$ & $\log(\hat\lambda)$ & $\hat\nu$ & CLIC$^\star$ &  \ \ & $\hat\alpha$ & $\log(\hat\lambda)$ & $\hat\nu$ & CLIC$^\star$ &  \ \ & $\log(\hat\lambda)$ &  $\hat \nu$ & CLIC$^\star$ \\[5pt]
			Daily&Inf&9.88&0.43&-44710.8&\  \  &9.15&8.97&0.52&-45228.6&\  \  &7.74&0.59&-28176.9\\ 
			Weekly&Inf&9.82&0.37&-5420.9&\  \  &9.61&9.32&0.4&-5483.9&\  \  &7.62&0.47&-3181\\ 
			Monthly&Inf&8.12&0.39&-784.4&\  \  &6.24&8&0.41&-818.6&\  \  &5.96&0.57&-422\\ 
			Yearly&Inf&12.75&0.09&-48.4&\  \  &3.02&7.75&0.25&-63.8&\  \  &5.1&0.29&-27.3\\ 
		\end{tabular}
	\end{threeparttable}
	\label{tab:resultswinds}
}
\end{table}

Throughout,  estimated range parameters $\hat\lambda$ are large and suggest that spatial dependence is quite strong. In contrast, the estimated smoothness parameters $\hat\nu$ are often around $0.5$, which shows that there is small-scale variability. We first compare the max-stable models with the max-id extensions based on the spectral construction.
The parameter estimates for the Schlather model suggest that max-stability might be dubious for these data: $\hat\lambda$ and $\hat\nu$ are both decreasing with larger time scales, suggesting a weakening of spatial dependence as wind gusts become more extreme. The results for the extremal-$t$ model seem to confirm this, although the parameter $\alpha$ has the opposite effect. More affirmative conclusions can be drawn by comparing the fits of the max-id models~\eqref{eq:model2} and \eqref{eq:model3} with their max-stable counterparts obtained by fixing $\beta=0$. For yearly maxima, $\hat\beta$ is fairly close to zero in both non-max-stable models. For model~\eqref{eq:model2}, the $95\%$ confidence interval for $\beta$ (not shown) includes $0$, suggesting that the max-stable assumption is reasonable in this case. Furthermore, the CLIC$^\star$ values are all very similar for yearly maxima, suggesting that the parsimonious Schlather max-stable model might be appropriate. By contrast, for daily, weekly and monthly maxima, the estimates of $\hat\beta$ in non-max-stable models are always significantly different from zero at the $95\%$ confidence level. Interestingly, for the max-id models~\eqref{eq:model2} and \eqref{eq:model3}, $\hat\beta$ decreases monotonically to zero as the block size gets larger. This implies that these block maxima tend to be closer to a max-stable process as the block size increases, while $\beta$ provides extra flexibility at sub-asymptotic regimes characterized by small block sizes. 

In comparison with the infinite measure models, the max-id models with finite exponent measure, based on the Gaussian or Student-$t$ processes in their construction, are both very competitive. In particular, the Student-$t$ max-id model shows the best performance overall, as measured in terms of CLIC$^\star$, across all models and block sizes. The parameter $c$, corresponding to the expected number of independent replicates used to compute maxima, is always estimated to lie above $7.9$, such that the singular mass arising in the density is negligible.

As for the reference models, both the classical Gaussian and Student-$t$ copula models succeed in outperforming some of the max-stable and max-id models for small block sizes only (daily, weekly), but fail for larger block sizes (monthly, yearly). Interestingly, the model of \citet{Padoan.2013} is by far and consistently the worse in terms of CLIC$^\star$, which may be due to the relatively fast joint tail decay rate in this model, inappropriate for our wind gust data.

Overall, the CLIC$^\star$ values are strongly supporting the non-max-stable models, especially for daily and weekly maxima. In particular, the Student-$t$ copula and the Student-$t$ max-id models outperform the extremal-$t$ model. Both models show decreasing strength of extremal dependence (recall Figure~\ref{fig:ExtrCoefsModel}) while tending towards the asymptotically dependent extremal-$t$ limit process, and they provide a better fit at finite quantile levels for the wind gust data. 

\begin{figure}[t!]
	\centering
	\includegraphics[width=\linewidth]{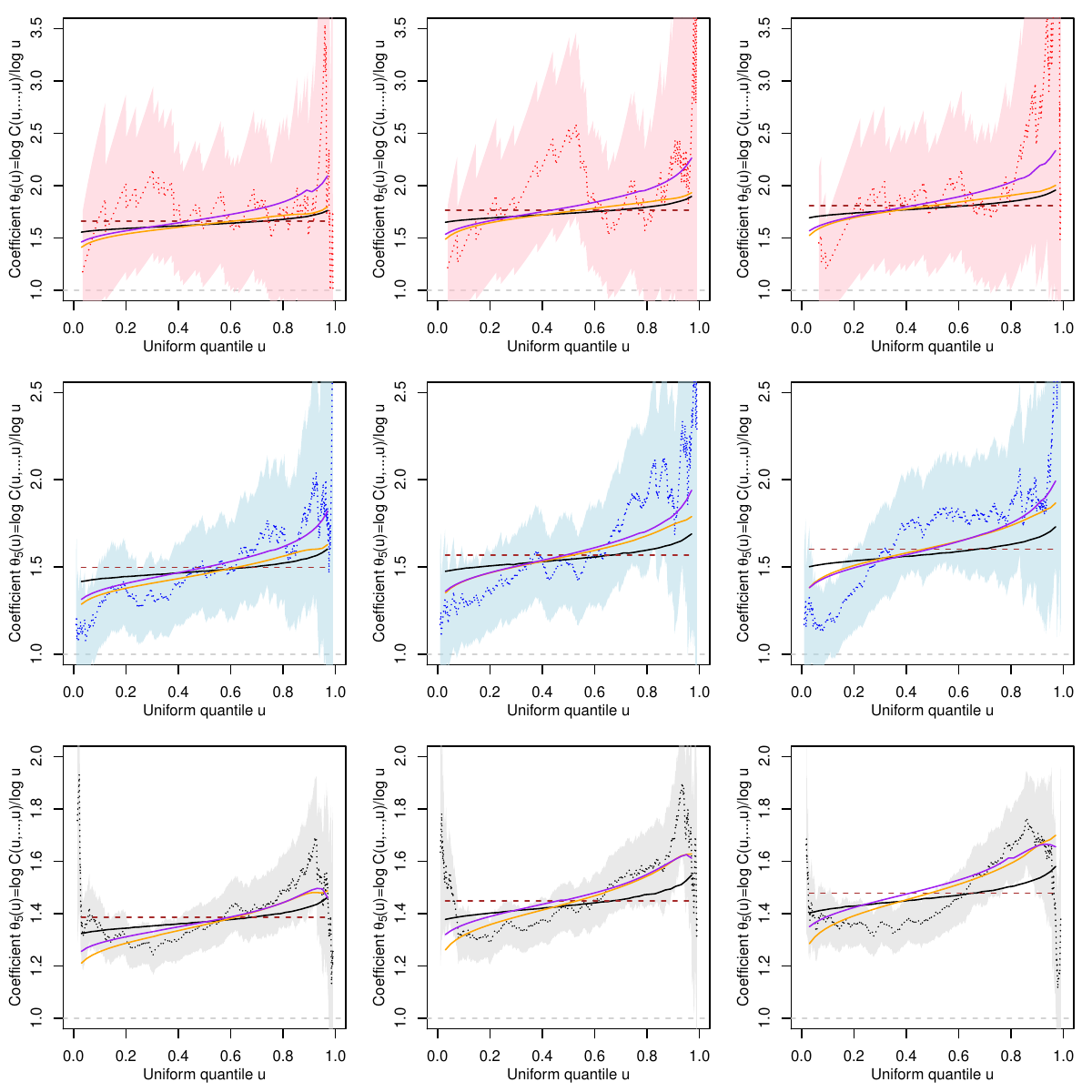}
	\caption{Level-dependent extremal coefficients $\theta_D(u)$ defined in \eqref{eq:depcoef} for monthly (top), weekly (middle) and daily (bottom) maxima for three subsets of sites of dimension $D=5$ at average distance $33$km (left), $57$km (middle) and $74$km (right). Dotted lines are empirical estimates with $95\%$ pointwise confidence bands (shaded areas).  Solid lines correspond to fitted extremal coefficients for the Gaussian (purple) and Student-$t$ (orange) max-id finite measure models, and the infinite exponent measure max-id models \eqref{eq:model2} (black). Dashed lines report fitted extremal coefficients of the extremal-$t$ max-stable model.}
	\label{fig:extrCoef5}
\end{figure}

To visually assess the goodness-of-fit of the estimated max-id models, Figure~\ref{fig:extrCoef5} displays empirical and fitted level-dependent extremal coefficients $\theta_D(u)$ for daily, weekly, and monthly maxima, for three subsets of sites of dimension $D=5$ at various distances. We compare the fits of the max-stable extremal-$t$ model with the max-id model \eqref{eq:model2}, and with the Gaussian and Student-$t$ max-id models. Unlike the quite rigid extremal-$t$ model, our  new max-id models can capture weakening of dependence. In particular, the Gaussian and Student-$t$ max-id models have relatively steep slopes and follow the empirical curve quite well. More flexible max-id models could be designed for example by allowing for spatial anisotropy or by choosing processes $W_i(\bm s)$ in the spectral construction \eqref{eq:msgen} that depend on the Poisson points $R_i$, which might further improve results. 

To summarize, the CLIC$^\star$ values, the estimated parameters (Table~\ref{tab:resultswinds}), and the visual diagnostics strongly suggest that our proposed max-id models outperform their max-stable extremal-$t$ counterpart for small and moderate block sizes, as well as classical  copula models from geostatistics for large block sizes.


\section{Discussion}\label{sec:discussion}
We have proposed new families of sub-asymptotic spatial models for block maxima, that are more flexible than max-stable processes while remaining max-infinitely divisible. Specifically, we have developed general constructive approaches for max-id processes based on finite and infinite exponent measures, which possess appealing tail properties and remain in the ``neighborhood'' of popular max-stable models. In particular, our proposed infinite measure models extend the spectral characterization of max-stable processes, using an infinite number of independent replicates of a random process of the form $RW(\bm s)$ to build asymptotically independent max-id structures. In our examples, we took the process $W(\bm s)$ to be Gaussian and \emph{independent} of the random scale $R$ for simplicity. Flexible dependence structures may also be obtained by letting $W(\bm s)$ depend on $R$ in such a way that the correlation strength of the $W$ process decreases as $R$ increases, but likelihood calculations may be further complicated. Regarding simulation of max-id processes, the exact simulation algorithm for max-stable processes developed \citet{Dombry.etal:2016} and based on the conditional simulation of Poisson points in the case of infinite intensity measures, could be adapted. 

Another interesting research direction would be to formulate a Bayesian max-id model based on \eqref{eq:finitemeasuremodel} (or \eqref{eq:model2}), in which a prior distribution for the parameter $c>0$ (or $\beta\geq0$) would be chosen to shrink max-id models towards their ``simpler'' max-stable counterparts arising for $c\to\infty$ (or $\beta=0$). One could take advantage of the concept of penalized-complexity (PC) priors \citep{Simpson.al:2014}; see the PC prior derivation of the tail index in \citet{Opitz.al.2018} for an extreme-value example. Alternatively, we could assume that the prior for $1/c$ (or $\beta$) is a mixture between a continuous distribution on $(0,\infty)$ and a point mass at zero. In the Bayesian context, the recent work of \citet{Bopp.al.2020} proposes a hierarchical max-id model extending the max-stable \citet{Reich.Shaby:2012a} model.

\section{Supplementary Material}
Supplementary Material  is available in the Appendix section~\ref{Appendix:SM} below and includes an illustration of how negative correlation in the bivariate Gaussian distribution impedes the max-id property,  details on the coefficient of tail dependence used to summarize asymptotic independence structures, some auxiliary results for our Gaussian-based infinite mean measure models, estimated coverage probabilities for confidence intervals derived from the asymptotic behavior of the maximum pairwise likelihood estimator, and QQ-plots for the marginal model in the data application to Dutch wind gusts.

\appendix

\section{Results for the Gaussian-based constructions}
\label{sec:proofs}

\begin{proposition}[Elliptical point process representation]\label{prop:ellrep}\ 
	Consider the Poisson process with points  $\{R_iW_{s^\star,i},\ i=1,2,\ldots\}$, where $W_{s^\star,i}$ are independent copies of a standard Gaussian vector defined over a configuration of sites $s^\star=\{s_1,\ldots,s_D\}$ with correlation matrix $\Sigma_{s^\star}$, and  $\{R_i\}$ are the points of a Poisson process with intensity measure $\kappa_{\gamma}^{[k]}$ ($k=1,2,3$), given  in \eqref{eq:model1}, \eqref{eq:model2} and \eqref{eq:model3} respectively, with  independence between $W_{s^\star,i}$ and $\{R_i\}$.
	Then the  Poisson process  has elliptical representation 
	\begin{equation}\label{eq:ppell}
	\{R_iW_{s^\star,i}, i=1,2,\ldots \} \Deq \{\tilde{R}_i\Sigma_{s^\star}^{1/2}S_i,\ i=1,2,\ldots \}
	\end{equation}
	where $\Deq$ means equality in distribution, $\{\tilde{R}_i,\ i=1,2,\ldots\}\Deq\{R_iR_{W,i},\ i=1,2,\ldots\}$ is a Poisson process and $R_{W,i}>0$ are independent random variables following the chi-distribution $F_{\chi_D}$  with $D$ degrees of freedom, 
	independently of random vectors $S_i$ uniformly distributed over the unit sphere $\mathcal{S}_{D-1}$. The intensity measure $\tilde{\kappa}_{\gamma}^{[k]}$  of $\{\tilde{R}_i,\ i=1,2,\ldots\}$ is characterized through its tail measure
	\begin{equation}\label{eq:elltailmeasure}
	\tilde{\kappa}_{\gamma}^{[k]}([z,\infty))=\int_0^\infty \kappa_{\gamma}^{[k]}([z/r,\infty)) f_{\chi_D}(r)\,\mathrm{d}r=\int_0^\infty \overline{F}_{\chi_D}(z/r) f^{[k]}(r)\,\mathrm{d}r, \qquad z>0.
	\end{equation}
\end{proposition}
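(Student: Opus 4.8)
The plan is to combine the matrix square-root decomposition of a correlated Gaussian vector with the polar (elliptical) decomposition of an isotropic Gaussian, and then to read off the law of the resulting point configuration through the marking and mapping theorems for Poisson processes. First I would write $W_{s^\star,i}=\Sigma_{s^\star}^{1/2}Z_i$, where the $Z_i$ are independent standard Gaussian vectors with identity covariance. Every isotropic Gaussian vector admits the polar decomposition $Z_i=R_{W,i}S_i$ with $R_{W,i}=\|Z_i\|\sim F_{\chi_D}$, $S_i=Z_i/\|Z_i\|$ uniform on $\mathcal{S}_{D-1}$, and $R_{W,i}$ independent of $S_i$. Substituting gives
\[
R_iW_{s^\star,i}=R_i\Sigma_{s^\star}^{1/2}R_{W,i}S_i=(R_iR_{W,i})\,\Sigma_{s^\star}^{1/2}S_i,
\]
so that, setting $\tilde{R}_i:=R_iR_{W,i}$, each point already has the required form $\tilde{R}_i\Sigma_{s^\star}^{1/2}S_i$; it then remains only to identify the law of the point process $\{(\tilde{R}_i,S_i)\}$.

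Next I would invoke the marking theorem. Since the pairs $(R_{W,i},S_i)$ are independent marks attached independently to the points of $\{R_i\}\sim\textsc{PPP}(\kappa_\gamma^{[k]})$, the marked process $\{(R_i,R_{W,i},S_i)\}$ is Poisson on $(0,\infty)\times(0,\infty)\times\mathcal{S}_{D-1}$ with intensity $\kappa_\gamma^{[k]}(\mathrm{d}r)\otimes f_{\chi_D}(\rho)\,\mathrm{d}\rho\otimes U(\mathrm{d}s)$, where $U$ is the uniform law on the sphere. Applying the mapping theorem to the measurable map $(r,\rho,s)\mapsto(r\rho,s)$ then shows that $\{(\tilde{R}_i,S_i)\}$ is Poisson; because the map leaves the sphere coordinate untouched and the intensity factorises, the image intensity also factorises as $\tilde{\kappa}_\gamma^{[k]}(\mathrm{d}t)\otimes U(\mathrm{d}s)$. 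Hence $\{\tilde{R}_i\}$ is a Poisson process with intensity $\tilde{\kappa}_\gamma^{[k]}$, the directions $S_i$ are uniform on $\mathcal{S}_{D-1}$, and the two are independent, which is exactly the representation \eqref{eq:ppell}.

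To obtain the tail measure \eqref{eq:elltailmeasure} I would compute the pushforward mass of $[z,\infty)$ directly. Writing $f^{[k]}(r)=-\mathrm{d}\kappa_\gamma^{[k]}([r,\infty))/\mathrm{d}r$ for the intensity of $\{R_i\}$, the factorised form above gives
\[
\tilde{\kappa}_\gamma^{[k]}([z,\infty))=\int_0^\infty\!\!\int_0^\infty\mathbf{1}\{r\rho\ge z\}\,f^{[k]}(r)\,f_{\chi_D}(\rho)\,\mathrm{d}\rho\,\mathrm{d}r .
\]
Performing the $\rho$-integral first collapses the indicator into $\overline{F}_{\chi_D}(z/r)$ and yields the right-hand expression in \eqref{eq:elltailmeasure}, whereas performing the $r$-integral first collapses it into $\kappa_\gamma^{[k]}([z/\rho,\infty))$ (using $\int_a^\infty f^{[k]}=\kappa_\gamma^{[k]}([a,\infty))$) and, after relabelling $\rho\to r$, yields the middle expression; the two stated forms are thus the same double integral evaluated by Fubini in opposite orders.

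The main obstacle is not the algebra but justifying that these manipulations are legitimate, that is, that $\tilde{\kappa}_\gamma^{[k]}$ is a genuine Radon intensity with $\tilde{\kappa}_\gamma^{[k]}([z,\infty))<\infty$ for every $z>0$ and with no mass escaping to $+\infty$. This requires care because $\kappa_\gamma^{[k]}([s,\infty))\to\infty$ as $s\downarrow0$, so the integrand $\kappa_\gamma^{[k]}([z/r,\infty))$ grows polynomially in $r$; finiteness hinges on the Gaussian-type decay of the chi density $f_{\chi_D}(r)\propto r^{D-1}e^{-r^2/2}$ dominating this polynomial growth, which it does for all three models. This integrability, together with the absence of atoms at infinity, is precisely the well-definedness of the exponent measures $\Lambda_\gamma^{[k]}$ for these Gaussian-based constructions, which I would cite rather than reprove here.
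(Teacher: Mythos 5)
Your proof is correct and follows essentially the same route as the paper's: both rest on the polar decomposition $W_{s^\star}\Deq R_W\Sigma_{s^\star}^{1/2}S$ of the standard Gaussian vector and then integrate out the chi-distributed radial factor against the Poisson intensity. The only differences are cosmetic — you make the marking and mapping theorems explicit and derive the two forms of \eqref{eq:elltailmeasure} by Fubini on a single double integral, where the paper's terser argument invokes integration by parts; your version also rightly flags the integrability check that the paper delegates to the well-definedness result.
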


\begin{proof}
	A multivariate standard Gaussian random vector $W_{s^\star}$ has elliptical representation $W_{s^\star}\Deq R_{W}\Sigma_{s^\star}^{1/2} S$ where $R_{W}>0$ and $R_{W}^2$ follows a chi-squared distribution with $D$ degrees of freedom, independently of a uniform random vector $S$ over the unit sphere $\mathcal{S}_{D-1}$, and with $\Sigma_{s^\star}^{1/2}\Sigma_{s^\star}^{T/2}=\Sigma_{s^\star}$. Therefore, the Poisson process used in the max-id construction is characterized by the elliptical construction \eqref{eq:ppell}. The tail measure representations in \eqref{eq:elltailmeasure} are obtained by integrating out the distribution of the random factors $R_{W,i}$ in the Poisson points $\{R_iR_{W,i},\ i=1,2,\ldots\}$ and  integration by parts.
\end{proof}

\begin{proposition}[Asymptotic independence in bivariate max-id vectors]\label{prop:chibar}\ 
	Consider the bivariate max-id distribution $Z=(Z_1,Z_2)^{\rm T}=\max_{i=1,2,\ldots} R_i(W_{1,i},W_{2,i})^{\rm T}$,  constructed using independent copies $(W_{1,i},W_{2,i})^{\rm T}$ of a standard Gaussian random vector $W=(W_1,W_2)^{\rm T}$ with correlation coefficient $\rho\in[-1,1]$, independent of the points $\{R_i\}$ of a Poisson process distributed according to one of the intensity measures $\kappa_{\gamma}^{[k]}$ ($k=1,2,3$).  Then, for $\beta>0$ (and/or $\alpha>0$ for $\kappa_{\gamma}^{[1]}$), the distribution of $Z$ is asymptotically independent with coefficient of tail dependence $\eta= \left\{(1+\rho)/ 2\right\}^{\beta/(\beta+2)}$.
\end{proposition}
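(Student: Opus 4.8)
The plan is to read off the coefficient of tail dependence $\eta$ directly, and then to deduce asymptotic independence as a by-product. Recall that $\eta\in(0,1]$ is characterised (after a marginal transformation to common margins, which leaves $\eta$ unchanged) through
\[
1/\eta=\lim_{z\to\infty}\frac{\log\Pr(Z_1>z,\,Z_2>z)}{\log\Pr(Z_1>z)},
\]
and that $\eta<1$ forces $\chi(z)=\Pr(Z_1>z\mid Z_2>z)\to0$ in \eqref{chiz}, i.e.\ asymptotic independence. So it suffices to pin down the exponential rates of the marginal and joint survivor functions and to show that their ratio equals $\{2/(1+\rho)\}^{\beta/(\beta+2)}$.

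First I would reduce both probabilities to exponent-measure quantities. Since the Gaussian margins have infinite upper endpoint and $\kappa_{\gamma}^{[k]}$ is Weibull-tailed, the exponent functions $V_1(z)=\Lambda(\{x_1>z\})$ and $V(z,z)$ tend to $0$ as $z\to\infty$; expanding $\exp\{-V(z,z)\}$ and $\exp\{-V_j(z)\}$ to first order and using inclusion--exclusion gives $\Pr(Z_1>z)\sim V_1(z)$ and $\Pr(Z_1>z,\,Z_2>z)\sim\Lambda(\{x_1>z,\,x_2>z\})$, so the two log-rates are those of the measures, consistently with \eqref{eq:approxtail}. Conditioning on the radial variable in the construction $Z=\max_i R_iW_i$ of \eqref{eq:msgen} and integrating against the intensity $f^{[k]}(r)=-\mathrm{d}\kappa_{\gamma}^{[k]}([r,\infty))/\mathrm{d}r$ then yields, in the spirit of \eqref{RWRadon} and \eqref{Vmodel2},
\[
V_1(z)=\int_0^\infty\overline{\Phi}(z/r)\,f^{[k]}(r)\,\mathrm{d}r,\qquad \Lambda(\{x_1>z,\,x_2>z\})=\int_0^\infty\overline{\Phi}_2(z/r,z/r;\rho)\,f^{[k]}(r)\,\mathrm{d}r,
\]
where $\overline{\Phi}$ and $\overline{\Phi}_2(\cdot,\cdot;\rho)$ are the univariate and bivariate standard Gaussian survivor functions.

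The analytic core is a Laplace (saddle-point) evaluation of these two integrals. I would insert the Gaussian tail rates $\log\overline{\Phi}(u)\sim-u^2/2$ and $\log\overline{\Phi}_2(u,u;\rho)\sim-u^2/(1+\rho)$ as $u\to\infty$; the latter rate comes from minimising the bivariate Gaussian quadratic form over the quadrant $\{w_1\ge u,\,w_2\ge u\}$, whose minimiser sits at the corner $(u,u)$ for every $\rho\in(-1,1)$. Combined with the Weibull factor $\exp(-\alpha r^\beta/\beta)$ of $f^{[k]}$, each integrand is, up to polynomial prefactors, $\exp\{-(Bz^2/r^2+\alpha r^\beta/\beta)\}$, with $B=1/2$ for the marginal integral and $B=1/(1+\rho)$ for the joint one. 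Minimising the exponent over $r$ gives the saddle $r^\star=(2Bz^2/\alpha)^{1/(\beta+2)}$ and exponent value $\alpha(1/2+1/\beta)(2Bz^2/\alpha)^{\beta/(\beta+2)}$, so that $\log V_1(z)\sim-Cz^{2\beta/(\beta+2)}$ and $\log\Lambda(\{x_1>z,\,x_2>z\})\sim-C\{2/(1+\rho)\}^{\beta/(\beta+2)}z^{2\beta/(\beta+2)}$ for a common constant $C>0$ in which $\alpha$ cancels from the ratio. Dividing the two rates yields $1/\eta=\{2/(1+\rho)\}^{\beta/(\beta+2)}$, i.e.\ $\eta=\{(1+\rho)/2\}^{\beta/(\beta+2)}$, which is strictly below $1$ whenever $\rho<1$ and $\beta>0$, giving asymptotic independence.

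The main obstacle is upgrading the saddle-point heuristic to a rigorous statement about the logarithmic asymptotics: I must show that $\log$ of each Laplace-type integral is asymptotic to minus the minimal exponent, by bounding the integrand away from $r^\star$ and verifying that the polynomial prefactors arising both from the Gaussian tails and from $f^{[k]}$ are slowly varying in $z$, so that they affect neither the rate nor $\eta$. A subsidiary check is that the stated parameter conditions make $\exp(-\alpha r^\beta/\beta)$ a genuine Weibull decay: for $k=2,3$ this needs $\beta>0$, while for $k=1$ it needs $\alpha>0$ and $\beta>0$, since otherwise $\kappa_{\gamma}^{[1]}$ collapses to the pure power law $r^{-1}$. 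In the complementary (max-stable) regimes the radial measure is a power law, the competing minimisation degenerates, and one recovers $\eta=1$, consistent with the formula evaluated at $\beta=0$.
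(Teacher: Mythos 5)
Your argument is sound in substance and arrives at the stated formula, but it takes a genuinely different route from the paper. The paper's proof is a reduction-plus-citation: it passes to the elliptical representation of the Poisson process (Proposition~\ref{prop:ellrep}), establishes that the compounded radial measure $\tilde{\kappa}_{\gamma}^{[k]}$ is Weibull-tailed with coefficient $2\beta/(\beta+2)$ (Proposition~\ref{prop:weibtail}, via results on products of Weibull-type variables), and then reads off $\eta$ from Theorem~2.1 of Hashorva (2010) on the residual tail dependence of elliptical vectors with Weibull-tailed radial part. You instead evaluate the logarithmic rates of the marginal and joint exponent-measure integrals directly by Laplace's method, using $\log\overline{\Phi}(u)\sim-u^2/2$ and $\log\overline{\Phi}_2(u,u;\rho)\sim-u^2/(1+\rho)$ (your corner-minimiser argument for the quadrant is correct for all $\rho\in(-1,1)$). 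Your saddle exponent $\alpha(1/2+1/\beta)(2Bz^2/\alpha)^{\beta/(\beta+2)}\propto z^{2\beta/(\beta+2)}$ is precisely the Weibull tail of $\tilde{\kappa}$ in Proposition~\ref{prop:weibtail}, since $(1/2+1/\beta)^{-1}=2\beta/(\beta+2)$; so the two computations are the same mathematics organised differently. Your version is self-contained, makes the cancellation of $\alpha$ and the irrelevance of the polynomial prefactors transparent, and avoids the elliptical machinery, at the cost of having to rigorise the Laplace bounds yourself --- which is exactly the work the paper outsources to the cited literature.

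One step deserves more care than you (and, to be fair, the paper) give it: the claim $\Pr(Z_1>z,Z_2>z)\sim\Lambda(\{x_1>z,x_2>z\})=:\Lambda_{12}(z)$. Writing the joint survivor exactly as $(1-e^{-V_1})(1-e^{-V_2})+e^{-V_1-V_2}\{e^{\Lambda_{12}(z)}-1\}\sim V_1(z)V_2(z)+\Lambda_{12}(z)$ shows that the equivalence requires $V_1V_2=o(\Lambda_{12})$, i.e.\ that your computed rate ratio $\{2/(1+\rho)\}^{\beta/(\beta+2)}$ be strictly less than $2$. This holds for every $\rho\geq 0$ (the regime relevant to the Gaussian correlation functions used in the paper), but can fail for sufficiently negative $\rho$ combined with large $\beta$, in which case the coefficient of tail dependence of $G$ saturates at $1/2$ and the correct statement is $\eta=\max\bigl[1/2,\{(1+\rho)/2\}^{\beta/(\beta+2)}\bigr]$. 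The paper's appeal to \eqref{eq:approxtail} glosses over the same cross term. This caveat does not affect the asymptotic independence conclusion, since $\eta<1$ in either case whenever $\rho<1$ and $\beta>0$.
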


\begin{proof}
	Proposition \ref{prop:ellrep} provides the equivalent elliptical construction of $Z$ as $Z\Deq \{\tilde{R}_i \Sigma^{1/2} S_i,\ i=1,2,\ldots\}$, where $\Sigma$ is a 2-by-2 correlation matrix with unit diagonal entries and off-diagonal entries $\rho$, and
	the $S_i$s are independent bivariate spherical random vectors. According to Proposition \ref{prop:weibtail} in the Supplementary Material, the Poisson process $\{\tilde{R}_i\}$ is Weibull-tailed with coefficient $2\beta/(2+\beta)$. Using the elliptical structure of the bivariate measure $\Lambda$ and the tail approximation \eqref{eq:approxtail}, we can apply results on the joint tail behavior of elliptical distributions with Weibull-tailed radial variables \citep{ha2010a,Huser.etal:2017} to characterize the joint tail behavior of the distribution $G$ of $Z$, which yields the coefficient coefficient of tail dependence $\eta= \left\{(1+\rho)/ 2\right\}^{\beta/(\beta+2)}$; see \citet[Theorem 2.1]{ha2010a}.
\end{proof}

\section{Supplementary Material}\label{Appendix:SM}

This document provides a counter-example of a joint distribution that is not max-id (Section~\ref{SMsec:maxid}), details on the coefficient of tail dependence used to summarize asymptotic independence structures (Section~\ref{sec:eta}), some auxiliary results for our Gaussian-based models (Section~\ref{SMsec:theory}), additional results on coverage probabilities for confidence intervals derived from the asymptotic normality of the maximum pairwise likelihood estimator (Section~\ref{SMsec:coverage}) and further details on the marginal fit in the data application to Dutch wind gusts (Section~\ref{SMsec:applic}). Cross-references to our main paper are written as \citet{ourpaper}. 

\subsection{Failure of max-infinite divisibility under negative association}\label{SMsec:maxid}
A simple counter-example of a distribution without the max-id property is the bivariate standard Gaussian distribution $\Phi_2(\cdot;\rho)$ with negative correlation $\rho$ \citep[Section 5.2]{Resnick.1987}. Figure~\ref{fig:gaussian} displays the ``density'' $\frac{\partial^2}{\partial z_1 \partial z_2} \Phi_2^{1/m}(z_1,z_2;\rho)$ with $\rho=-0.5$ and $m=2,10$. Such a function would always be positive for any value of $m>1$ if $\Phi_2(\cdot;-0.5)$ were max-id, but Figure~\ref{fig:gaussian} reveals large areas with negative values, especially for large $m$.
\begin{figure}[t!]
	\centering
	\includegraphics[width=0.35\linewidth]{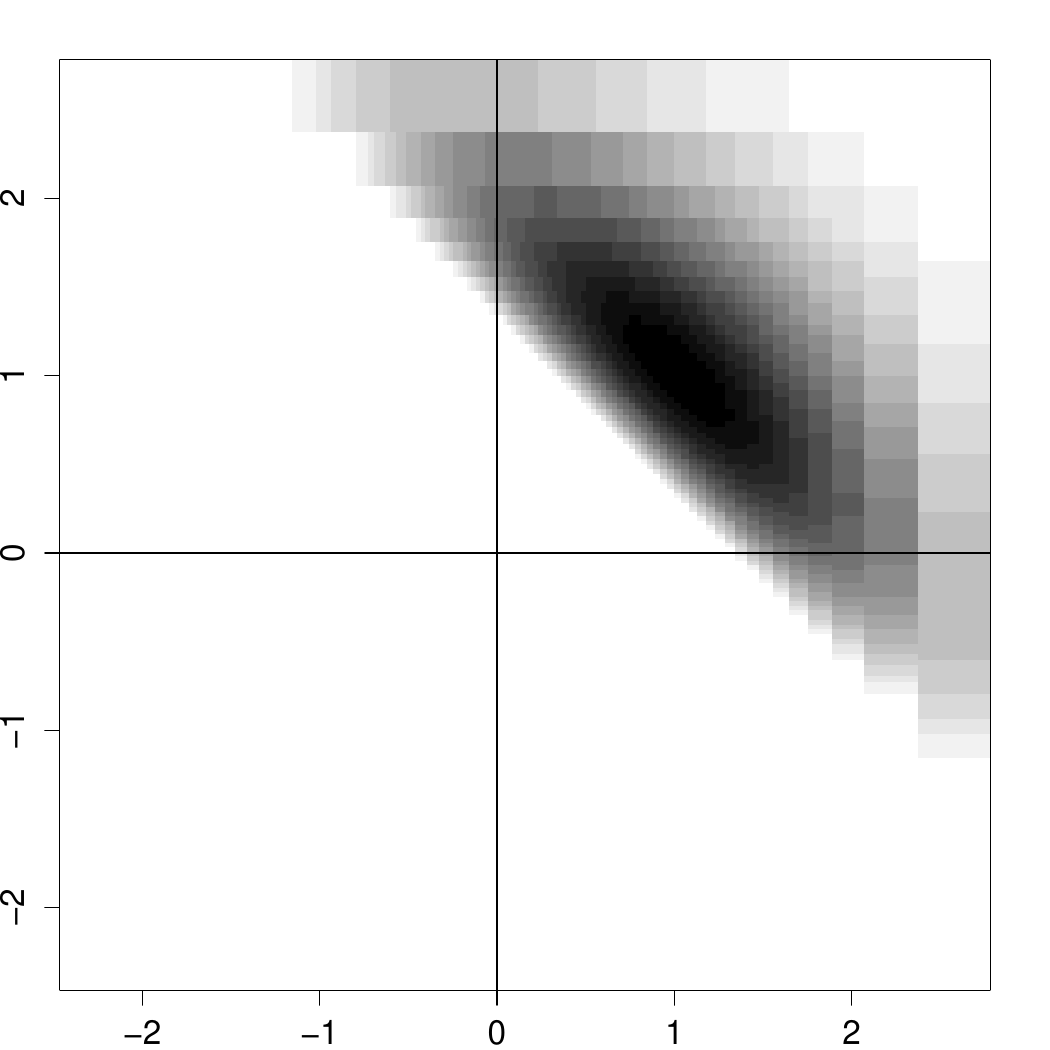} \quad
	\includegraphics[width=0.35\linewidth]{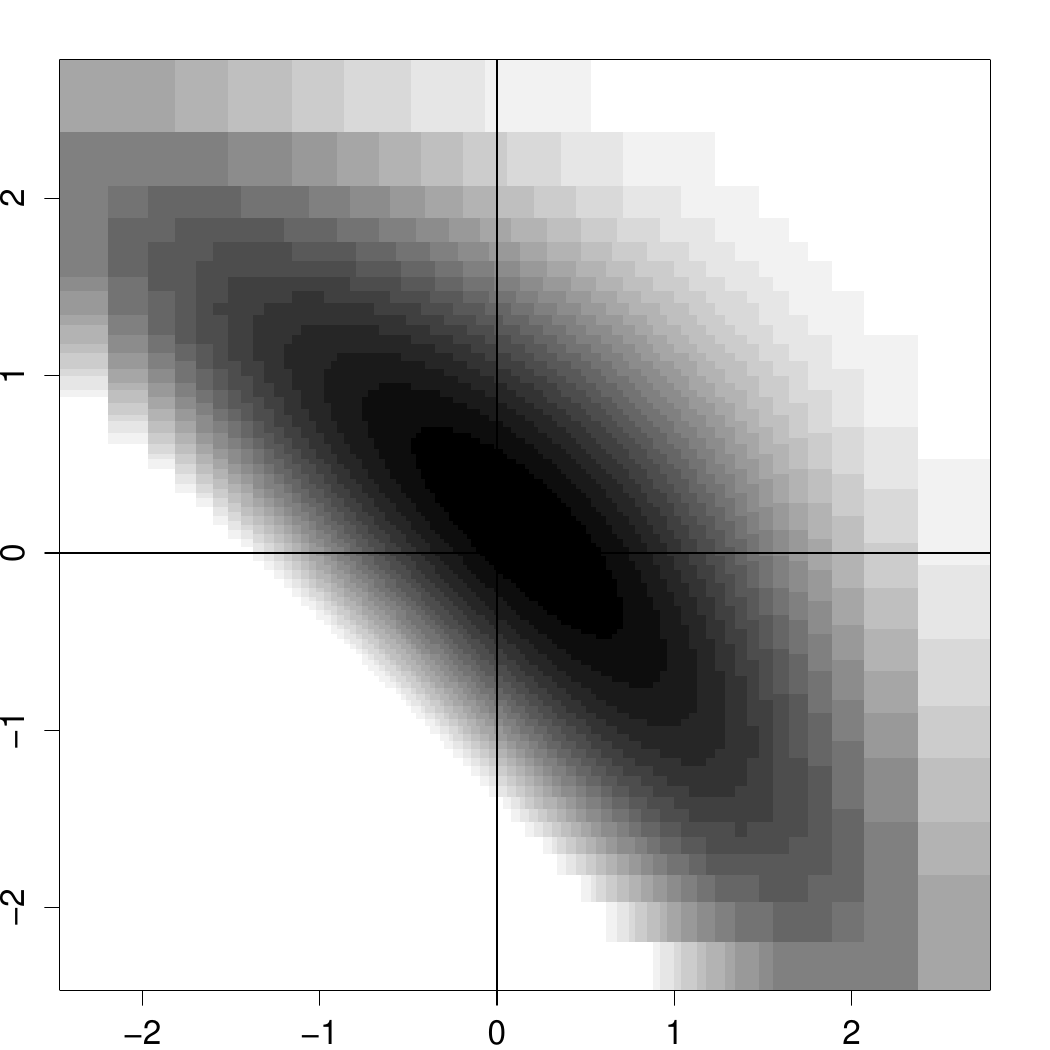}	
	\caption{Function $\frac{\partial^2}{\partial z_1 \partial z_2} \Phi_2^{1/m}(z_1,z_2;\rho)$ for a bivariate standard Gaussian distribution $\Phi_2$ with correlation coefficient $\rho=-0.5$, for $m=2$ (left) and $m=10$ (right). The grey region corresponds to negative values, where darker areas indicate higher absolute values.}
	\label{fig:gaussian}
\end{figure}

\subsection{A dependence summary for asymptotic independence}\label{sec:eta}
We use the coefficient of tail dependence $\eta$ \citep{Ledford.Tawn.1996} to characterize faster bivariate joint tail decay as compared to marginal tails in the case of asymptotic independence.
If $\Pr(X_i>x)\sim 1/x$, $x\rightarrow\infty$ ($i=1,2$), then we assume that the following flexible joint tail representation holds along the diagonal:
\begin{equation}
\Pr(X_1>x,X_2>x)=\ell(x) x^{-1/\eta}, \qquad x\to\infty,
\end{equation}
with the coefficient of tail dependence $\eta\in(0,1]$ and a positive function $\ell$, slowly varying at infinity. Asymptotic independence arises if $\eta<1$, or if $\eta=1$ and $\ell(x)\rightarrow 0$, $x\rightarrow\infty$, while asymptotic dependence always implies $\eta=1$. Near-independence occurs when $\eta=1/2$. Positive and negative association in the tail correspond to $\eta\in(1/2,1]$ and $\eta\in(0,1/2)$, respectively.

\subsection{Properties of Gaussian-based models}\label{SMsec:theory}
The following Proposition~\ref{prop:welldef} shows that our Gaussian-based models, constructed from  $\kappa_{\bm\gamma}^{[k]}$ ($k=1,2,3$) and Gaussian spectral functions $W_i(\bm s)$ in Equation~(12) of \cite{ourpaper}, are well defined; in other words, the $D$-dimensional marginal intensity measures are locally  finite on $E=[-\infty,\infty]^D\setminus\{\bm l\}$ with $\bm l=(-\infty,\ldots,-\infty)^{\rm T}$, and the intensity measure of the set of points with components $+\infty$ is $0$.

\begin{proposition}[Well-definedness]\label{prop:welldef}\ 
	The construction $Z(\bm s)=\max_{i=1,2,\ldots} R_iW_i(\bm s)$, where $\{R_i\}$ are points of a Poisson process with mean measure $\kappa_{\bm\gamma}^{[k]}$ ($k=1,2,3$) in Equations~(16), (17) and (18) of \cite{ourpaper} respectively, and  $\{W_i(\bm s)\}$ are independent copies of a standard Gaussian process independent of $\{R_i\}$,  yields a well-defined max-id process.
\end{proposition}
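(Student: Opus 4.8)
The plan is to verify the hypotheses of \citet[Proposition 5.8]{Resnick.1987} --- the same result already invoked for \eqref{eq:cdf} --- namely that the mean measure $\Lambda_\gamma^{[k]}$ of the point process $\{R_iW_i\}$ is a Radon measure on $E\setminus\{\ell\}$ that places no mass on configurations with an infinite component. Since negative values of $W_i$ never contribute to the pointwise maximum, I would first replace $W_i$ by $\max\{W_i,0\}$ as in the main text, so that the support lies in the positive orthant and the lower endpoint is $\ell=(0,\ldots,0)^{\rm T}$. Using the independence of $\{R_i\}$ and $\{W_i\}$ together with Campbell's formula, the mean measure factorises as $\Lambda_\gamma^{[k]}(A)=\int_0^\infty\Pr(rW\in A)\,\kappa_\gamma^{[k]}(\mathrm{d}r)$, where $W\sim\Phi_D(\cdot\,;\Sigma)$ is the $D$-variate standard Gaussian vector at the chosen sites. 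In particular $V(z)=\Lambda_\gamma^{[k]}([0,z]^C)=\int_0^\infty\{1-\Phi_D(z/r;\Sigma)\}\,\kappa_\gamma^{[k]}(\mathrm{d}r)$, which is exactly the quantity \eqref{RWRadon} to be controlled.

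For the Radon property it suffices to bound $\Lambda_\gamma^{[k]}(\{z:z_j\ge\varepsilon\})$ for each coordinate $j$ and each $\varepsilon>0$, since every relatively compact subset of $E\setminus\{\ell\}$ is contained in a finite union $\bigcup_j\{z_j\ge\varepsilon\}$ and subadditivity then closes the argument; equivalently I would show $V(z)<\infty$ for every $z\in(0,\infty)^D$. Splitting the radial integral at some $r_0>0$, the tail $r\ge r_0$ is harmless because $1-\Phi_D\le1$ and $\int_{r_0}^\infty\kappa_\gamma^{[k]}(\mathrm{d}r)=\kappa_\gamma^{[k]}([r_0,\infty))<\infty$ by construction. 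The delicate region is $r\downarrow0$, where $z/r\to\infty$ componentwise while the intensity $f^{[k]}(r)$ blows up polynomially (like $r^{\alpha-2}$ for $k=1$ and $r^{-\beta-1}$ for $k=2,3$, from the appendix formulae). Here I would use the crude union bound $1-\Phi_D(z/r;\Sigma)\le\sum_{j=1}^D\bar\Phi(z_j/r)\le D\exp\{-\delta/r^2\}$, where $\bar\Phi$ is the univariate standard normal survival function and $\delta=\tfrac{1}{2}\min_j z_j^2>0$, so that the integrand is $O(r^{-m}\exp\{-\delta/r^2\})$ near the origin for some finite $m$; since $\exp\{-\delta/r^2\}$ decays faster than any power of $r$, it is integrable at $0$ and $V(z)<\infty$ follows. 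This Gaussian-tail-beats-singularity estimate is the one genuine obstacle: it is where the construction crucially exploits the light Gaussian spectral profile together with $\kappa_\gamma^{[k]}$ having only a polynomial singularity at $0$. Alternatively, one may route the argument through the elliptical representation of Proposition~\ref{prop:ellrep} and the Weibull-tailedness of the radial measure (Proposition~\ref{prop:weibtail}), reducing the estimate to a single one-dimensional integral.

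It remains to rule out mass at infinity. Because the Poisson points $R_i$ are finite and $\Pr(W_j=\infty)=0$, no point $R_iW_i$ has an infinite coordinate, whence $\Lambda_\gamma^{[k]}(\{z:\max_j z_j=\infty\})=\int_0^\infty\Pr(\exists j:\,rW_j=\infty)\,\kappa_\gamma^{[k]}(\mathrm{d}r)=0$. With both properties in hand, \citet[Proposition 5.8]{Resnick.1987} gives that every $D$-dimensional margin of $Z(s)$ is max-id with distribution \eqref{eq:cdf}, and finiteness of $V$ shows $G$ is a proper distribution function on $(0,\infty)^D$. Finally, to upgrade from finite-dimensional laws to a well-defined process I would invoke Kolmogorov consistency: restricting the spectral construction \eqref{eq:msgen} to a sub-collection of sites marginalises the exponent measure correctly, so the finite-dimensional distributions are consistent and the process exists.
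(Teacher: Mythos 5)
Your proposal is correct and follows essentially the same route as the paper: split the radial integral at a finite point, dispose of the tail via $\kappa_\gamma^{[k]}([r_0,\infty))<\infty$, and control the singularity at $r=0$ by showing the Gaussian tail decay $\exp\{-\delta/r^2\}$ dominates the polynomial blow-up of the intensity $f^{[k]}$. The only cosmetic difference is that you bound $1-\Phi_D$ by a union bound plus the Chernoff inequality, where the paper uses Mill's ratio; your explicit treatment of the Radon property, the absence of mass at infinity, and Kolmogorov consistency makes the argument slightly more complete than the published version.
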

\begin{proof}
	Observe that $\kappa_{\bm\gamma}^{[k]}$ $(k=1,2,3)$ are infinite measures on the positive half-line and that $\pr\{W_i(\bm s)>0\}>0$, such that $\pr\{Z(\bm s)>0)\}=1$ and we can focus on positive values of $Z(\bm s)$. The construction of $Z(\bm s)$ as the pointwise maximum over a Poisson process yields a valid max-id process, provided that all values of the multivariate exponent function, formally defined as $V(\bm z)=\Lambda_{\bm\gamma}^{[k]}([-\bm\infty,\bm z]^C)=\int_0^\infty \{1-F_{\bm W}(\bm z/r)\} \kappa_{\bm\gamma}^{[k]}(\mathrm{d}r)$, $\bm z\in (0,\infty)^D$ ($k=1,2,3$), are finite for any standard $D$-dimensional Gaussian vector $\bm W$. Using Mill's ratio for the univariate standard Gaussian density $\phi$, the multivariate standard Gaussian tail probability $1-F_{\bm W}(\bm x)$ can be bounded from above by $(1+\varepsilon_{\bm W})D\phi(\min_{j=1,\ldots,D} x_j) \min_{j=1,\ldots,D} x_j$ as $\min x_j\rightarrow \infty$ with a suitably fixed $\varepsilon_{\bm W}>0$. Therefore,  we can fix a constant $c_{\bm W}>0$ such that $1-F_{\bm W}(\bm x)<  c_{\bm W}\phi(\min_{j=1,\ldots,D} x_j) \min_{j=1,\ldots,D} x_j$ for all $\bm x$ with $\min_{j=1,\ldots,D} x_j > 0$. In the decomposition $V(\bm z)=\int_0^1 \{1-F_{\bm W}(\bm z/r)\} \kappa_{\bm\gamma}^{[k]}(\mathrm{d}r)+\int_1^\infty \{1-F_{\bm W}(\bm z/r)\} \kappa_{\bm\gamma}^{[k]}(\mathrm{d}r)$, the second term on the right hand side is finite, and it remains to prove finiteness of the first term. By using the upper bound on the multivariate Gaussian tail probability, collecting all constant terms in a constant $C$ and writing $z_m=\min_{j=1,\ldots,D} z_j$, we get 
	\begin{align*}
	\int_0^1 \{1-F_{\bm W}(\bm z/r)\} \kappa_{\bm\gamma}^{[k]}(\mathrm{d}r) &\leq Cz_m\int_0^1 \exp\{-z_m^2/(2r^2)\} r^{-1}f^{[k]}(r)\,\mathrm{d}r \\
	&\stackrel{r\rightarrow 1/r}{=} Cz_m\int_{1}^\infty \exp(-z_m^2r^2/2)r^{-1}f^{[k]}(1/r)\,\mathrm{d}r.
	\end{align*}
	The term $\exp(-z_m^2r^2/2)$ ascertains the tail decay of the integrand to be  faster than exponential  (as $r^{-1}f^{[k]}(1/r)$ has a polynomial tail), ensuring the finiteness of the upper bound and therefore of $V(\bm z)$.
\end{proof}

The following proposition proves the Weibull tail decay in the radial variables of  the elliptically contoured intensity of our Gaussian-based models; see Proposition~1 in \citet{ourpaper} for the elliptical representation. A consequence of this result is that the contribution of the points $R_i$ with $R_i\leq 1$ can be neglected in the asymptotic analysis of the tail behavior of max-id random vectors of this model family.
\begin{proposition}[Weibull tail decay]\label{prop:weibtail}\
	Under the assumptions of Proposition~1 of \cite{ourpaper}, the intensity measures $\tilde{\kappa}_{\bm\gamma}^{[k]}$ (see Equation~(23)  of \cite{ourpaper}) of the radial Poisson process $\{\tilde{R}_i,\ i=1,2,\ldots\}$ ($k=1,2,3$) have univariate Weibull tail with Weibull coefficient $2\beta/(\beta+2)$. 
\end{proposition}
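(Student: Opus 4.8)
The plan is to read off the tail of $\tilde\kappa_\gamma^{[k]}$ from the integral representation \eqref{eq:elltailmeasure} by Laplace's (saddle-point) method. From Proposition~\ref{prop:ellrep} the radial intensity satisfies
\[
\tilde\kappa_\gamma^{[k]}([z,\infty))=\int_0^\infty \overline F_{\chi_D}(z/r)\,f^{[k]}(r)\,\mathrm{d}r ,
\]
so the statement amounts to showing that mixing a $\chi_D$ tail, which is Weibull with coefficient $2$, against a measure $\kappa_\gamma^{[k]}$, which is Weibull with coefficient $\beta$, produces a Weibull tail whose coefficient $2\beta/(\beta+2)$ obeys the reciprocal-additive rule $1/[2\beta/(\beta+2)]=1/\beta+1/2$. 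First I would record the two ingredients: the Mills-type expansion $\overline F_{\chi_D}(x)\sim c_D\,x^{D-2}\exp(-x^2/2)$ as $x\to\infty$, and the large-$r$ asymptotics $f^{[k]}(r)\sim c_k'\,r^{m_k}\exp(-\alpha r^\beta/\beta)$ for a model-dependent real power $m_k$ (with $\alpha=1$ when $k=3$), obtained by differentiating the closed forms \eqref{eq:model1}--\eqref{eq:model3}; see Appendix~\ref{sec:appendix.formula}. The key structural feature is that the logarithm of each factor is, to leading order, a pure power of its argument.

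Substituting these expansions, the logarithm of the integrand equals $\psi_z(r)=-\tfrac12 z^2 r^{-2}-(\alpha/\beta)\,r^{\beta}$ up to the slowly varying polynomial factors. The next step is to locate the dominant balance: solving $\psi_z'(r)=z^2 r^{-3}-\alpha r^{\beta-1}=0$ gives the saddle point $r^\ast(z)=(z^2/\alpha)^{1/(\beta+2)}$, and evaluating $\psi_z$ there yields
\[
-\psi_z\big(r^\ast(z)\big)=\frac{\beta+2}{2\beta}\,\alpha^{2/(\beta+2)}\,z^{2\beta/(\beta+2)} .
\]
The exponent $2\beta/(\beta+2)$ is precisely the claimed Weibull coefficient. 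I would then expand $\psi_z$ to second order about $r^\ast(z)$, verify that $\psi_z''(r^\ast(z))>0$ grows like a positive power of $z$, and apply the standard Laplace approximation to obtain
\[
\tilde\kappa_\gamma^{[k]}([z,\infty))\sim C\,z^{a}\exp\Big\{-\tfrac{\beta+2}{2\beta}\,\alpha^{2/(\beta+2)}\,z^{2\beta/(\beta+2)}\Big\},\qquad z\to\infty,
\]
for an explicit constant $C>0$ and an exponent $a$ depending on $D$ and $m_k$, which exhibits the Weibull tail with coefficient $2\beta/(\beta+2)$. Because the leading exponential rate is governed only by $\beta$ for all three models --- the powers $m_k$ differ but influence merely the polynomial term $z^a$ --- the cases $k=1,2,3$ are treated simultaneously.

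The step I expect to be the main obstacle is the rigorous justification of the Laplace approximation: one must show that the contributions from $r\ll r^\ast(z)$ and $r\gg r^\ast(z)$ are exponentially negligible, and that over the effective Gaussian window of width $|\psi_z''(r^\ast(z))|^{-1/2}$ the polynomial prefactors $\overline F_{\chi_D}(z/r)$ and $f^{[k]}(r)$ are asymptotically constant. A cleaner alternative, which I would favour to sidestep the explicit second-order analysis, is to invoke an off-the-shelf result on the tail of a product of independent Weibull-tailed variables of the type used in \citet{ha2010a,Huser.etal:2017}: writing $\tilde R\Deq R\,R_W$ with $R$ Weibull of coefficient $\beta$ and $R_W\sim\chi_D$ Weibull of coefficient $2$, such a result delivers the product coefficient $2\beta/(\beta+2)$ directly, and only the constants need to be matched to \eqref{eq:elltailmeasure}.
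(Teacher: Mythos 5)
Your primary route (Laplace/saddle-point analysis of the integral in \eqref{eq:elltailmeasure}) is sound and genuinely different from the paper's argument, and your leading-order computation is correct: the saddle $r^\ast(z)=(z^2/\alpha)^{1/(\beta+2)}$ and the rate $\frac{\beta+2}{2\beta}\alpha^{2/(\beta+2)}z^{2\beta/(\beta+2)}$ both check out, and they deliver the claimed coefficient together with explicit constants that the paper does not provide. The paper instead splits $\tilde{\kappa}=\tilde{\kappa}_1+\tilde{\kappa}_2$ according to whether $r\leq 1$ or $r>1$: the restriction $\tilde{\kappa}_2$ to $r>1$ has \emph{finite} mass, so it can be written as $c\tilde H$ with $\tilde H$ a probability distribution, and the off-the-shelf result on products of independent Weibull-type random variables gives the coefficient $2\beta/(\beta+2)$ for that piece; the infinite-mass piece $\tilde{\kappa}_1$ is then bounded by $C_2\exp(-z^{2-\varepsilon_1})$ and shown to be asymptotically dominated. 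What the paper's decomposition buys is precisely a rigorous way to sidestep the uniform-Laplace justification you flag as the main obstacle; what your saddle-point route buys is sharper asymptotics (the polynomial prefactor $z^a$ and the explicit constant in the exponent).

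One caveat on your ``cleaner alternative'': you cannot simply write $\tilde R\Deq R\,R_W$ with $R$ ``Weibull of coefficient $\beta$'' and invoke the product result, because $\kappa_{\gamma}^{[k]}$ is an \emph{infinite} measure with mass accumulating at $r=0$, so $R$ is not a random variable. To make that argument work you must first carve off the finite-mass restriction to $r>1$ (where the product result legitimately applies after normalization) and then separately show that the contribution of the infinitely many points $R_i\leq 1$ has a strictly lighter (essentially Gaussian, coefficient close to $2>2\beta/(\beta+2)$) tail. That domination step is not optional bookkeeping --- it is the same issue as controlling the region $r\ll r^\ast(z)$ in your Laplace argument, where $f^{[k]}(r)\sim\beta r^{-\beta-1}$ is non-integrable at the origin and only the factor $\overline F_{\chi_D}(z/r)\lesssim\exp\{-(z/r)^{2-\varepsilon}\}$ saves the integral. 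Once you add that step explicitly, either of your two routes closes.
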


\begin{proof}
	For ease of notation, we omit the superscript $[k]$ and subscript $\bm\gamma$ in $\tilde{\kappa}_{\bm\gamma}^{[k]}$ in the following and  denote the distribution function of $R_{\bm W}$ by $F_{\chi_D}$. We write the tail measure of $\tilde{\kappa}$ as  $\tilde{\kappa}([z,\infty)) =\tilde{\kappa}_1([z,\infty))+\tilde{\kappa}_2([z,\infty))$, where 
	$$\tilde{\kappa}_1([z,\infty))=\int_{0}^1\overline{F}_{\chi_D}(z/r) f^{[k]}(r)\,\mathrm{d}r,\qquad \tilde{\kappa}_2([z,\infty))=\int_1^\infty\overline{F}_{\chi_D}(z/r) f^{[k]}(r)\,\mathrm{d}r.$$
	The measure $\tilde{\kappa}_1$ has an infinite mass over $(0,\infty)$ due to the infinite number of Poisson points $R_i\leq 1$, while the measure $\tilde{\kappa}_2$ has a finite mass that corresponds to the points $R_i>1$. To prove the Weibull  tail behavior of $\tilde{\kappa}$, we first show that $\tilde{\kappa}_2$ is Weibull-tailed with  Weibull coefficient $2\beta/(\beta+2)$, and we then show that the tail of  $\tilde{\kappa}_1$ is asymptotically dominated by the one of $\tilde{\kappa}_2$ as $z$ tends to infinity. Notice  that $\tilde{\kappa}_1([z,\infty))/\tilde{\kappa}_2([z,\infty))\rightarrow 0$ for $z\rightarrow \infty$ if the Weibull coefficients $\beta_1$ and $\beta_2$  of $\tilde{\kappa}_1$ and $\tilde{\kappa}_2$ respectively satisfy $\beta_1>\beta_2$. The intensity measure $\tilde{\kappa}_2$ can be represented as $c\tilde{H}$ with $c=\tilde{\kappa}_2([1,\infty))>0$ and $\tilde{H}$ a probability distribution.  Based on results for the product of Weibull-type random variables \citep{ha2014,Huser.etal:2017}, one easily shows that $\tilde{H}$ is Weibull-tailed with coefficient $2\beta/(\beta+2)$, where $2$ is the Weibull coefficient of $F_{\chi_D}$. Therefore,  $\tilde{\kappa}_2$ is Weibull-tailed with coefficient $2\beta/(\beta+2)$. 
	To show that the tail of $\tilde{\kappa}_1$ is lighter such that its contribution can be neglected,  we now fix an arbitrary small $0<\varepsilon_1<2$ and a constant $C_1>0$ such that   $\overline{F}_{\chi_D}(r)\leq C_1 \sqrt{2\pi}^{-1}\exp(-r^{2-\varepsilon_1})$ for all $r\leq 1$. 
	Then, 
	\begin{align*}
	\tilde{\kappa}_1([z,\infty)) &\leq C_1\int_0^1  \exp\{-(z/r)^{2-\varepsilon_1}\} f^{[k]}(r)\,\mathrm{d}r \\
	&\stackrel{r\rightarrow 1/r}{=}   C_1\int_1^\infty  
	\exp\left(-z^{2-\varepsilon_1}r^{2-\varepsilon_1}\right)f^{[k]}(1/r)r^{-2}\,\mathrm{d}r.
	\end{align*}
	For  any parameter values of $\alpha$ and $\beta$ in the construction of $\tilde{\kappa}_{\bm\gamma}^{[k]}$ and for $z>z_0>0$ with some fixed $z_0$, we can fix constants $\varepsilon_2>0$ with $\varepsilon_1+\varepsilon_2<2$ and $C_2>0$ such that  
	$$C_1\exp\left(-z^{2-\varepsilon_1}r^{2-\varepsilon_1}\right)f^{[k]}(1/r)r^{-2}\leq C_2(2-\varepsilon_1-\varepsilon_2)\exp\left(-z^{2-\varepsilon_1}r^{2-\varepsilon_1-\varepsilon_2}\right) z^{2-\varepsilon_1} r^{1-\varepsilon_1-\varepsilon_2}$$
	for $r\geq1$. 
	This result yields the following upper bound: 
	\begin{align*}
	\tilde{\kappa}_1[z,\infty) &\leq C_2 \int_1^\infty  (2-\varepsilon_1-\varepsilon_2) \exp\left(-z^{2-\varepsilon_1}r^{2-\varepsilon_1-\varepsilon_2}\right) z^{2-\varepsilon_1} r^{1-\varepsilon_1-\varepsilon_2}\,\mathrm{d}r \\ &=-C_2\exp\left(-z^{2-\varepsilon_1}r^{2-\varepsilon_1-\varepsilon_2}\right)\big\vert_{r=1}^{r=\infty}\;=\;C_2\exp\left(-z^{2-\varepsilon_1}\right).
	\end{align*}
	For small $\varepsilon_1$ such that $2-\varepsilon_1>2\beta/(\beta+2)$, the Weibull-type tail of $\tilde{\kappa}_2$ dominates $\tilde{\kappa}_1$. Thus, $\tilde{\kappa}$ is Weibull-tailed with Weibull coefficient $2\beta/(\beta+2)$.
\end{proof}

\subsection{Coverage probabilities for confidence intervals based on asymptotic normality}\label{SMsec:coverage}
Here, we extend our simulation study for the parsimonious max-id model with infinite exponent measure (recall equation (18) of \citet{ourpaper}), and we study the coverage probabilities of confidence intervals for the vector of parameters $\log\bm \psi=(\log\beta,\log \lambda)^{\rm T}$. The parameter $\beta\in[0,\infty)$ controls how far the model is from the Schlather max-stable model (with $\beta=0$ corresponding to max-stability), while $\lambda>0$ is a range parameter. In our Monte Carlo study, we simulate data at $D=10$ sites randomly generated on the unit square $[0,1]^2$, with $n=50$ independent replicates. We then estimate the parameters $\beta$ and $\lambda$ by maximizing the pairwise likelihood (20) in \citet{ourpaper} with cut-off distance $\delta=0.5$, and we compute confidence intervals for $\log\bm \psi=(\log\beta,\log \lambda)^{\rm T}$ based on asymptotic normality and the asymptotic variance formula $n^{-1}J^{-1}KJ^{-1}$ with matrices $J$ and $K$ estimated as in (28) of \citet{ourpaper}. We then replicate this experiment 1000 times, and we compute coverage probabilities for the nominal probabilities $0.05,0.10,0.15,\ldots,0.90,0.95$. The results are shown in Figure~\ref{fig:coverage}.

\begin{figure}[h!]
	\centering
	\includegraphics[width=.9\linewidth]{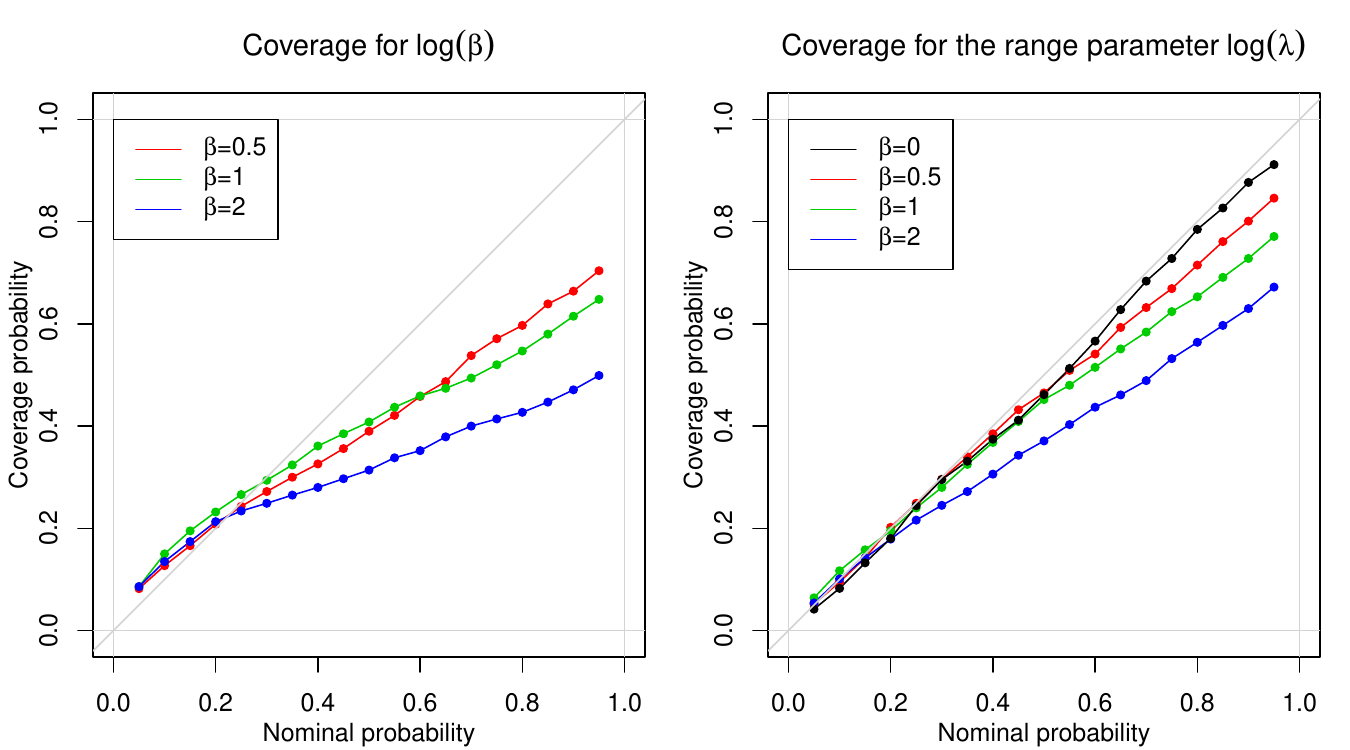}	
	\caption{Coverage probabilities for the confidence intervals for $\log\bm \psi=(\log\beta,\log \lambda)^{\rm T}$ based on asymptotic normality and the asymptotic variance formula $n^{-1}J^{-1}KJ^{-1}$ with matrices $J$ and $K$ estimated as in (28) of \citet{ourpaper}. When $\beta=0$, we only show the coverage for $\log\lambda$ for obvious reasons.}
	\label{fig:coverage}
\end{figure}

In all cases, the coverage probabilities are too small, indicating a tendency to underestimate the uncertainty. For the range parameter, $\log\lambda$, this underestimation is quite moderate, giving quite decent coverages overall. For $\log\beta$, however, the underestimation is stronger, especially for large $\beta$, indicating that uncertainty results need to be interpreted with care. We explain this uncertainty of coverage probabilities by the low sample size ($n=50$) suggesting that the estimator $\hat{\bm\psi}$ may not have reached its asymptotically normal distribution, and by the potential numerical instabilities when estimating the sandwich matrices $J$ and $K$. Another potential explanation for this phenomenon is the correlation between estimated parameters $\log\hat\beta$ and $\log\hat\lambda$. An alternative approach for estimating the uncertainty could be to use the bootstrap, but this would be extremely demanding for some of our models (especially the infinite measure max-id models), preventing its use in our context.

\subsection{Further details for the Dutch wind gust application}\label{SMsec:applic}
We here show quantile-quantile (QQ)-plots for the marginal fits in the data application of \cite{ourpaper}. 
In Figures~\ref{fig:qq1}--\ref{fig:qq8}, we provide the location-wise QQ-plots with data transformed to the unit Fr\'echet scale according to the fitted marginal model, showing a satisfactory fit. Finally, Figure~\ref{fig:qqall} shows boxplots for all locations and time steps pulled together, for daily, weekly, monthly and yearly maxima. Again, the fits look satisfactory. 

\begin{figure}[h!]
	\centering
	\includegraphics[width=.9\linewidth]{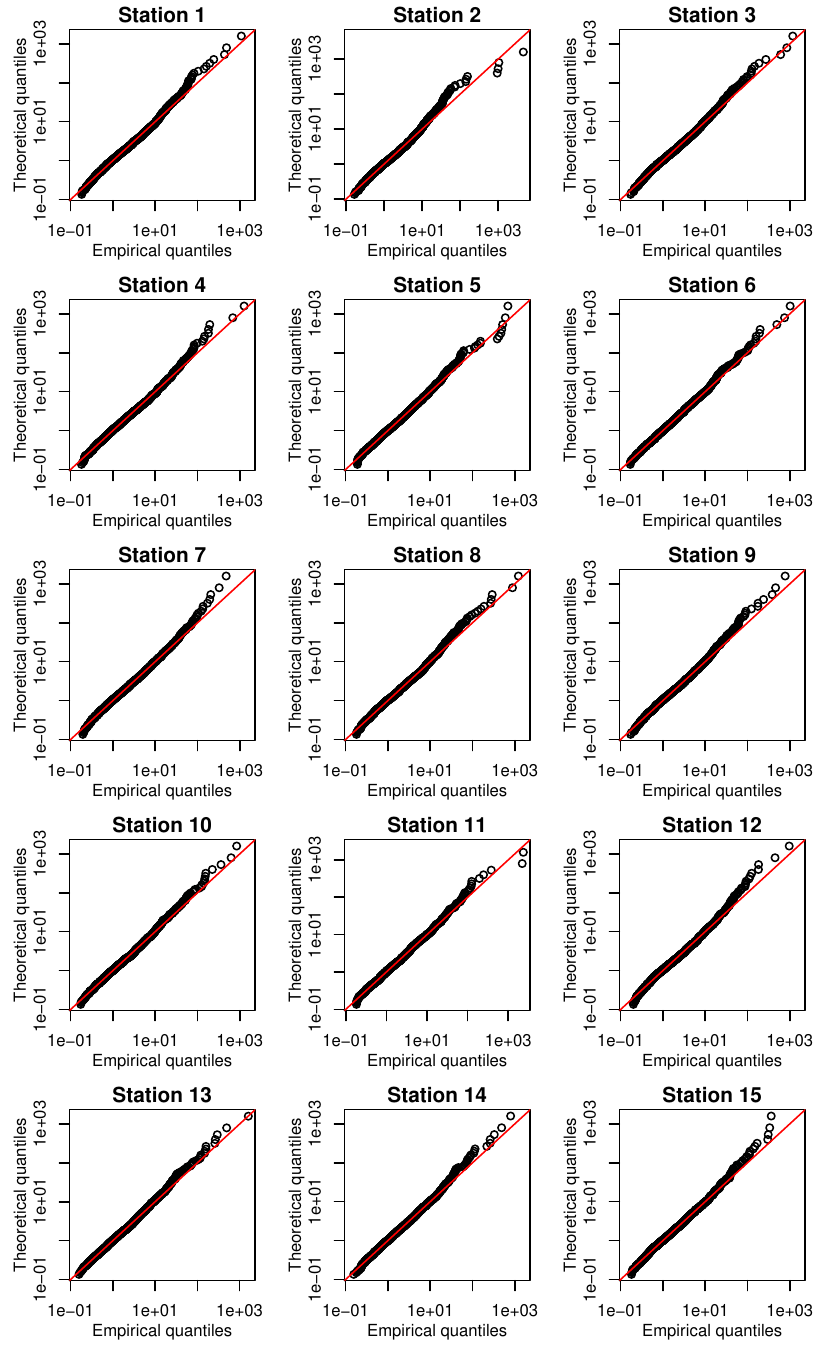}	
	\caption{QQ-plots of station-wise daily maxima. }
	\label{fig:qq1}
\end{figure}

\begin{figure}[h!]
	\centering
	\includegraphics[width=.9\linewidth]{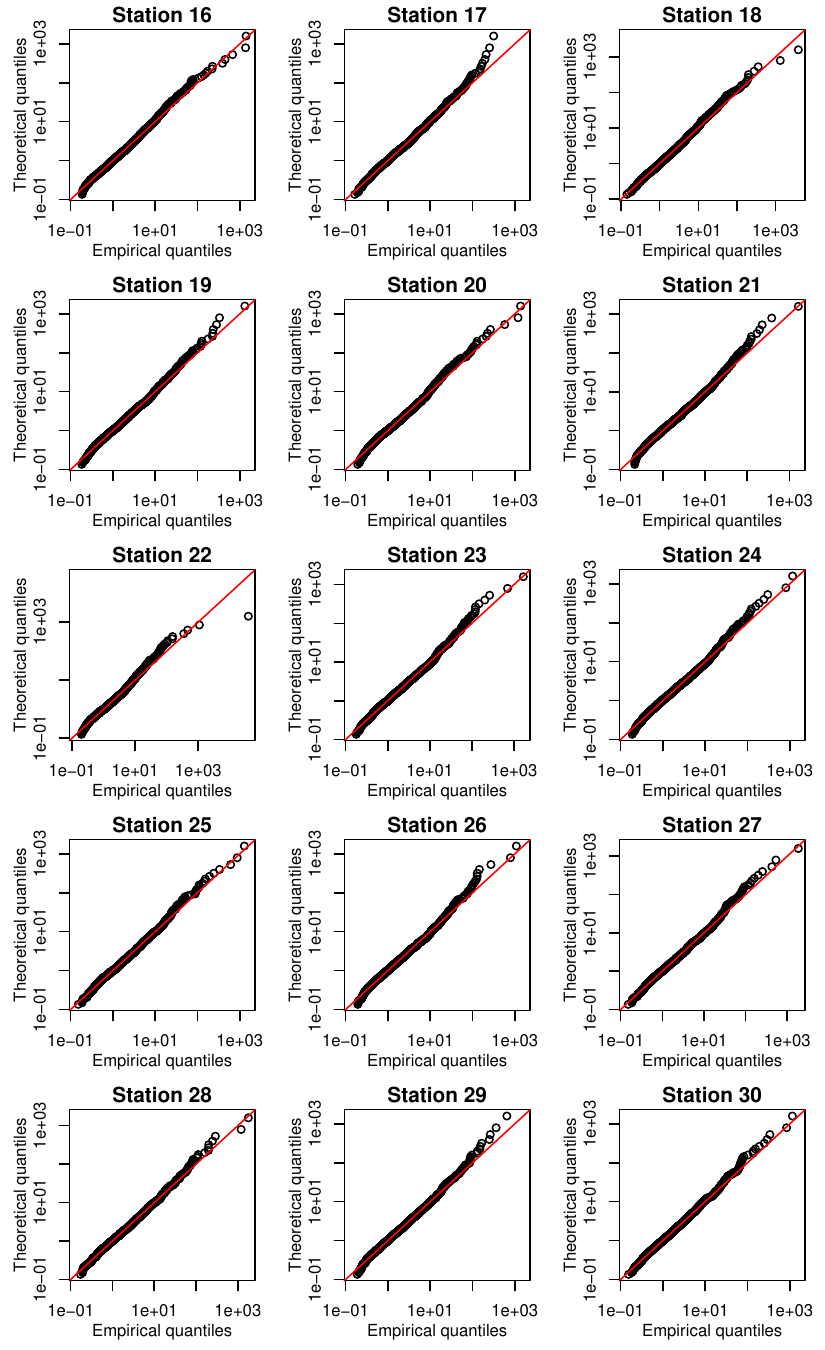}	
	\caption{QQ-plots of station-wise daily maxima. }
	\label{fig:qq2}
\end{figure}

\begin{figure}[h!]
	\centering
	\includegraphics[width=.9\linewidth]{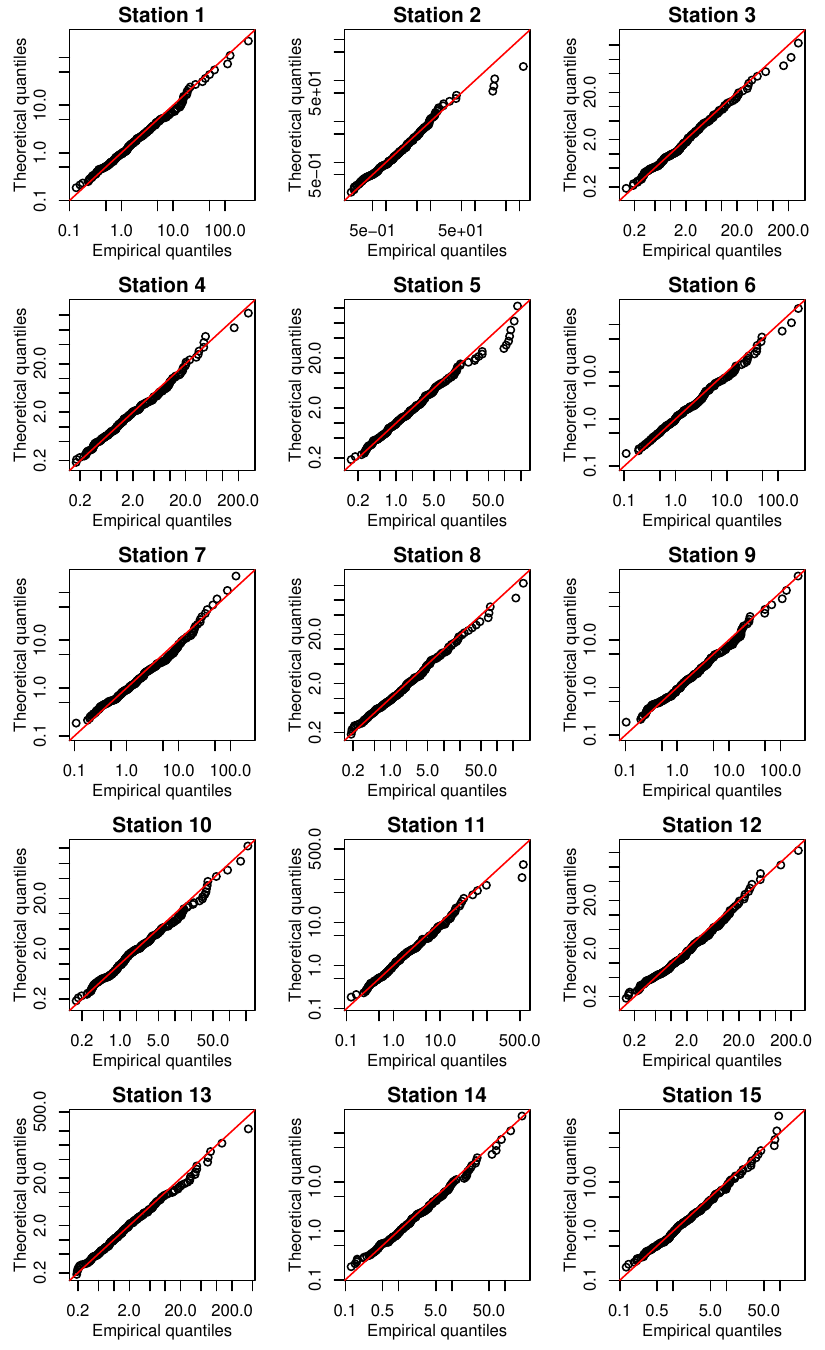}	
	\caption{QQ-plots of station-wise weekly maxima. }
	\label{fig:qq3}
\end{figure}

\begin{figure}[h!]
	\centering
	\includegraphics[width=.9\linewidth]{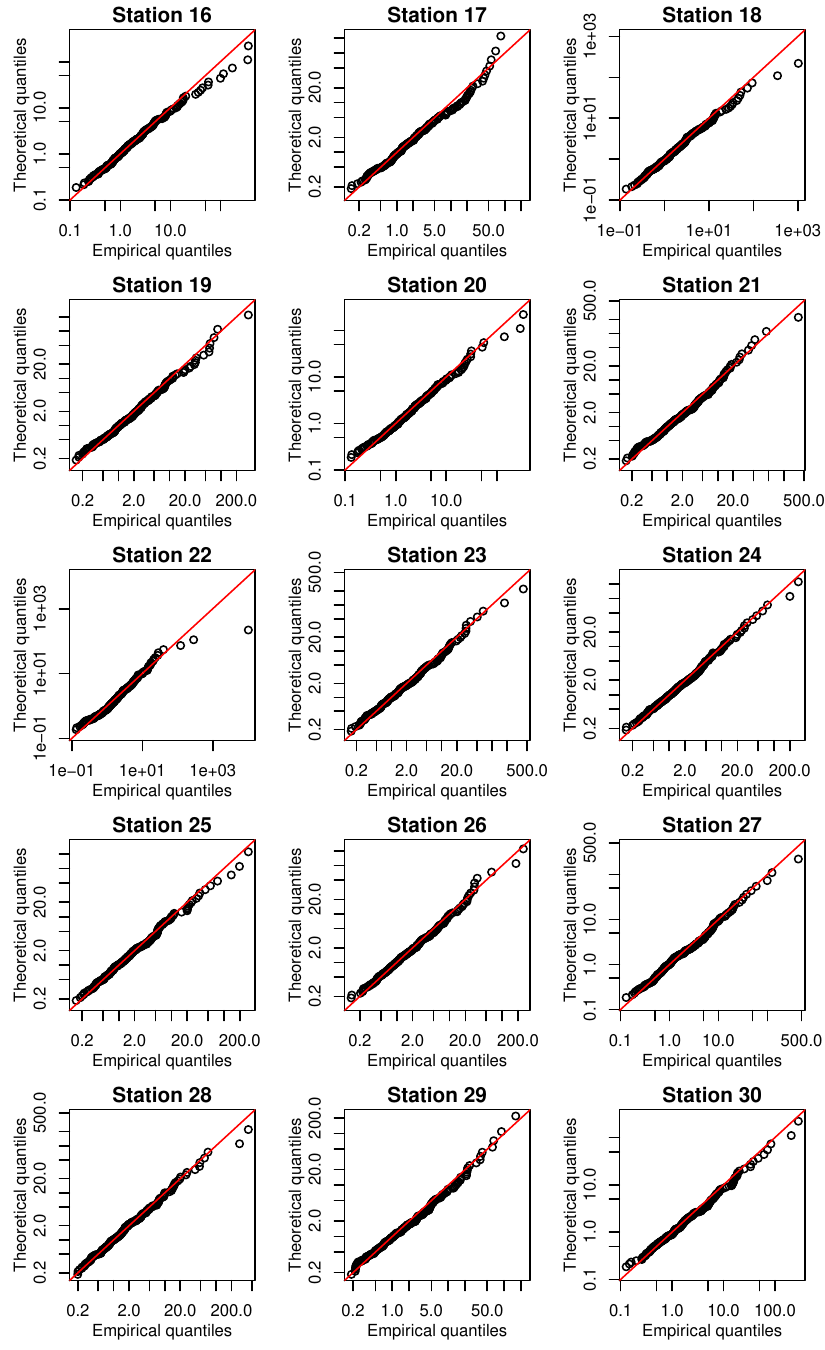}	
	\caption{QQ-plots of station-wise weekly maxima. }
	\label{fig:qq4}
\end{figure}

\begin{figure}[h!]
	\centering
	\includegraphics[width=.9\linewidth]{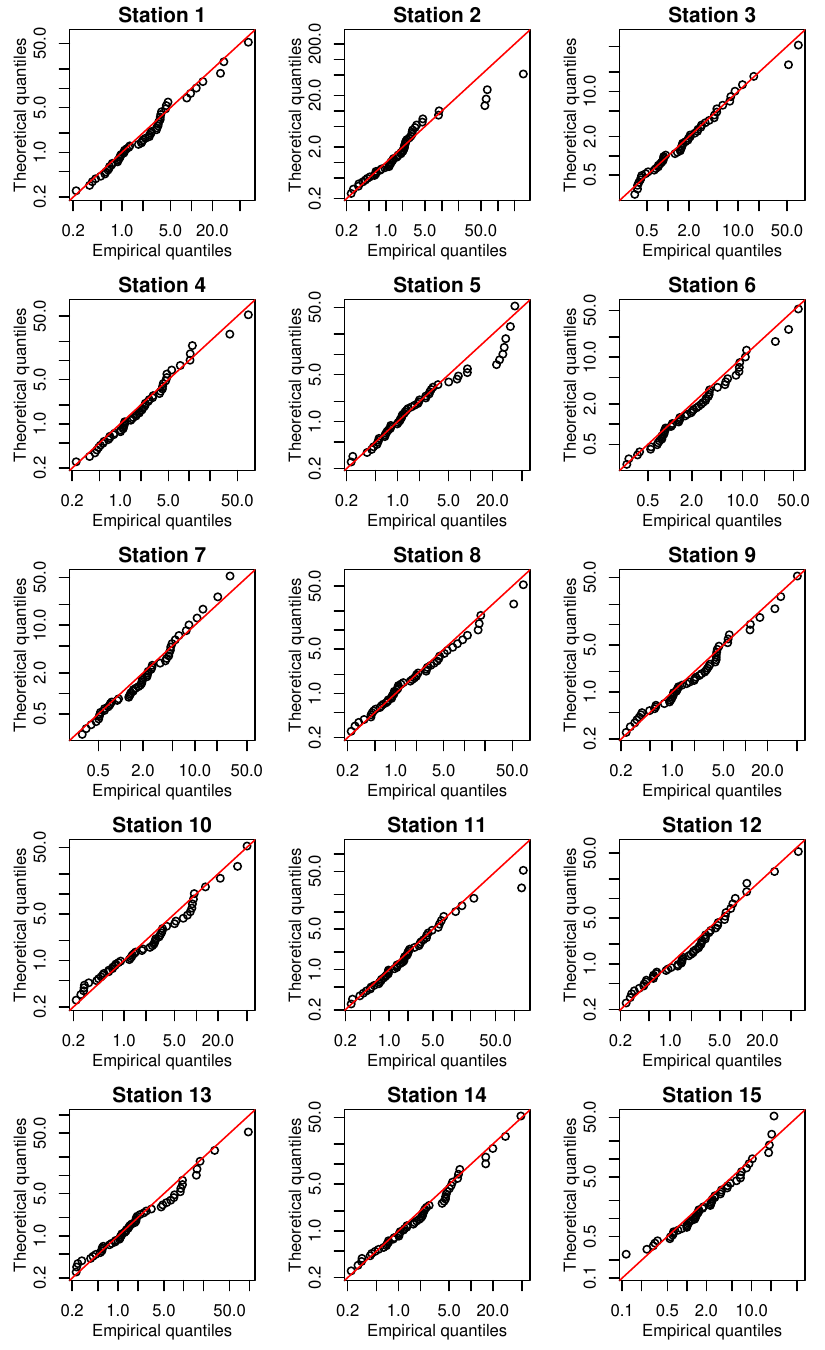}	
	\caption{QQ-plots of station-wise monthly maxima. }
	\label{fig:qq5}
\end{figure}

\begin{figure}[h!]
	\centering
	\includegraphics[width=.9\linewidth]{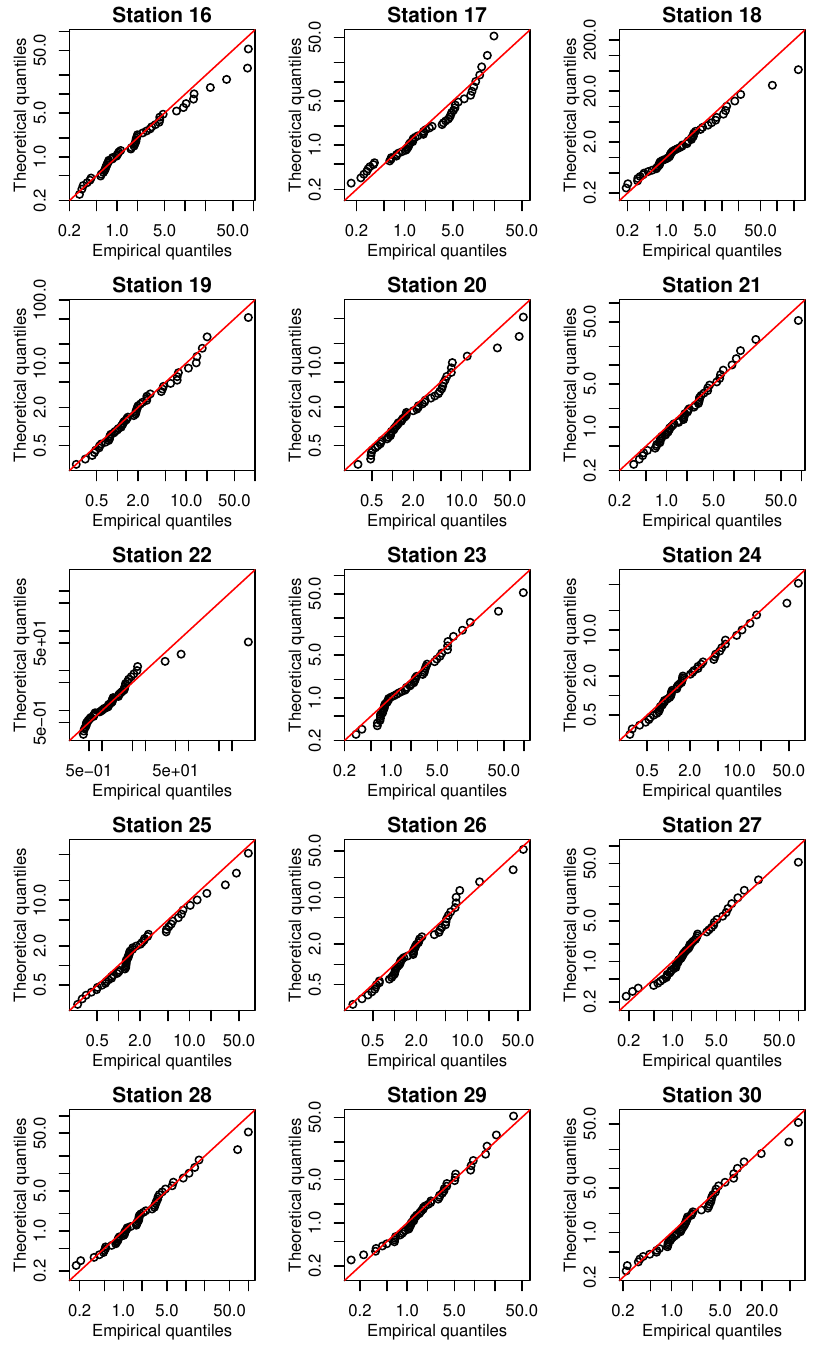}	
	\caption{QQ-plots of station-wise monthly maxima. }
	\label{fig:qq6}
\end{figure}

\begin{figure}[h!]
	\centering
	\includegraphics[width=.9\linewidth]{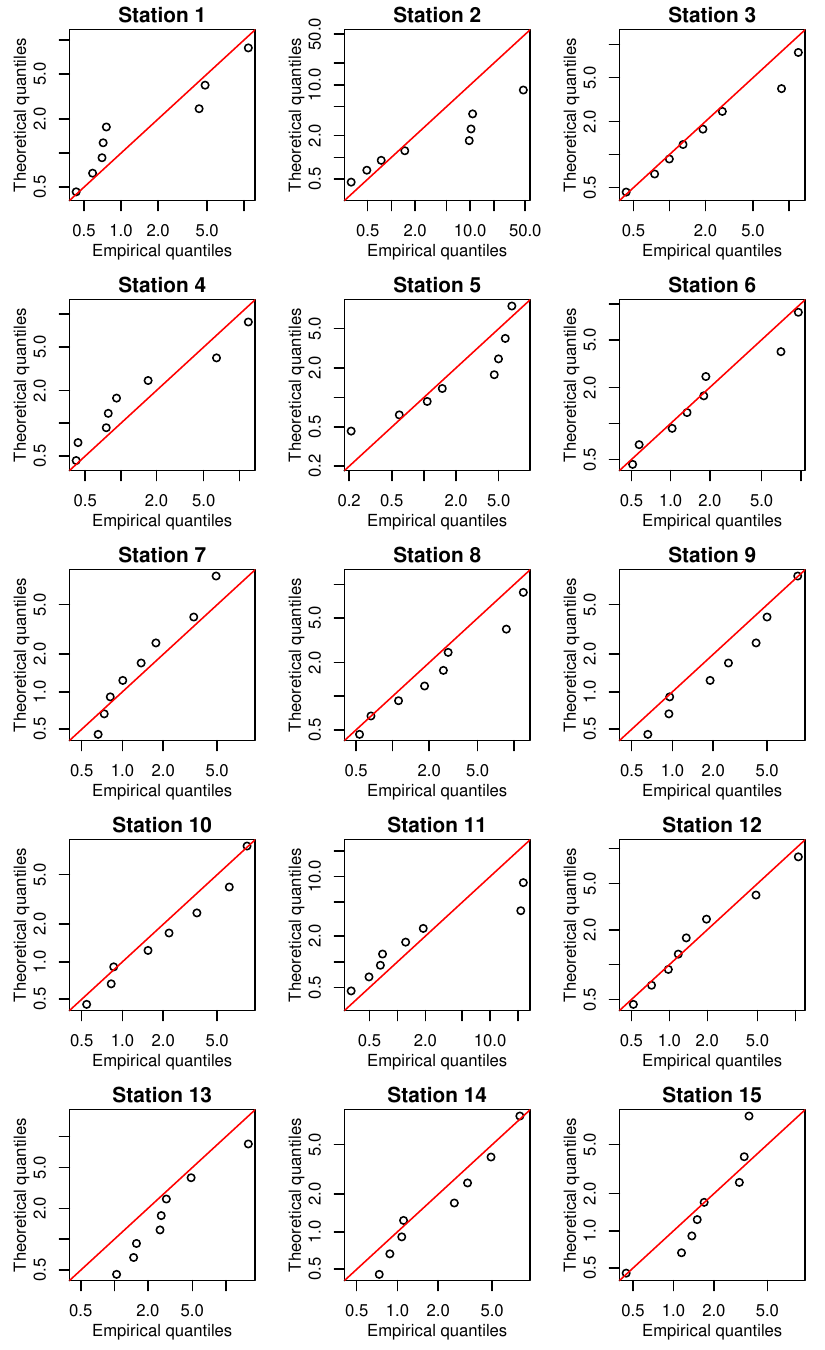}	
	\caption{QQ-plots of station-wise yearly maxima. }
	\label{fig:qq7}
\end{figure}

\begin{figure}[h!]
	\centering
	\includegraphics[width=.9\linewidth]{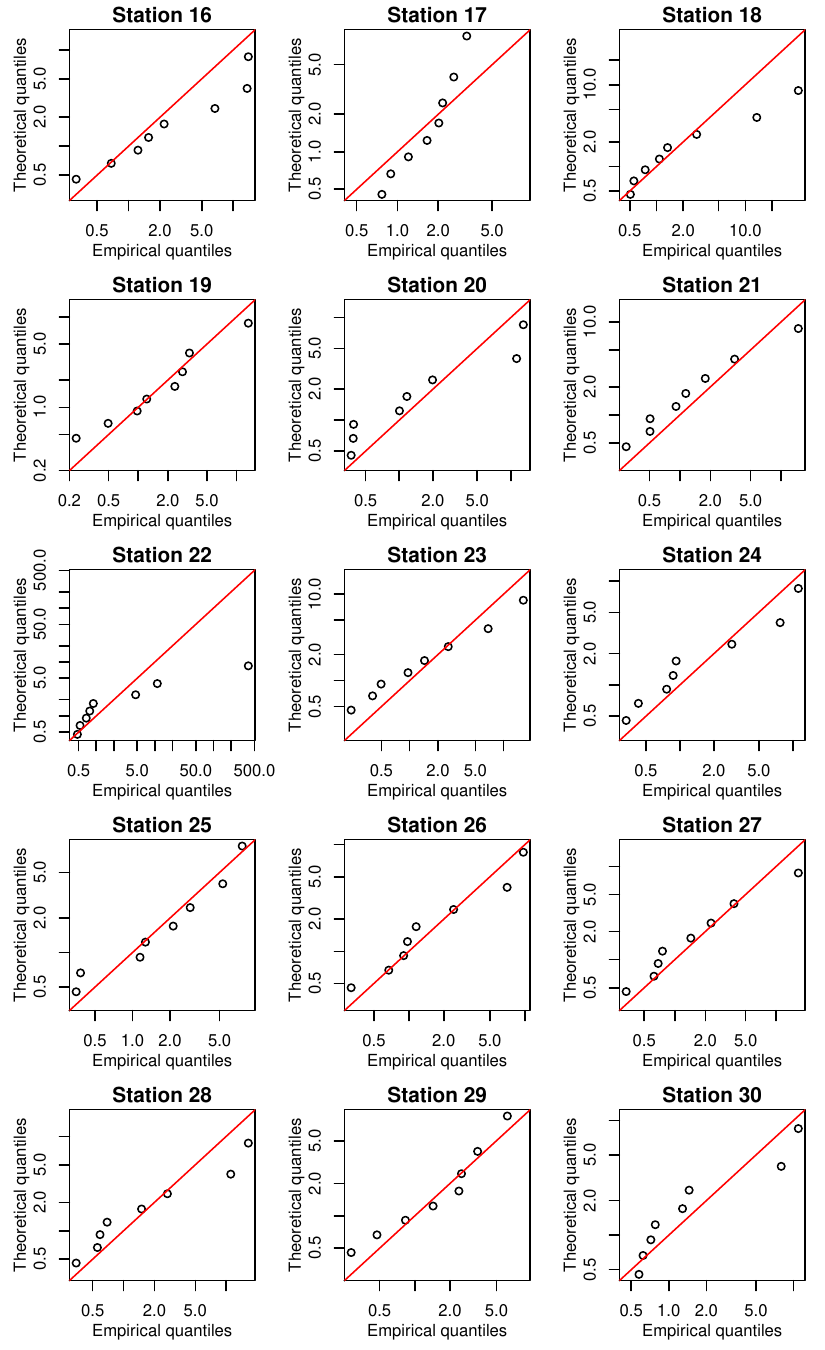}	
	\caption{QQ-plots of station-wise yearly maxima. }
	\label{fig:qq8}
\end{figure}

\begin{figure}[h!]
	\centering
	\includegraphics[width=.9\linewidth]{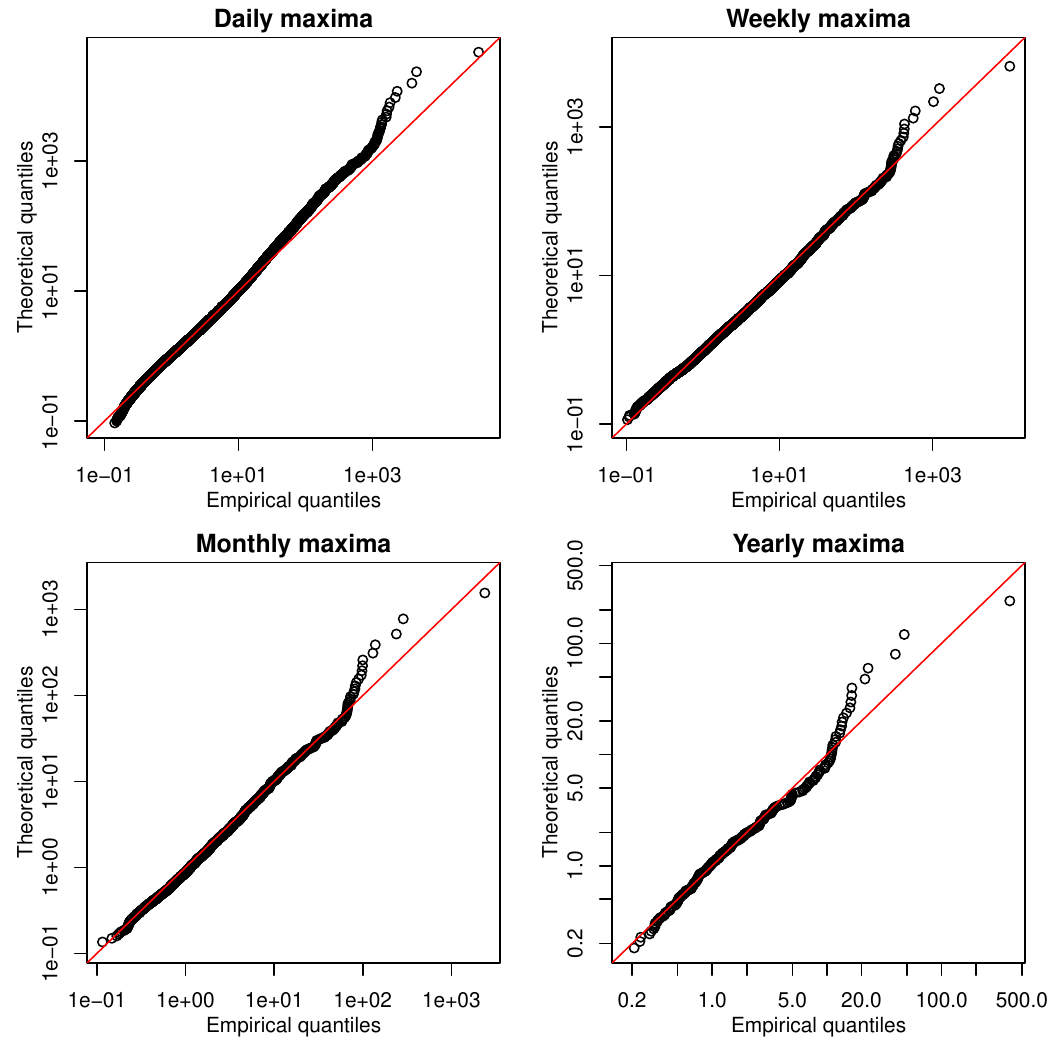}	
	\caption{QQ-plots of maxima pooled over all stations for daily (top left), weekly (top right), monthly (bottom left) and yearly (bottom right) maxima.}
	\label{fig:qqall}
\end{figure}


\baselineskip 20pt

\FloatBarrier

\bibliographystyle{CUP}
\bibliography{ref}

\baselineskip 10pt

\end{document}